\renewcommand{\baselinestretch}{1.2}
\newtheorem{theorem}{Theorem}[section]
\newtheorem{claim}[theorem]{Claim}
\newtheorem{assert}{Assertion}
\newtheorem{lemma}[theorem]{Lemma}
\newtheorem{corollary}[theorem]{Corollary}
\newtheorem{remark}[theorem]{Remark}
\newtheorem{observation}[theorem]{Observation}
\newcommand\EE{\mathbb{E}}
\newcommand{\bigO}{\mathcal O}
\newcommand{\ADV}{\mbox{\tt Adv}}
\newcommand{\uniform}{\mbox{\rm uniform}}
\newcommand{\kk}{\mbox{\rm \#k}}
\newcommand{\Natural}[0]{\mathbf{N}}
\newcommand\PP{\mathbb{P}}
\def\cO{{\cal O}}
\def\cT{{\cal T}}
\def\cA{{\cal A}}
\def\cP{{\cal P}}
\newenvironment{smallitemize} {
  \begin{list}{$\bullet$} {\setlength{\parsep}{0pt}
\setlength{\itemsep}{0pt}} } { \end{list} }
\begin{document}
\begin{titlepage}

\title{\bf The ANTS Problem\footnote{Preliminary results of this paper appeared in the proceedings of the 31st Annual ACM SIGACT-SIGOPS Symposium on
Principles of Distributed Computing (PODC), 2012, and the 26th international symposium on Distributed Computing (DISC)
2012, as part of \cite{FK12,FKLS}. 
 O.F., incumbent of the Shlomo and Michla Tomarin Career Development Chair,  was partially supported by the Clore Foundation and the Israel Science Foundation (grant 1694/10). A.K. was partially supported by the ANR projects DISPLEXITY and PROSE, and by the INRIA project GANG.
This project has received funding from the European Research Council (ERC) under the European Union's Horizon 2020 research and innovation programme  (grant agreement No 648032).}}
\author{Ofer Feinerman\\Weizmann Institute of Science\\ {\small\tt feinermanofer@gmail.com} \and
Amos Korman\\ CNRS and Univ. Paris Diderot\\ {\small\tt amos.korman@liafa.univ-paris-diderot.fr}}

\date{}

\maketitle

\begin{abstract}
We introduce the \emph{Ants Nearby Treasure Search (ANTS)} problem, which models natural cooperative foraging behavior such as that performed by ants around their nest. In this problem,  $k$~probabilistic agents, initially placed at a central location, collectively search for a treasure on the two-dimensional grid. The treasure is placed at a target location by an adversary and the agents' goal is to find it as fast as possible as a function of both $k$ and $D$, where $D$ is the (unknown) distance between the central location and the target. We concentrate on the case in which agents cannot communicate while searching. It is straightforward to see that the time until at least one agent finds the target is at least $\Omega(D+D^2/k)$, even for very sophisticated agents, with unrestricted memory. Our algorithmic analysis aims at establishing connections between the time complexity and the initial knowledge held by agents  ({\em e.g.}, regarding their total number~$k$), as they commence the search. We provide a range of both upper and lower bounds for the initial knowledge required for obtaining fast running time. 
For example, we prove that $\log \log k+\Theta(1)$ bits of initial information are both necessary and sufficient to obtain 
asymptotically optimal running time, {\em i.e.,} $\bigO(D+D^2/k)$. We also we prove that  for every $0<\epsilon< 1$, running in time $\bigO(\log^{1-\epsilon} k\cdot (D+D^2/k))$ requires that agents have the capacity for storing $\Omega(\log^\epsilon k)$ different states as they leave the nest to start the search.  To the best of our knowledge, the lower bounds presented in this paper provide the first non-trivial lower bounds on the memory complexity of probabilistic agents in the context of search problems.

We view this paper as a ``proof of concept'' for a new type of interdisciplinary methodology. To fully demonstrate this methodology, the theoretical tradeoff presented here (or a similar one) should be combined with measurements of the time performance of searching ants.

\bigskip

\end{abstract}

\date{}
\end{titlepage}

\section{Introduction}\label{sec:introduction}

The universality of search behavior is reflected in multitudes of studies in different fields including computer science, robotics,  and biology. We use tools from distributed computing to study a biologically inspired scenario in which multiple agents, initially located at one central location, cooperatively search for a treasure in the plane. The goal of the search is to locate nearby treasures as fast as possible and at a rate that scales well with the number of participating agents.

Many animal species search for food around a central location that serves as the search's initial point, final destination or both~\cite{OrPe79}. This central location could be a food storage area, a nest where offspring are reared or simply a sheltered or familiar environment. Central place foraging holds a strong preference to locating nearby food sources before those that are further away. Possible reasons for that are, for example: (1) decreasing predation risk~\cite{Kre80}, (2) increasing the rate of food collection once a large quantity of food is found~\cite{OrPe79}, (3) holding a territory without the need to reclaim it~\cite{GKDL94,Kre80}, and (4) facilitating the navigating back after collecting the food using familiar landmarks~\cite{CDGW92}.

Searching in groups can increase foraging efficiency~\cite[p.~732]{ant-book}. In some
extreme cases, food is so scarce that group searching is believed to be
required for survival~\cite{CCG99,JBS98}. Proximity of the food source to the
central location is again important in this case. For example, in the case of
recruitment, a nearby food source would be beneficial not only to the
individual that located the source but also increase the subsequent
retrieval rate for many other collaborators~\cite{Tra77}. Foraging in groups
can also facilitate the defense of larger territories~\cite{Sch71}. 

Eusocial insects ({\em e.g.,} bees and ants) often live in a
single nest or hive, which naturally makes their foraging patterns central.
They further engage in highly cooperative foraging, which can
be expected as these insects reduce competition between individuals to a
minimum and share any food that is found. In many cases, this collaborative effort is done with hardly any communication between searchers. For example,  mid-search communication of  desert ants {\em Cataglyphys} and honeybees {\em Apis mellifera}  is highly limited due to their dispersion and lack of chemical trail markings \cite{Razin}. Conversely, it is important to note that these insects do have an opportunity to interact amongst themselves before they leave their nest as well as some capacity to assess their own number \cite{Pratt05}. 

Both desert ants and bees possess many individual navigation skills. These include the capacity to maintain a compass-directed
vector flight~\cite{CSO+00,HM85}, measure distance using an internal
odometer~\cite{SW04,SZAT00}, travel to distances taken from a random power law
distribution~\cite{RSR}, and engage in spiral or quasi-spiral movement
patterns~\cite{R08,RSMG07,WS81}.  Finally, the search trajectories of desert ants have been shown to  include two distinguishable sections:  a long straight path in a given direction emanating from the nest and a second more tortuous path within a small confined area~\cite{HM85,WMZ04}.

\paragraph{The ANTS Problem.}
In this paper, we theoretically address general questions of collective central place searching.
More precisely, we introduce the \emph{Ants Nearby Treasure Search (ANTS)} problem, a generalization of the  \emph{cow-path} problem \cite{DFG06,KRT96,KSW86}, in which $k$ identical (probabilistic) agents, initially placed at some central location,
collectively search for a treasure in the two-dimensional grid (modeling the plane). The treasure is placed by an adversary at some target location at distance $D$ from the central location, where $D$ is unknown to the agents. Traversing an edge of the grid requires one unit of time. The goal of the agents is to find the treasure as fast as possible, where the time complexity is evaluated as a function of both $k$ and $D$.

In this distributed setting, we are concerned with the
\emph{speed-up} measure (see also,~\cite{AB96, AB97,FIPS11,FGKP06}), which aims to capture  the impact of using $k$ searchers in comparison to using a single one. Note that the objectives of quickly finding nearby treasures and having significant speed-up may be at conflict. That is, in order to ensure that nearby treasures are quickly found, a large enough fraction of the search force must be deployed near the central location. In turn, this crowding can potentially lead to overlapping searches that decrease individual efficiency.

 It is a rather straightforward observation that the  minimal time required for finding the treasure is $\Omega(D+ D^2/k)$.
Indeed, one must spend $D$ time merely to walk up to the treasure, and, intuitively, even if the agents divide the domain perfectly among themselves, it will still require them $\Omega(D^2/k)$ steps to explore all points at distance~$D$. Our focus is on the question of how  agents can approach this bound if their communication  is limited or even completely absent.  Specifically, we assume that no communication can be employed between agents during the execution of the algorithm, but, on the other hand, communication is almost unlimited when performed in a preliminary stage, before the agents start their search. The former assumption is inspired by  empirical evidence that desert ants (such as {\em Cataglyphys Niger} or {\em  Cataglyphis savignyi}) rarely communicate outside their nest \cite{HM85}, and the latter assumption is very liberal, aiming to cover many possible scenarios.

 To simulate the initial step of information sharing  within the nest  we use the abstract framework of {\em advice}.
 That is, we model the preliminary process for gaining knowledge  by means of an  {\em oracle} that assigns advice  to agents. To measure the amount of information accessible to agents, we analyze the {\em advice size}, namely, the maximum number of bits used in an advice\footnote{Note that a lower bound of $x\geq 1$ bits on the advice size implies a lower bound of $2^{x-1}$ on the number of states used by the agents already when commencing the search.}. Since we are mainly interested in lower bounds on the advice size required to achieve a given time performance, we apply a liberal
  approach and assume a highly powerful oracle. More specifically, even though it is supposed to model a distributed probabilistic process, we assume that the oracle is a centralized probabilistic algorithm (almost unlimited in its computational power) that can assign different agents with different advices.  Note that, in particular, by considering identifiers as part of the advice, our model allows to relax the assumption that all agents are identical and to allow  agents to be of  several types. Indeed, in the context of ants, it has been established that on their first foraging bouts  ants execute different  protocols than those executed by more experienced ants~\cite{WMZ04}.

\subsection{Our Results}

We use tools from distributed computing to study a biologically inspired scenario which models  cooperative foraging behavior such as that performed by ants around their nest. More specifically, we introduce the ANTS problem, a natural problem inspired by   desert ants foraging around their nest, and study it in a simple, yet relatively realistic,  model.

\paragraph{Technical Contribution.} We establish connections between the advice size and the running time\footnote{To measure the time complexity, we adopt the terminology of competitive analysis, and say that an algorithm is $c$-competitive, if its time complexity is at most $c$ times the straightforward lower bound, that is, at most $c(D+ D^2/k)$. In particular, an $\bigO(1)$-competitive algorithm is an algorithm that runs in $\bigO(D+ D^2/k)$ time.}. In particular, to the best of our knowledge, our lower bounds on the advice size consist of the first non-trivial lower bounds on the memory of probabilistic searchers.

On the one extreme, we have the case of  \emph{uniform} search algorithms, in which
the agents are not assumed to have any a-priori information, that is, the advice size is zero.
We completely characterize the speed-up penalty that must be paid when using uniform algorithms.  Specifically,
we show that for a function $\Phi$ such that $\sum_{j=1}^{\infty} 1/\Phi(2^j)$ converges, there
exists a uniform search algorithm that is $\Phi(k)$-competitive.
On the other hand,  we show that if $\sum_{j=1}^{\infty} 1/\Phi(2^j)$ diverges, then
 there is no uniform search algorithm that is $\Phi(k)$-competitive.
In particular, we obtain the following theorem, implying that the penalty for using uniform algorithms is  slightly more than logarithmic in the number of agents.

\bigskip

\begin{theorem}\label{thm:upper-uniform}
For every constant $\varepsilon> 0$,  there exists a uniform search algorithm that is $\bigO(\log^{1+\varepsilon} k)$-competitive, but there is no uniform search algorithm that is $\bigO(\log k)$-competitive.
\end{theorem}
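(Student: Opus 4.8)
The plan is to obtain Theorem~\ref{thm:upper-uniform} as a direct corollary of the general convergence/divergence dichotomy stated above, by instantiating the auxiliary function $\Phi$ appropriately and reducing each half to a standard series test. So the entire content of the statement reduces to evaluating $\sum_{j\ge 1} 1/\Phi(2^j)$ for two specific choices of $\Phi$.

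For the positive half, fix $\varepsilon>0$ and set $\Phi(k)=\log^{1+\varepsilon} k$, which is positive and eventually nondecreasing, hence admissible. Then $\Phi(2^j)=\Theta(j^{1+\varepsilon})$ (the base of the logarithm only affects the hidden constant), so $\sum_{j\ge 1} 1/\Phi(2^j)=\Theta\!\left(\sum_{j\ge1} j^{-(1+\varepsilon)}\right)$, a $p$-series with $p=1+\varepsilon>1$, which converges. The convergence direction of the dichotomy then yields a uniform algorithm that is $\Phi(k)=\bigO(\log^{1+\varepsilon}k)$-competitive, proving the first assertion.

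For the negative half, suppose toward a contradiction that some uniform algorithm were $\bigO(\log k)$-competitive; by definition this means it is $\Phi$-competitive for $\Phi(k)=c\log k$ with some constant $c>0$ (finitely many small values of $k$ do not affect the tail of the series). Now $\Phi(2^j)=\Theta(cj)$, so $\sum_{j\ge1} 1/\Phi(2^j)=\Theta\!\left(\tfrac1c\sum_{j\ge1} 1/j\right)$ is a harmonic series and diverges. The divergence direction of the dichotomy then asserts that no uniform algorithm is $\Phi$-competitive, contradicting the assumption. Hence no uniform algorithm is $\bigO(\log k)$-competitive, proving the second assertion.

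I expect essentially no difficulty in the reduction itself; the real work (which I would treat as the two assumed general results) lies in the dichotomy. For the upper bound I would have each agent obliviously split its search effort across the geometric scales $2^j$, devoting to scale $j$ a rate proportional to $1/\Phi(2^j)$; convergence of $\sum_j 1/\Phi(2^j)$ is exactly what makes this an implementable (finite-budget) schedule, after which a Chernoff bound per scale together with a union bound over all scales shows that the ball $B(2^j)$ is collectively covered within the competitive window $\Phi(k)\cdot(2^j + 4^j/k)$ with high probability. The hard part will be the matching lower bound. The crux is that a uniform agent is ignorant of $k$ and must therefore fix, \emph{obliviously}, how large a fraction of each annulus it will have covered by the competitive deadline: a coupon-collector/union argument shows that covering annulus $j$ by $k$ independent agents requires each to cover an $\Omega(j/k)$ fraction, whereas meeting the deadline caps the affordable fraction in a way governed by $1/\Phi(2^j)$. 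Balancing these opposing constraints simultaneously across \emph{all} scales with a single fixed randomized strategy forces $\sum_j 1/\Phi(2^j)$ to converge, so that under divergence an adversary can always place the treasure at an under-served scale. Turning this trade-off into a rigorous bound that lands exactly on the convergence threshold separating $\log k$ from $\log^{1+\varepsilon}k$ is the main obstacle.
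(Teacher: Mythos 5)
Your proposal is correct and follows essentially the same route as the paper: the theorem is obtained there exactly as an instantiation of the two general dichotomy results (Theorem~\ref{thm:comp} with $\Phi(x)=\lceil\log^{1+\varepsilon}x\rceil$ for convergence, and Lemma~\ref{cor:unifrom} with $\Phi(k)=\bigO(\log k)$ for divergence), and your sketches of those two results --- effort proportional to $1/\Phi(2^j)$ per geometric scale for the upper bound, and a per-annulus coverage accounting that exploits the fact that a uniform agent's behavior is the same single fixed strategy for every possible $k$ for the lower bound --- match the paper's actual arguments in substance.
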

\noindent In fact, $\bigO(\log k)$-competitive algorithms not only require some advice, but actually require $\log\log\log k+\Theta(1)$ bits of advice. That is expressed in the following theorem.\bigskip
\begin{theorem}\label{thm:upper-log}
There is an $\bigO(\log k)$-competitive search algorithm using advice size of $\log\log\log k+\bigO(1)$ bits. On the other hand, any  $\bigO(\log k)$-competitive  algorithm has advice size of $\log\log\log k+\Omega(1)$ bits.
\end{theorem}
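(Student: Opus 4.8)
\textbf{Plan for Theorem~\ref{thm:upper-log}.}

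For the upper bound I would refine the uniform construction of Theorem~\ref{thm:upper-uniform}. Recall the basic ``guessing'' primitive behind that construction: an agent that behaves as if the team size were $\hat k=2^g$ sweeps each annulus at distance $D$ to a $2^{-g}$ fraction and reaches distance $D$ after $\bigO(D^2/2^g)$ steps. The plan is to let the advice encode $\log k$ only up to a multiplicative factor of two --- equivalently, to hand every agent the single integer $a=\lceil \log\log k\rceil$, which costs $\log\log\log k+\bigO(1)$ bits --- and then have each agent draw its guess $g$ uniformly from $\{1,\dots,T\}$ with $T=2^{a}\in[\log k,2\log k]$ and run the guess-$2^g$ strategy. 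For a treasure at distance $D$ the ``correct'' guess is $g\approx \log k$, and since $T=\Theta(\log k)$, a $\Theta(1/\log k)$ fraction of the agents pick a good $g$; feeding this density into the coverage analysis of Theorem~\ref{thm:upper-uniform} yields running time $\bigO(\log k)\,(D+D^2/k)$. The only nontrivial point is that spreading the guess \emph{uniformly} over $\{1,\dots,T\}$ lets us get away with knowing $\log k$ only to within a factor of two, which is exactly what makes $\log\log\log k$ bits suffice instead of the $\log\log k$ bits one would need to transmit $\log k$ itself.

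For the matching lower bound, I would first observe that an advice of $b$ bits selects among at most $2^b$ fixed (advice-indexed, hence $k$-independent) strategies $\mathcal A_1,\dots,\mathcal A_{2^b}$; the oracle's only freedom is to choose, for each $k$, how many agents run each $\mathcal A_v$. I would abstract each strategy by the distribution $q^{(v)}$ it induces on the guess level $g$, i.e.\ by how it splits its search effort across distance scales, and reduce $\bigO(\log k)$-competitiveness to a covering statement about these distributions. Concretely, for a target at scale $i$ the competitive time budget forces the mixture to deposit enough searching density at the guess level $n\approx i$ matching that scale; quantitatively this should mean that for every scale $i$ in the relevant range (an interval of $\Theta(\log k)$ integers) some strategy $v$ must satisfy $q^{(v)}_{n}\ge \Omega(1/n)$ at the corresponding level $n$.

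The heart of the argument is then a harmonic counting bound. For a single probability distribution $q^{(v)}$, the set $G_v=\{\,n : q^{(v)}_n\ge c/n\,\}$ of levels it can ``own'' has bounded logarithmic weight, since
\[
\sum_{n\in G_v}\frac{c}{n}\;\le\;\sum_{n\in G_v} q^{(v)}_n\;\le\;1,
\qquad\text{hence}\qquad
\sum_{n\in G_v}\frac1n=\bigO(1).
\]
Because the demanded levels form an interval of $\Theta(\log k)$ consecutive integers and $\sum_{n\le M}1/n=\Theta(\log M)$, covering them by the $2^b$ sets $G_v$ forces $2^b=\Omega(\log\log k)$, that is $b\ge \log\log\log k-\Omega(1)$. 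This is precisely the quantitative, multi-strategy incarnation of the divergence phenomenon already responsible for the impossibility half of Theorem~\ref{thm:upper-uniform}: a single strategy carries only $\bigO(1)$ worth of harmonic weight, and $\Theta(\log\log k)$ strategies are needed to span scales up to $\log k$.

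I expect the main obstacle to be the reduction in the second step: turning the operational statement ``the algorithm finds an adversarially placed treasure at scale $i$ within $\bigO(\log k)\,(D+D^2/k)$ steps'' into the clean per-level mass condition $q^{(v)}_n\ge\Omega(1/n)$. This requires (a) controlling the regime change around $D\approx k$, where the dominant term switches from $D^2/k$ to $D$ and the level demanded by a given scale shifts accordingly, so that the demanded levels really do sweep out an interval of length $\Theta(\log k)$; (b) a second-moment/independence argument upgrading expected coverage multiplicity into the event that some agent actually hits the worst-case target; and (c) accounting for the fact that one agent's trajectory contributes to many scales over time, so that the per-strategy budget is genuinely the harmonic weight above and nothing larger. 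I anticipate (a) and (c) to be the delicate parts, and would handle them by borrowing the budget bookkeeping from the impossibility proof of Theorem~\ref{thm:upper-uniform}, applied strategy by strategy and summed over the at most $2^b$ strategies.
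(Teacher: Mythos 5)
Your lower-bound plan is, modulo vocabulary, the paper's own argument. Your sets $G_v$ with $\sum_{n\in G_v}1/n=\bigO(1)$ are exactly the paper's per-advice budget of $4T$ visited nodes spread over rings $R_n$, a constant fraction of which (namely $\Omega(Tk_n/\Phi(k_n))$ nodes, hence $\Omega(T/n)$ nodes attributable to a single advice after two pigeonhole steps, one over the $k_n$ agents and one over the oracle's randomness) must be covered at every scale $n$; covering $\Theta(\log k)$ scales of total harmonic weight $\Theta(\log\log k)$ with $2^{b}$ budgets of weight $\bigO(1)$ each gives $2^{b}=\Omega(\log\log k)$. This is precisely Theorem~\ref{main-lower} specialized to $\Phi(2^i)=\bigO(i)$, as in Theorem~\ref{cor:logk}. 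Two of the obstacles you anticipate dissolve on contact: no second-moment argument is needed, since the paper passes from the expected-hitting-time guarantee for each node $u$ of $R_n$ under $k_n$ agents to $\PP(u \mbox{ visited by time } 4T)\geq 1/4$ by Markov and then uses only linearity of expectation over the ring (Claim~\ref{claim:uniform}); and the regime change at $D\approx k$ is sidestepped by restricting to scales with $k_n\le\sqrt{T}\le d_n$, which still leaves $\Theta(\log T)$ of them.

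The upper bound is where you genuinely depart from the paper, and where your plan as stated has a quantitative gap. You split the population: each agent draws $g$ uniformly from $\{1,\dots,T\}$ with $T=\Theta(\log k)$ and runs the strategy calibrated for $2^{g}$ agents. Only $m=\Theta(k/\log k)$ agents draw the correct $g$, so one execution of the correct-scale phase (random target plus a spiral of length $\Theta(D^2/2^{g})=\Theta(D^2/k)$) succeeds with probability $1-(1-\Theta(1/k))^{m}=\Theta(1/\log k)$, and you need $\Theta(\log k)$ repetitions of that phase. But in the doubling schedules of Theorems~\ref{thm:know-k} and~\ref{thm:comp}, into which you propose to feed this density, obtaining $\ell$ repetitions of the scale-$s$ phase requires advancing $\Theta(\sqrt{\ell})$ further stages, which costs a multiplicative $2^{\Theta(\sqrt{\log k})}=\omega(\log^{c} k)$ in time rather than $\Theta(\log k)$. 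The fix is simple but must be made explicit: an agent with guess $g$ should calibrate to the subpopulation that actually shares its guess, i.e.\ run $\cA^*$ with parameter $2^{g}/T\approx k/\log k$ (equivalently, repeat each phase $T$ times before doubling); Corollary~\ref{corr:opt} then gives expected time $\bigO(D+D^2T/k)=\bigO(\log k\cdot(D+D^2/k))$. The paper avoids the issue entirely by time-multiplexing rather than population-splitting: every agent deterministically cycles through all $\bigO(\log k)$ candidate guesses $\{2^{i}: 2^{\alpha}\le i<2^{\alpha+1}\}$ derived from the advice $\alpha=\lfloor\log\log k\rfloor$, running $2^{j}$ steps of $\cA^*$ per guess in stage $j$ (Algorithm~\ref{alg:k-alg}), so that all $k$ agents execute the correct guess a $1/\bigO(\log k)$ fraction of the time and the overhead is exactly the number of guesses.
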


\noindent On the other extreme, we show that advice of size $\log\log k+\Theta(1)$ bits is both necessary and sufficient to obtain an asymptotically optimal running time. \bigskip

\begin{theorem}\label{thm:tight}
There exists an $\bigO(1)$-competitive search algorithm using advice of size $\log\log k+\bigO(1)$. This bound is tight as any $\bigO(1)$-competitive search algorithm must have advice size of $\log\log k-\Omega(1)$ bits.
\end{theorem}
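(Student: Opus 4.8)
The plan is to prove the two directions separately, with the bulk of the work in the lower bound. For the positive direction I would use the advice to hand every agent a constant-factor estimate of $k$ and then run a scale-by-scale randomized search tuned to that estimate. Concretely, the oracle computes $m=\lfloor \log k\rfloor$ and gives the same string $m$ to all agents; encoding an integer in $\{0,\dots,\lfloor\log k\rfloor\}$ costs $\log\log k+\bigO(1)$ bits and yields $\hat k:=2^m=\Theta(k)$. Each agent then runs phases $j=1,2,\dots$; in phase $j$ it walks to a uniformly random ``cell'' of side $\Theta(2^{j}/\sqrt{\hat k})$ at distance $\approx 2^{j}$ (one of $\Theta(\hat k)$ cells tiling the annulus $[2^{j},2^{j+1})$), sweeps it exhaustively, and returns. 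Phase $j$ costs $\bigO(2^{j}+2^{2j}/\hat k)$ steps, so the cumulative time through phase $j$ is a geometric sum equal to $\bigO(2^{j}+2^{2j}/\hat k)$.

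I then argue that a treasure at distance $D$ is found by the end of phase $\lceil\log D\rceil$, i.e.\ within $\bigO(D+D^2/\hat k)=\bigO(D+D^2/k)$ steps, with constant probability: in that phase each of the $k$ agents independently selects one of $\Theta(\hat k)$ cells uniformly, the treasure's cell is swept iff some agent selects it, and since $\hat k=\Theta(k)$ a balls-into-bins estimate gives selection probability $1-(1-\Theta(1/\hat k))^{k}=\Omega(1)$. This proves the first statement; the only new ingredient over the known fact that agents knowing $k$ up to a constant factor are $\bigO(1)$-competitive is that $\log\log k+\bigO(1)$ bits suffice to convey that estimate.

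For the matching lower bound I fix an $\bigO(1)$-competitive algorithm. Since agent behavior is a fixed function of (advice, randomness), an advice length of $b$ bits yields a fixed pool of at most $N=2^{b}$ randomized single-agent strategies $\mathcal A_1,\dots,\mathcal A_N$, \emph{shared across all values of $k$}. For a strategy $\mathcal A$, a scale $i$ and a time $\tau$, let $V_{\mathcal A}(i,\tau)$ be the expected number of distinct grid points of the annulus $[2^{i},2^{i+1})$ visited by $\mathcal A$ within $\tau$ steps. The first step is a first-moment necessary condition: placing the treasure uniformly in that annulus and taking a union bound over agents, $\bigO(1)$-competitiveness for $k$ agents at distance $2^{i}$ (with $2^{i}\ge k$, so the budget is $T_i(k)=\Theta(2^{2i}/k)$) forces $\sum_a k_a\,V_{\mathcal A_a}(i,T_i(k))=\Omega(2^{2i})$, where $k_a$ is the number of agents with advice $a$. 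Since $V_{\mathcal A}(i,\tau)\le\tau$, a constant fraction of the agents must run a strategy that is \emph{efficient at $(i,\tau)$}, meaning $V_{\mathcal A}(i,\tau)=\Omega(\tau)$ for $\tau=T_i(k)$ — it must spend a constant fraction of its entire budget discovering fresh points of that one annulus.

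The crux is then a counting argument at a fixed time $\tau$. The pairs $(i,k)$ whose budget equals $\tau$ are exactly $k=2^{2i}/\tau$ with $i\in[\tfrac12\log\tau,\log\tau]$, i.e.\ $\Theta(\log\tau)$ distinct scales whose associated $k$ sweep all of $[1,\tau]$. For each such scale the necessary condition makes the efficiency set $S_i(\tau)=\{a:\mathcal A_a\text{ efficient at }(i,\tau)\}$ nonempty, while disjointness of annuli together with $\sum_i V_{\mathcal A}(i,\tau)\le\tau$ shows each strategy is efficient at only $\bigO(1)$ scales at a fixed $\tau$, so each index $a$ lies in $\bigO(1)$ of the sets $S_i(\tau)$. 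Summing, $N\cdot\bigO(1)\ge\sum_i|S_i(\tau)|\ge\Theta(\log\tau)$, whence $2^{b}=N=\Omega(\log\tau)=\Omega(\log k)$ and $b\ge\log\log k-\bigO(1)$. The main obstacle, and the part demanding the most care, is precisely this core: making the first-moment condition robust against the adversary choosing the single worst target and handling the boundary regime $D\approx k$ and the travel cost $2^{i}\le T_i(k)$, and — most delicately — exploiting that the same fixed strategy pool must serve the whole range of $k$, so that one value of $\tau$ simultaneously witnesses $\Theta(\log\tau)$ mutually incompatible tuning demands. The positive direction is routine once the constant-factor estimate of $k$ is available.
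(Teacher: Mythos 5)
Your lower bound is essentially the paper's own argument in different clothing: the paper (Theorem~\ref{main-lower}, specialized in Theorem~\ref{cor:lower-advice}) also fixes a time horizon $T$, considers $\Theta(\log T)$ disjoint rings at geometrically spaced radii with a matched number of agents $k_i$ per ring, uses Markov plus linearity of expectation to force $\Omega(|R_i|)$ expected coverage of each ring, pigeonholes to extract one advice string $a_i$ per ring whose strategy covers $\Omega(T)$ nodes of that ring within time $O(T)$, and then double-counts against the fact that any single strategy visits at most $O(T)$ nodes total across the disjoint rings. Your ``each strategy is efficient at only $O(1)$ scales, so $2^{b}\cdot O(1)\ge\Theta(\log\tau)$'' is exactly the paper's ``imaginary scenario'' in which $|A|\le 2^{\Psi+1}$ agents, one per distinguished advice, must jointly cover $\Omega(T\log T)$ nodes while each covers at most $4T$. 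The points you flag as delicate (worst-case versus average target, the regime $D\approx k$, monotonicity of the advice-size function across $k$) are handled the same way in the paper, so this direction is sound.

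The one genuine gap is in your upper bound as stated. The paper's notion of $O(1)$-competitiveness bounds the \emph{expected} running time by $O(D+D^2/k)$, but your algorithm sweeps each annulus $[2^j,2^{j+1})$ exactly once; if the treasure's cell is missed at scale $j=\lceil\log D\rceil$ (which happens with constant probability), no later phase ever returns to that annulus, so the expected time is unbounded. Constant success probability within the budget is not enough. The fix is exactly the paper's nested loop in Algorithm~$\cA_{\kk}$: in stage $j$ re-run all phases $i\le j$, so that after overshooting the treasure's scale by $r$ stages each agent has performed $\Omega(r^2)$ independent attempts at the correct scale, the failure probability decays like $\gamma^{-r^2}$, and the geometric sum of stage costs still totals $O(D+D^2/k)$ in expectation. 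With that repair (and your correct observation that $\lfloor\log k\rfloor$ can be shipped in $\log\log k+O(1)$ bits, which is the paper's Theorem~\ref{cor:known}), the positive direction goes through.
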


\noindent We further exhibit  lower and upper bounds for the  advice size for a range of  intermediate competitivenesses. Specifically, the following lower bound on the advice size implies that advice of size $\Omega(\log\log k)$ is necessary
even for achieving much larger time-competitiveness than constant.
\bigskip

\begin{theorem}
Fix~$0<\epsilon\leq 1$. Any search algorithm that is $\bigO(\log^{1-\epsilon} k)$-competitive  must have advice size of $\epsilon\log\log k-\Omega(1)$ bits.
\end{theorem}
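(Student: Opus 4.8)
The plan is to reduce the statement to the clean trade-off $2^{a}\cdot\Phi=\Omega(\log k)$ between the number of advice values $2^{a}$ and the competitiveness $\Phi$, and then substitute $\Phi=\bigO(\log^{1-\epsilon}k)$ to read off $a\ge\epsilon\log\log k-\Omega(1)$.

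First I would dispose of the advice. An $a$-bit advice assignment is nothing but a choice, for each true population size $k$, of how to split the $k$ agents among the $m:=2^{a}$ fixed single-agent programs $\cA_1,\dots,\cA_m$ determined by the $m$ possible advice strings. Each program $\cA_\alpha$ induces a ``scale profile'' $\rho^{(\alpha)}$, a sub-probability vector over levels $g$ (think of $g$ as $\log$ of the number of helpers the program implicitly prepares for at large distances). Running $n_\alpha(k)$ agents on $\cA_\alpha$ produces the collective profile $\sum_\alpha n_\alpha(k)\rho^{(\alpha)}$; after normalising by $k$, the achievable profiles form exactly the polytope $P:=\mathrm{conv}(\rho^{(1)},\dots,\rho^{(m)})$ with at most $m$ vertices. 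The competitiveness requirement at population $k=2^{h}$ will be a \emph{linear} functional $S(\cdot,h)$ of the profile, so $\max_{\pi\in P}S(\pi,h)=\max_\alpha S(\rho^{(\alpha)},h)$: a mixture helps no more than its best ingredient. Hence being $\Phi$-competitive for every $k\le K$ is equivalent to the covering condition that for every $h\le\log K$ \emph{some} vertex $\alpha$ satisfies $S(\rho^{(\alpha)},h)\ge\Omega(1)$.

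Next comes the conserved quantity, which is the engine also behind the uniform lower bound in Theorem~\ref{thm:upper-uniform}. For a single program, timeliness at distance $D$ under population $2^{h}$ forces one to sweep a $2^{-(h-\log\Phi)}$ fraction of the circle of radius $D$ within the competitive budget; writing $\phi=\log\Phi$, this makes $S(\rho,h)=\sum_{g\ge h-\phi}\rho_g\,2^{\,h-g}$ the right score, and the first-moment/worst-point bound over the $\Theta(D)$ points at distance $D$ turns competitiveness into $S(\rho,h)\ge\Omega(1)$. The key computation is then a single interchange of summation: $\sum_{h}S(\rho,h)=\sum_g\rho_g\sum_{h\le g+\phi}2^{\,h-g}=\bigO(\Phi)\sum_g\rho_g=\bigO(\Phi)$, using only that $\rho$ is a sub-probability vector. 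Consequently each program can have $S(\rho^{(\alpha)},h)\ge\Omega(1)$ for at most $\bigO(\Phi)$ values of $h$; that is, each of the $m$ programs covers $\bigO(\Phi)$ scales. The counting then closes the argument: there are $\Theta(\log K)$ scales $h$ to cover and each of the $m=2^{a}$ programs covers $\bigO(\Phi)$ of them, so $2^{a}\cdot\bigO(\Phi)\ge\Omega(\log K)$; otherwise some scale $h^{\star}$ is covered by no vertex, and by the extreme-point reduction \emph{no} advice assignment can be competitive at $k=2^{h^{\star}}$. Identifying $K$ with $k$ and plugging $\Phi=\bigO(\log^{1-\epsilon}k)$ gives $2^{a}=\Omega(\log^{\epsilon}k)$, i.e.\ $a\ge\epsilon\log\log k-\Omega(1)$, as claimed.

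I expect the genuine difficulty to lie entirely in the single-agent step, namely making ``scale profile'' precise for an \emph{arbitrary} algorithm and proving both that it is a sub-probability vector and that competitiveness implies $S(\rho,h)\ge\Omega(1)$. An arbitrary trajectory does not decompose into discrete guesses, budget spent near one radius can in principle be reused near another, and the adversary's worst target must be handled through the $\Theta(D)$-fold averaging; defining $\rho_g$ as a telescoping ``marginal coverage fraction'' gained by enlarging the budget from $D^{2}2^{-g+1}$ to $D^{2}2^{-g}$, and bounding the number of distinct points visited by elapsed time, is the delicate part. The polytope reduction and the geometric-sum identity, by contrast, are routine once the score $S$ is in hand.
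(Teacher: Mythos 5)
Your high-level architecture is the right one and it does match the paper's: exponentially spaced population scales $k_i=2^i$, a per-scale coverage requirement forced by competitiveness, a conservation law for a single agent's total work, and a pigeonhole over the at most $2^{\Psi+1}$ advice strings yielding $2^{\Psi(k)}\cdot\Phi(k)=\Omega(\log k)$, from which the claim follows by substituting $\Phi(k)=\bigO(\log^{1-\epsilon}k)$. (The convex-polytope step is harmless but unnecessary; linearity of expectation plus two pigeonholes --- one over the $k_i$ agents, one over the advice strings --- does the same job.) The gap is exactly where you locate it, and it is not merely a matter of ``making the profile precise'': the construction you hint at, which fixes a single distance $D$ and records marginal coverage fractions of the circle of radius $D$ as the time budget varies, does not produce quantities that accumulate into a contradiction. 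At a fixed $D$, competitiveness at population $2^h$ forces the $2^h$ agents to collectively cover an expected constant fraction of the $\Theta(D)$ points at distance $D$, hence the best program is forced to cover only an expected $\Omega(2^{-h})$ fraction of that circle; summed over all scales $h\le \log K$ this is $\bigO(1)$, comfortably within any single program's budget. So with a fixed-$D$ accounting, every scale is ``cheap'' for every program, the premises ``$S(\rho^{(\alpha)},h)\ge\Omega(1)$ for some $\alpha$'' and ``$\sum_h S(\rho,h)=\bigO(\Phi)$'' cannot simultaneously be arranged, and the covering count never gets off the ground.

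The paper's resolution is to couple the spatial scale to the population scale. For scale $i$ it sets $d_i=\lfloor\sqrt{Tk_i/\Phi(k_i)}\rfloor$ and $R_i=B(d_i)\setminus B(d_{i-1})$, with the real treasure placed at distance $2T+1$ so that the trajectory distribution is unaffected. For every $u\in B(d_i)$ the competitive guarantee (applied to the hypothetical treasure at $u$) gives expected visit time at most $2d_i^2\Phi(k_i)/k_i\le 2T$, so by Markov $u$ is visited by time $4T$ with probability at least $1/4$ in the $k_i$-agent scenario; hence the $k_i$ agents cover $\Omega(|R_i|)=\Omega(Tk_i/\Phi(k_i))$ nodes of $R_i$ in expectation, and some advice string $a_i$ yields a single agent covering $\Omega(T/\Phi(k_i))$ expected nodes of $R_i$. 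The conservation law is now trivial and requires no decomposition of the trajectory into guesses: the rings are pairwise disjoint and a single agent visits at most $4T$ distinct nodes in $4T$ steps. Pigeonholing over the advice strings gives $\frac{1}{2^{\Psi}}\sum_i 1/\Phi(2^i)=\bigO(1)$ (Theorem~\ref{main-lower}), and since $\Phi$ is non-decreasing, $\sum_{i\le\log k}1/\Phi(2^i)\ge\log k/\Phi(k)$, which is precisely your target trade-off; each program ``covers'' $\bigO(\Phi)$ of the $\Theta(\log k)$ rings, exactly as in your counting endgame. What is missing from your proposal, then, is the one idea that makes the endgame available: forcing $\Theta(T/\Phi(k_i))$ \emph{nodes} (not a vanishing fraction of a fixed circle) of pairwise disjoint regions against a hard per-agent cap of $4T$ nodes.
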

\bigskip
\begin{theorem}
Fix~$0<\epsilon\leq 1$. Any $\bigO\left(\frac{\log k}{2^{\log^\epsilon\log k}}\right)$-competitive search algorithm requires  advice size of $\log^{\epsilon}\log k-\bigO(1)$. 
\end{theorem}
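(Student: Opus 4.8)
The plan is to reduce the advice lower bound to the lower bound for \emph{uniform} algorithms already established above (the divergence direction of the uniform dichotomy: no uniform algorithm is $g$-competitive once $\sum_{j\ge 1} 1/g(2^j)$ diverges). Suppose, towards a contradiction, that some algorithm $\cA$ is $\Phi$-competitive with $\Phi(k)=\log k / 2^{\log^\epsilon \log k}$ while using advice of size $b(k)<\log^\epsilon\log k - C$ for all large $k$ and a large constant $C$ to be fixed. For a fixed number of agents $k$, the advice induces at most $M=2^{b(k)}$ distinct \emph{behaviors} $\cA_1,\dots,\cA_M$ (each a uniform algorithm, once the advice string is fixed), and the oracle splits the $k$ agents into groups of sizes $k_1,\dots,k_M$. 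From $\cA$ I would build a single uniform algorithm $\cB$: each agent independently samples an advice string $s$ from a fixed distribution $w_s$ that is geometric in $|s|$ (so that $\sum_s w_s=1$ and $w_s\ge a_{|s|}2^{-|s|}$ with $\sum_\ell a_\ell<\infty$) and then runs $\cA_s$ at full speed. The crux is a stochastic-domination claim: if $\cB$ is run with $N=\Theta\!\big(k\cdot 2^{b(k)}/a_{b(k)}\big)$ agents, then with constant probability the realized number of $\cB$-agents executing each relevant behavior $s$ is at least $k_s$, so $\cB$'s collection dominates $\cA$'s collection behavior-by-behavior and hence finds any target within the same bound $T=\bigO\!\big(\Phi(k)(D+D^2/k)\big)$ that $\cA$ achieves.

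Since $N\ge k$ and $D^2/k\le (N/k)\,D^2/N$, the bound $T$ translates into a competitiveness of $\cB$ at $N$ agents of order $g(N):=2^{b(k)}\Phi(k)\cdot F(k)$, where $F(k)=1/a_{b(k)}=\mathrm{poly}(b(k))$ is the overhead coming from the geometric damping, and where $\log N=\log k + o(\log k)$ so that $N$ and $k$ agree on the $\log$ and $\log\log$ scales used below. Feeding $g$ into the uniform lower bound forces $\sum_j 1/g(2^j)<\infty$. I would then simply compute: writing $k=2^j$ (so $\log k=j$, $\log\log k=\log j$) and using the assumed $b(2^j)<(\log j)^\epsilon-C$ together with $\Phi(2^j)=j/2^{(\log j)^\epsilon}$, the exponential factors cancel and
\[
g(2^j)=\bigO\!\left(\frac{j\,(\log j)^{\epsilon}\,(\log\log j)^2}{2^{C}}\right),
\]
whence $\sum_j 1/g(2^j)=\Omega\!\big(2^{C}\sum_j \tfrac{1}{j(\log j)^\epsilon(\log\log j)^2}\big)$, which diverges for every $0<\epsilon<1$. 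This contradicts the uniform lower bound, so $b(k)\ge \log^\epsilon\log k-\bigO(1)$ as claimed. The boundary case $\epsilon=1$ need not be treated separately: there $\Phi\equiv 1$, the statement becomes ``any $\bigO(1)$-competitive algorithm needs $\log\log k-\Omega(1)$ bits,'' which is exactly the lower bound of Theorem~\ref{thm:tight}.

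The hard part is controlling the overhead $F(k)$ in the reduction, because the divergence above is delicate near $\epsilon=1$: after the cancellation of $2^{\pm(\log j)^\epsilon}$ one is left with a bare $1/j$, and $\sum 1/(j(\log j)^\epsilon)$ diverges for $\epsilon<1$ only \emph{barely}, so any factor growing faster than a small power of $\log j$ (equivalently, faster than $\mathrm{poly}(b(k))$) would destroy the argument. Thus the real work is to implement the behavior-combination with overhead as close to the information-theoretic $2^{b(k)}$ as possible — choosing the damping $a_\ell$ (e.g.\ $a_\ell\asymp 1/(\ell\log^2\ell)$) so that $F(k)$ stays below $\log^{1-\epsilon}\log k$ — and to make the stochastic-domination and expected-time bookkeeping rigorous (concentration of the group sizes $n_s$ around $Nw_s$, a union bound over the $\le M$ relevant behaviors, and passing from ``success within $T$ with constant probability'' to an expected-time guarantee). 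These are routine once the scheme is fixed; the one conceptual step is the reduction that turns the $2^{b(k)}$-fold advice into a $2^{b(k)}$-fold multiplicative loss in competitiveness, after which the theorem is an arithmetic corollary of the uniform dichotomy.
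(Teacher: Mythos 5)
Your reduction-to-uniform strategy is a genuinely different route from the paper's, and the parts you call routine (stochastic domination of the oracle's advice profile, constant-probability vs.\ expected-time bookkeeping) are indeed repairable; the series arithmetic at the end is also correct. But the approach has a structural gap: to feed the uniform dichotomy (Lemma~\ref{cor:unifrom}) you must assume the advice is small at essentially \emph{every} dyadic scale $2^i$, $i\le \tfrac12\log T$, simultaneously, because the divergence of $\sum_j 1/g(2^j)$ is an aggregate over all scales and survives only if $g$ is small on a set of scales of full logarithmic density. Consequently your contradiction only refutes ``$\Psi(k)<\log^\epsilon\log k-C$ for all large $k$,'' which yields an infinitely-often lower bound, not the pointwise bound $\Psi(k)\ge\log^\epsilon\log k-\bigO(1)$ for every $k$ that the theorem asserts. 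The paper avoids this by handling advice \emph{inside} the ring argument (Theorem~\ref{main-lower}): monotonicity of $\Psi$ bounds the total number of distinct advice strings across all scales $k_i\le\sqrt T$ by $2^{\Psi(\sqrt T)+1}$, an ``imaginary'' execution with one agent per advice string plus pigeonhole then gives $\tfrac{1}{2^{\Psi(T)}}\sum_{i\le\log T}1/\Phi(2^i)=\bigO(1)$ \emph{at every individual} $T$, and the theorem drops out of $\sum_{i\le\log k}1/\Phi(2^i)\ge\log k/\Phi(k)$ (Theorem~\ref{cor:lower-advice}).

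Two further quantitative losses are inherent to your mixture construction. First, even if you restructure to a pointwise argument (bounded partial sums plus $\Psi(k_i)\le\Psi(k_0)$), the normalization overhead $1/a_{b(k)}=\Omega(b(k))$ of any convergent damping $\sum_\ell a_\ell<\infty$ costs an additive $\Omega(\log\Psi(k))=\Omega(\log\log\log k)$ in the final bound, so you obtain $\Psi(k)\ge\log^\epsilon\log k-\bigO(\log\log\log k)$ rather than $-\bigO(1)$; this slack cannot be removed by tuning $a_\ell$. Second, as you note, the method collapses at $\epsilon=1$ (your series converges there), and more generally it cannot reach any competitiveness below the $\sum_j 1/\Phi(2^j)<\infty$ threshold, so it also cannot reprove Theorem~\ref{thm:tight}, which you must import as a black box. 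The paper's argument covers the whole range $0<\epsilon\le 1$ uniformly because the advice enters as an explicit multiplicative $2^{-\Psi}$ factor rather than through a reduction whose overhead must itself be paid in competitiveness.
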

\medskip
\noindent Complementing this result, we show that the lower bound on the advice size is asymptotically optimal.\\ 
\begin{theorem}
Fix~$0<\epsilon\leq 1$. There exists an $\bigO\left(\frac{\log k}{2^{\log^\epsilon\log k}}\right)$-competitive search algorithm using  advice size of $\bigO(\log^{\epsilon}\log k)$. 
\end{theorem}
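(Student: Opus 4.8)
The plan is to build on the $\bigO(1)$-competitive, $\log\log k+\bigO(1)$-bit scheme of Theorem~\ref{thm:tight} (call its correct-count search procedure $\mathcal{B}$), and to trade competitiveness for advice by \emph{coarsening} the agents' knowledge of $k$ and \emph{hedging} over the residual uncertainty. Recall that $\mathcal{B}$, when supplied with a guess within a constant factor of the true count, finds the treasure in expected time $\bigO(D+D^2/k)$; its moves are organized as excursions from the nest (walk out to a scale $2^i$, search a random patch of fractional size $\Theta(1/\hat k)$ there, return), and under-estimating the count is harmless — it only enlarges patches and hence inflates the time by the under-estimation factor — whereas over-estimation is the only dangerous regime. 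These two features are exactly what makes the composition below work.

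First I would isolate the core algorithmic lemma: if an agent is told an interval $[\hat k,2^{\Delta}\hat k]$ guaranteed to contain $k$, then there is an $\bigO(\Delta)$-competitive algorithm. To see this, let the agent form the $\Delta+1$ candidate counts $\hat k_s=2^s\hat k$ for $s=0,\dots,\Delta$; since the candidates are spaced by a factor of two and bracket $k$, some $\hat k_{s^\star}$ satisfies $\hat k_{s^\star}\le k<2\hat k_{s^\star}$. The agent runs the copies $\mathcal{B}(\hat k_0),\dots,\mathcal{B}(\hat k_\Delta)$ concurrently in round-robin over excursions, devoting a $1/(\Delta+1)$ share of its steps to each. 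Because $\hat k_{s^\star}$ under-estimates $k$ by at most a factor two, the copy $\mathcal{B}(\hat k_{s^\star})$ is $\bigO(1)$-competitive in its own time budget, so on its $1/(\Delta+1)$ share it locates the treasure by real time $\bigO(\Delta\,(D+D^2/k))$. Crucially, we measure the first time \emph{some} agent reaches the treasure, so the possibly-divergent over-estimating copies $\mathcal{B}(\hat k_s)$, $s>s^\star$, are irrelevant to the finding time; they merely consume their own shares. Interleaving at the granularity of whole excursions (each starting and ending at the nest) keeps repositioning overhead within a constant factor, so the composition is $\bigO(\Delta)$-competitive.

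It remains to encode such a bracketing interval with $\bigO(\log^{\epsilon}\log k)$ bits and to choose $\Delta$ so that the competitiveness is $\bigO(\log k/2^{\log^{\epsilon}\log k})$. I would set the target width (in the exponent) to $\Delta=\log k/2^{\log^{\epsilon}\log k}$, so that the core lemma already yields the claimed bound. For the advice, fix a partition of the $\log k$-axis into consecutive blocks whose widths follow the slowly varying rule $\Delta(\ell)=\ell/2^{(\log \ell)^{\epsilon}}$ at position $\ell$; concretely put $\beta_{0}=2$ and $\beta_{j+1}=\beta_{j}+\Delta(\beta_{j})$, and let the oracle hand the agent the index $j$ of the block containing $\ell=\log k$. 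From $j$ alone the agent re-runs the recurrence, recovers the block $[\beta_j,\beta_{j+1})\ni \log k$, and hence the interval $[2^{\beta_j},2^{\beta_{j+1}})\ni k$ required by the core lemma — so there is no circularity in $\Delta$ depending on $k$. Counting blocks up to $\ell$ gives $\sum_{\beta_j\le \ell}1\approx\int_{2}^{\ell}\Delta(\ell')^{-1}d\ell'=\int 2^{(\log \ell')^{\epsilon}}\,d\ell'/\ell'$, which after the substitution $u=\log\ell'$ becomes $\int 2^{u^{\epsilon}}\,du=\Theta\!\big(2^{(\log\log k)^{\epsilon}}\cdot(\log\log k)^{1-\epsilon}\big)$; taking a logarithm, the advice size is $\Theta(\log^{\epsilon}\log k)$, as required.

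I expect the main obstacle to be the two ingredients used as black boxes: verifying that the correct-count procedure is $\bigO(1)$-competitive \emph{in expectation} — which rests on a geometric phase structure whose per-phase coverage probability must be pushed above the threshold at which the geometrically growing phase costs still sum to a convergent series — and confirming that under-estimation degrades this only by the estimation factor while over-estimation is the sole failure mode. Granting the behavior of $\mathcal{B}$ from Theorem~\ref{thm:tight}, the remaining steps (the $\Delta$-way hedge costing exactly a factor $\Delta$, the excursion-level interleaving, and the block-counting integral) are routine, and the one delicate piece of bookkeeping is the self-referential partition, which the recurrence above resolves cleanly.
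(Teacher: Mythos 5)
Your proposal follows essentially the same route as the paper's proof of this statement (Theorem~\ref{thm:Psi}): use the advice to shrink the set of candidate $2$-approximations of $k$ from the $\Theta(\log k)$ candidates of the $\bigO(\log k)$-competitive scheme down to $\bigO\bigl(\log k/2^{\log^\epsilon\log k}\bigr)$ candidates, then dovetail the $\bigO(1)$-competitive algorithm $\cA^*$ over those candidates, paying a multiplicative overhead equal to the number of candidates. Your advice encoding differs in a pleasant way: the paper spends $\log\log\log k+\bigO(1)$ bits on $\lfloor\log\log k\rfloor$ to recover the candidate set $S(\alpha)$ and then $\Psi(k)+\bigO(1)$ further bits to name one of $2^{\Psi(k)}$ equal chunks of it, whereas you directly partition the $\log k$-axis into variable-width blocks via the recurrence $\beta_{j+1}=\beta_j+\Delta(\beta_j)$ and transmit a single block index. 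Your integral estimate of the block count is correct, yields the same (in fact marginally smaller) advice size, and sidesteps the field-delimiting issue the paper discusses in its closing remark.

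The one genuine gap is in your core hedging lemma. Interleaving the copies $\mathcal{B}(\hat k_0),\dots,\mathcal{B}(\hat k_\Delta)$ at the granularity of one whole excursion per copy per round does \emph{not} give each copy a $1/(\Delta+1)$ share of the steps: the copy run with guess $\hat k_s$ performs spiral searches of length $\Theta(2^{2i}/\hat k_s)$ in phase $i$, so the most under-estimating copy's excursions are up to $2^{\Delta}$ times longer than the correct copy's, and for $\epsilon<1$ the factor $2^{\Delta}=2^{\log k/2^{\log^\epsilon\log k}}$ is vastly larger than the overhead $\Delta$ you are allowed. A single round of one-excursion-per-copy at scale $2^i\approx D$ then already costs $\Omega\bigl(2^{\Delta}D^2/k\bigr)$ rather than $\bigO\bigl(\Delta(D+D^2/k)\bigr)$, so the claim that ``interleaving at the granularity of whole excursions keeps the overhead within a constant factor'' fails. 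The fix is standard and is exactly what the paper's algorithms $\cA_{\log k}$ and $\cA_{\Psi}$ do: dovetail by \emph{time budget}, giving every candidate $2^j$ consecutive steps in stage $j$ and returning to the source when the budget expires, so that the correct candidate accumulates $\Omega(T)$ of its own steps by real time $\bigO(T\cdot\Delta)$ regardless of how slowly the other copies would progress. With that substitution, your argument goes through and matches the paper's.
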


\medskip
\noindent  The aforementioned lower and upper bounds are summarized in Table 1.

{\renewcommand{\arraystretch}{0.8}
\renewcommand{\tabcolsep}{0.2cm}
\begin{table*}[h]\label{tab:results}
\begin{center}
{
\bgroup\large
\begin{tabular}{|l|l|l|}
\hline
& Competitiveness
& Advice size    \\
\hline
Upper bound  & $\bigO(\log^{1+\epsilon} k)$ & zero \\
\hline
Lower bound  & $\omega(\log k)$ & zero \\
\hline
{Tight bound }
 & $\bigO(1)$    & $\log\log k+\Theta(1)$\\

\hline
{Lower bound }
 & $\bigO(\log^{1-\epsilon}  k)~~~~~~~0<\epsilon < 1$    & $\epsilon\log\log k-\Omega(1)$\\
\hline
Tight bound  & $\bigO(\log k/ 2^{\log^\epsilon\log k})~~~0<\epsilon < 1$  & $\Theta(\log^{\epsilon}\log k)$ \\

\hline
Tight bound  & $\bigO(\log k)$ & $\log \log\log  k+ \Theta(1)$\\

\hline
\end{tabular}
\egroup
}
\end{center}
\caption{Bounds on the advice size for given competitiveness}
\end{table*}

Apart from their purely theoretical appeal, the algorithms we suggest in the aforementioned upper bound results are used mainly to indicate the tightness of our lower bounds. Indeed, these particular algorithms use somewhat non-trivial components, such as being executed in repeated iterations, which seem to correspond less to realistic scenarios. As our final technical result, we  propose a uniform search algorithm that is concurrently efficient
and extremely simple and, as such, it may imply some relevance for actual biological scenarios. 

\paragraph{Conceptual Contribution.}
This paper may serve as a ``proof of concept'' for a new bridge  connecting computer science and biology. Specifically, we illustrate that  theoretical distributed computing can potentially provide a novel and efficient methodology for the study of highly complex, cooperative biological ensembles.
To this end, we choose a setting that is, on the one hand, sufficiently simple to be analyzed theoretically, and, on the other hand,  sufficiently realistic so that this setting, or a variant of it, can be realizable in an experiment. Ultimately, if an experiment that
complies with our setting reveals that the ants' search is time efficient, in the sense detailed above, then our theoretical lower bound results would imply lower bounds on the number of states ants have already when they exit their nest to commence the search. Note that lower bounds consisting of even a few states would already demonstrate the power of this tradeoff-based methodology, since they would be very difficult to obtain using more traditional methodologies.

\subsection{Outline}
In Section \ref{sec:preliminaries}  we formally describe the ANTS problem and provide basic definitions.
To make the reader more familiar with the model, we start the technical part of the paper
in Section \ref{sec:upper-opt} by providing a relatively simple $\bigO(1)$-competitive search algorithm. This algorithm uses a-priori knowledge regarding the total number of agents. Specifically, the algorithm assumes that each agent knows a $2$-approximation of~$k$,  encoded as advice of size $\log\log k$. In Section \ref{sec:lower} we turn to prove lower bounds on the advice size. We start this section by proving that the competitiveness of any uniform search algorithm is more than logarithmic. 
This result is a special case of the main lower bound theorem, namely, Theorem \ref{main-lower}, that we prove just next. The reason we prove this special case separately is because the proof is simpler, and yet, provides some of the main ingredients of the more general lower bound proof.  We then describe specific lower bounds  that directly follow from Theorem \ref{main-lower} as corollaries.
We conclude the technical part of the paper in Section \ref{sec:upper-bound} where we provide several constructions of search algorithms. We first present several constructions whose performances provide  tight upper bounds  for some of our lower bounds results.
We then turn our attention to present a relatively efficient uniform algorithm, called, the {\em harmonic} algorithm, whose simplicity   may suggest its relevance to actual search by ants. In Section \ref{sec:conclusion} we discuss the conceptual
contribution of the paper as a proof of concept for a new interdisciplinary methodology that combines algorithmic studies with biology experiments. 


\subsection{Related Work}
Our work falls within the framework of natural algorithms, a recent attempt to
study biological phenomena from an algorithmic
perspective~\cite{Afek-DISC,AAB+11, BMV12, Cha09,emek,FKLS}.

Since our preliminary conference publications in  \cite{FK12,FKLS}, the ANTS problem had quickly attracted attention from the community of distributed computing, with  continuation works by other researchers, studying various theoretical aspects of our model \cite{Uitto,Emek-ants,STOC16,OPODIS16,Tobias,Nancy-PODC}. Specifically, \cite{Uitto,Emek-ants,Tobias} studied a variant of our model where
agents are modeled as finite state machines and can communicate outside the nest throughout the execution. In particular, under these assumptions, Emek et al., showed that agents can solve the ANTS problem in $\bigO(D+D^2/k)$ time  \cite{Uitto}. This implies that in this case, no a-priori knowledge about $k$ is required by the agents. Note, however, that this assumption regarding communication is probably not applicable for the case of desert ants, where no communication by such searching ants has been recorded.
The ANTS problem was further studied  by Lenzen et al.,  who investigated the effects
of bounding the memory as well as the range of available probabilities of the
agents \cite{Nancy-PODC}. Fraigniaud et al. considered the case where $k$ non-communicating agents search on line for an adversarially placed treasure \cite{STOC16}. They also argued for the usefulness of parallel search performed by non-communicating agents due to the inherent robustness of such computation. A continuation work considering the case where the treasure is placed uniformly at random in a large domain was studied in \cite{OPODIS16}.  \\

The ANTS problem is a special case of collaborative  search -- a classical family of problems that has been extensively studied in different fields of science. We next review previous works on search problems, organizing them according to the research disciplines in which they were investigated.

\paragraph{Search Problems in Computer Science.}
In the theory of computer science,
the exploration of graphs using mobile agents (or robots) is a central question. (For a more detailed survey refer to {\em e.g.,}~\cite{FKLS,FDPS10}.) When it comes to probabilistic searching, the random
walk is a natural candidate, as  it is extremely simple, uses no memory, and
trivially self-stabilizes. Unfortunately, however, the random walk turns out to be inefficient
in a two-dimensional infinite grid. Specifically, in this case, the expected hitting time is infinite,
even if the treasure is nearby.

Most graph exploration research is concerned  with the case
of a single deterministic  agent exploring a finite graph, see for example \cite{AH00,DP02, GPRZ07, PP99, Rein08}.
In general, one of the main challenges in search problems is the establishment of memory  bounds. For example, the question of whether a single agent can explore all finite undirected graphs using logarithmic memory was open for a long time; answering it to the affirmative~\cite{Rein08} established an equality between the classes of languages SL and L. As another example, it was proved in \cite{Ro08} that no finite set of constant memory agents  can explore
all graphs. To the best of our knowledge, the current paper is the first paper establishing non-trivial memory lower bounds in the context of randomized search problems.

In general, the more complex setting of using multiple identical agents has received much less attention.
Exploration by deterministic multiple agents was studied in, {\em e.g.,}~\cite{FIPS11,FGKP06}.
To obtain better results when using several identical deterministic agents, one must assume that the agents are either centrally coordinated or
that they have some means of  communication (either explicitly, or implicitly, by being able to detect the presence of nearby agents).
When it comes to probabilistic agents,
analyzing the speed-up measure for  $k$ random walkers
has recently gained attention.
In a series of papers,  a speed-up of $\Omega(k)$ is established for various finite graph families,
including, {\em e.g.,} expanders and random graphs~\cite{AAKKLT, ES11, CFR09}.
In contrast, for the two-dimensional $n$-node grid, as long as $k$ is polynomial in $n$, the speed up is only logarithmic in~$k$.
The situation with infinite grids is even  worse.
Specifically, though the $k$ random walkers find the treasure with probability one, the expected (hitting) time  is infinite.

 Evaluating the running time as a function of $D$, the distance to the treasure, was studied in the context of the {\em cow-path} problem \cite{BCR91}. This problem considers a single mobile agent placed on a graph, aiming to find a treasure placed by an adversary. It was established in \cite{BCR91} that the competitive ratio for deterministically finding~a~point on the real line is nine, and that in the two-dimensional grid,  the spiral search algorithm is optimal up to lower order terms. Several other variants of this problem were studied in  \cite{DFG06,KRT96, KSW86, LS01}.
 In particular, in~\cite{LS01},  the cow-path problem was extended by considering $k$ agents. However, in contrast to our setting, the agents they consider have unique identities, and the goal is achieved by (centrally) specifying a different path for each of the $k$ agents.

The notion of advice is central in computer science. In particular,
the concept of advice  and its impact on various computations
has recently found various applications in distributed computing. In this context, the main measure used is the advice size.
 It is for instance analyzed in frameworks such as non-deterministic decision \cite{FKP11,KK07,KKP10}, broadcast \cite{FIP06}, local computation of MST \cite{FKL10}, graph coloring  \cite{FGIP07}
and  graph searching by a single robot \cite{CFIKP08}. Very recently, it has also been investigated in the context of online algorithms \cite{BKKK11, EFKR11}.

The question of how important it is for individual processors to know their total number has recently been addressed in the context of locality.  Generally speaking, it has been observed that for several
classical local computation tasks, knowing the number of processors is not essential~\cite{KSV11}. On the other hand, in the context of  local distributed decision, some evidence exist that such knowledge is crucial for non-deterministic verification~\cite{FKP11}.

\paragraph{Search Problems in Biology and Physics.}
Group living and food sources that have to be
actively sought after make collective foraging a widespread biological
phenomenon.  Social foraging theory~\cite{GC00} makes use of economic and
game theory to optimize food exploitation as a function of the group size and
the degree of cooperation between agents  in different environmental
settings. This  theory has been extensively compared to experimental
data (see, {\em e.g.,}~\cite{AMFL10,GiGi88}) but does not typically account for the
spatial characteristics of resource abundance. Central place foraging
theory~\cite{OrPe79}  assumes a situation in which food is collected from a
patchy resource and is returned to a particular location, such as a nest. This theory is used to calculate optimal durations for exploiting food patches at different distances from  the central location and has also been tested against experimental observations~\cite{GKDL94,HoPo87}. Collective foraging around a central location is particularly interesting in the case of social insects where large groups forage cooperatively with, practically, no competition between individuals.

 Actual collective search trajectories of non-communicating agents have been studied in the physics literature ({\em e.g.,}  \cite{R06,BH97}). Reynolds \cite{R06} achieves optimal speed up through overlap reduction which is obtained by sending searchers on near-straight disjoint lines to infinity. This must come at the expense of finding proximal treasures. Harkness and Maroudas
 \cite{HM85} combined field experiments with computer simulations of a semi-random collective search and suggest substantial speed ups as  group size increases.

\paragraph{Search Problems  in Robotics.}

From an engineering perspective,
the distributed cooperation of a team of autonomous agents (often referred to
as robots or UAVs: Unmanned Aerial Vehicles) is a problem that has been extensively studied. These models extend single agent searches in which an agent  with limited sensing abilities attempts to locate one or several mobile or immobile targets~\cite{PYP01}. The memory and computational capacities of the agent are typically large and many algorithms rely on the construction of cognitive maps of the search area that includes current estimates that the target resides in each point~\cite{YMP04}. The agent then plans an optimal path within this map with the intent, for example, of optimizing the rate of uncertainty decrease~\cite{KaBe06}. Cooperative searches typically include communication between the agents that can be transmitted up to a given distance, or even without any restriction.   Models have been suggested where agents can communicate by
altering the environment to which other agent then react~\cite{WAYB08}.  Cooperation without
communication has also been explored to some extent~\cite{Ark92} but  the analysis puts no emphasis on
the speed-up of the search process. In addition, to the best of our knowledge, no works exist in this context that put emphasis  on finding nearby targets faster than faraway one. It is important to
stress that in all those engineering works, the issue of whether
the robots know their total number is typically not addressed, as obtaining such
information does not seem to be problematic. Furthermore, in many
works, robots are not identical and have unique identities.

\section{Model and Definitions}\label{sec:preliminaries}

\subsection{General Setting}
We introduce the {\em Ants Nearby Treasure Search (ANTS)}  problem, in which $k$ mobile {\em agents}  are searching for a {\em treasure} on some topological domain. Here, we focus on the two-dimensional plane.
The agents are probabilistic mobile machines (robots).
They are identical, that is, all agents execute the same protocol $\cP$, and their execution differ only due to the outcome of their random coins.
Each agent has some limited field of view, {\em i.e.,} each agent can see its surrounding up to a distance of  some $\varepsilon>0$. Hence, for simplicity, instead of considering the two-dimensional plane, we assume that the agents  are actually walking on the integer two-dimensional infinite grid $G=\mathbb{Z}^2$ (they can traverse an edge of the grid in both directions). The search is central place, that is, all $k$ agents initiate the search from some central node $s\in G$, called the {\em source}. Before the search is initiated,
an adversary locates the treasure  at some node $\varsigma\in G$, referred to as the {\em target} node.
 Once the search is initiated, the agents cannot communicate among themselves.

 The {\em distance} between two nodes  $u,v\in G$, denoted $d(u,v)$, is simply the Manhattan  distance between them, {\em i.e.,} the number of edges on the shortest path connecting $u$ and $v$ in the grid~$G$.  For a node $u$, let $d(u):=d(s,u)$ denote the distance between $u$ and the source node. We denote by $D$ the distance between the source node and the target, {\em i.e.,} $D=d(\varsigma)$.
 It is important to note that the agents have no a-priori information about the location of $\varsigma$ or about $D$.
 We say that the agents {\em find} the treasure when one of the agents visits the target node~$\varsigma$.
 The goal of the agents it to find the treasure as fast as possible as a function of both $D$ and $k$.

Since we are mainly interested in lower bounds, we assume a very liberal setting. In particular,
 we do not restrict either the computational power or  the navigation capabilities of agents. Moreover, we put no restrictions on the internal storage used for navigation. On the other hand, we note that for constructing upper bounds, the algorithms we consider use simple procedures that can be implemented using relatively few resources.
 For example, with respect to navigation, we only  assume  the ability to perform the following basic procedures: (1) choose a
direction uniformly at random, (2) walk in a ``straight line'' to a prescribed distance, (3) perform a {\em spiral search} around a given node\footnote{The spiral search is a particular deterministic search algorithm that starts at a node $v$ and enables the agent  to visit  all nodes at distance  $\Omega(\sqrt{x})$ from $v$ by traversing $x$ edges, for every integer $x$  (see, {\em e.g.,} \cite{BCR91}). For our purposes, since we are interested with asymptotic results only, we can replace this atomic navigation protocol with any procedure that guarantees such a property. For simplicity, in what follows, we assume that for any integer $x$, the spiral search of length $x$ starting at a node $v$ visits all nodes at distance at most $\sqrt{x}/2$ from $v$.},
and (4) return to the source node.

\subsection{Oracles and Advice}
We would like to model  the situation in which before the search actually starts, some information may be exchanged  between the  agents at the source node.
In the context of ants, this preliminary communication may be quite limited. This may be because of  difficulties in the communication that are inherent to the ants or the environment, {\em e.g.,} due to faults or limited memory, or because of asynchrony issues, or simply because agents are identical and it may be difficult for ants to distinguish one agent from the other. Nevertheless,  we consider a very liberal setting in which this preliminary communication  is almost unrestricted.

More specifically, we consider a centralized algorithm called {\em oracle} that assigns advices to agents in a preliminary stage. The oracle, denoted by $\cO$, is a probabilistic\footnote{It is not clear whether  or not  a probabilistic oracle is strictly more powerful than a deterministic one.  Indeed, the oracle assigning the advice is unaware of $D$, and may thus potentially use the randomization to reduce the size of the advices by
 balancing between the efficiency of the search for small values of $D$ and larger values.} centralized algorithm that receives as input a set of $k$ agents and assigns an {\em advice} to each of the $k$ agents.
We  assume that the oracle may use a different protocol for each $k$; Given $k$, the randomized algorithm used for assigning the advices to the $k$~agents is denoted by $\bigO_k$. Furthermore,  the oracle may assign a different advice to each agent\footnote{We note that even though we consider a very liberal setting, and allow a very powerful oracle, the oracles we use for our upper bounds constructions
are very simple and rely on much weaker assumptions. Indeed, these oracles are not only deterministic but also assign the same advice to each of the $k$ agents.}.
Observe that this definition of an oracle enables it to simulate almost any reasonable preliminary communication between the agents\footnote{For example, it can simulate to following very liberal setting. Assume that in the preprocessing stage,   the $k$ agents are organized in a clique topology, and
that each agent can send a separate message to each other agent. Furthermore, even though the agents are identical, in this preprocessing stage, let us assume that agents can distinguish the messages received from different  agents, and that each of the $k$ agents may use a different probabilistic protocol for this preliminary  communication.  In addition, no restriction is made either on  the memory and computation capabilities of agents or on the preprocessing  time, that is, the preprocessing stage takes finite, yet unlimited, time.}.

 It is important to stress that even though all agents execute the same searching protocol, they may start the search with different advices. Hence, since their searching protocol may rely on the content of this initial advice,
 agents with different advices may behave differently. Another important remark  concerns the fact that some part of the advices may be used for encoding (not necessarily disjoint)  identifiers. That is, assumptions regarding the settings in which not all agents are identical and there are several types of agents can be captured by our setting of advice.

To summarize, a {\em search algorithm} is a pair $\langle \cP,\cO\rangle$ consisting of a randomized searching protocol $\cP$ and randomized oracle $\cO=\{\cO_k\}_{k\in \Natural}$. Given
$k$ agents, the randomized oracle $\cO_{k}$ assigns a separate advice to each of the given agents. Subsequently, all agents initiate the actual search by letting each of the  agents  execute protocol $\cP$ and using the corresponding advice as input to $\cP$. Once the search is initiated, the agents cannot communicate among themselves.

\subsection{Complexities}

\paragraph{Advice Complexity.}
Consider  an oracle $\cO$.
Given $k$, let $\Psi_{\cO}(k)$ denote the maximum number of bits devoted for encoding the advice of an agent, taken over all possible advice assignments of $\cO_k$, and over the $k$ agents. In other words, $\Psi_{\cO}(k)$ is the minimum number of bits required for encoding the advice,  assuming that the number of agents is $k$.  
The function $\Psi_{\cO}(\cdot)$ is called the {\em advice size} function of oracle  $\cO$.
(When the context is clear, we may omit the subscript $\cO$ from $\Psi_{\cO}(\cdot)$ and simply use $\Psi(\cdot)$ instead.)

Observe that a lower bound on the size of the advice is also a lower bound on the {\em memory size} of an agent, which is required merely
for storing the output of the preprocessing stage, {\em i.e.,} the initial data-structure required for the search. In other words, even without considering the memory of agents used for the actual navigation, the lower bounds on the advice we establish imply that so many bits of memory must be used initially by the agents.

As mentioned, the advice complexity also directly relates to the {\em state} complexity, {\em i.e.,} the minimum number of states an agent has already when commencing the search. Specifically, a lower bound of $x\geq 1$ bits on the advice size implies a lower bound of $2^{x-1}$ on the state complexity.

Algorithms that do not use any advice (in particular, no information regarding $k$ is
available to agents) are called \emph{uniform}.   (The term ``uniform'' is chosen to stress that
agents execute the same algorithm regardless of their number, see, {\em e.g.,}~\cite{KSV11}.)

\paragraph{Time Complexity.}
When measuring the time to find the treasure, we assume that all computations internal to an agent are performed in zero time. Since we are mostly interested in lower bounds, we assume a liberal setting also in terms of synchronization.
That is, we assume that the execution proceeds in discrete rounds, each taking precisely one unit of time. In each such round, each agent can perform internal computations (that take zero time) and traverse an incident edge. (For upper bound purposes, this synchronization assumption can  be removed if we measure the time according to the slowest edge-traversal.) We also assume that all agents start the search simultaneously at the same round. (For upper bound purposes, this assumption can also be easily removed by starting to count the time when the last agent initiates the search.)

The {\em expected running time} of a search algorithm  $\cA:=\langle \cP,\cO\rangle$ is  the expected time (or number of rounds) until at least one of the agents finds the treasure (that is, the agent is located at the target node $\varsigma$ where the treasure is located).
The expectation  is defined with respect to the coin tosses made by the (probabilistic) oracle $\cO$ assigning the advices to the agents, as well as the subsequent coin tosses made by the agents executing $\cP$. We denote the expected running time of an algorithm $\cA$ by $\tau=\tau_{\cA}(D,k)$.
In fact, for our lower bound to hold, it is sufficient to assume that the probability that the treasure is found by time $2\tau$ is at least $1/2$. By Markov's inequality, this assumption is indeed weaker than the assumption that the expected running time is $\tau$.

Note that if an agent knows $D$, then it can potentially find the treasure in time $\bigO(D)$, by
walking to a distance $D$ in some direction, and then performing a ``circle'' around the source of radius $D$ (assuming, of course, that its navigation abilities enable it  to perform such a circle). With the absence of knowledge about $D$, a single agent can find the treasure in time $\bigO(D^2)$ by performing a spiral search  around the source (see, {\em e.g.,} \cite{BCR91}). The following observation implies that $\Omega(D+D^2/k)$ is a lower bound on the expected running time of any search algorithm. The proof is straightforward but we nevertheless give it here, for completeness purposes.\\

\begin{observation}\label{simple-lower}
The expected running time of any algorithm is $\Omega(D+D^2/k)$, even if the number of agents~$k$ is known to all agents.
\end{observation}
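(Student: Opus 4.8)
The plan is to prove the two summands separately, since the bound $\Omega(D+D^2/k)$ means $\Omega(\max\{D, D^2/k\})$, and a lower bound on each term suffices. The $\Omega(D)$ term is immediate: the treasure sits at distance $D$ from the source, every agent starts at the source, and an agent traverses at most one edge per round, so no agent can reach the target node $\varsigma$ before round $D$. This holds deterministically, for every execution, regardless of $k$ or of any advice, hence certainly in expectation.

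The substantive part is the $\Omega(D^2/k)$ term, which I would establish by a counting/volume argument. Fix a time budget $t$. In $t$ rounds a single agent, starting at $s$, can visit at most $t+1$ distinct nodes (one new node per edge traversal), so the $k$ agents together visit at most $k(t+1)$ distinct nodes of the grid during the first $t$ rounds. Now I would count the nodes the agents \emph{must} be able to reach to be sure of finding an adversarially placed treasure: consider the set $S_D=\{u\in G : d(u)=D\}$ of nodes at distance exactly $D$ from the source. In the Manhattan metric on $\mathbb{Z}^2$ this ``sphere'' has $\Theta(D)$ nodes (precisely $4D$ for $D\ge 1$). The adversary is free to place the treasure at any node of $S_D$, so intuitively the agents cannot be guaranteed to have found it quickly unless collectively they have explored a constant fraction of $S_D$.

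To turn this into a clean expected-time bound I would argue contrapositively via an averaging argument over the adversary's choice. Suppose the expected running time were $o(D^2/k)$; then for $t=c\,D^2/k$ with a suitably small constant $c$, the treasure is found by round $t$ with probability at least, say, $1/2$ (using the remark in the text that it suffices for the treasure to be found by time $2\tau$ with probability $1/2$). But the total number of nodes the $k$ agents can collectively visit by round $t$ is at most $k(t+1)=O(D^2)$, and among these only those at distance exactly $D$ count as ``hits'' for a target placed on $S_D$. The key step is to average over the $4D$ possible target locations on $S_D$: for a target chosen uniformly at random from $S_D$, the probability that any fixed execution of the algorithm has visited it by round $t$ is at most (number of distinct distance-$D$ nodes visited)$/|S_D|$. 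Since the distinct distance-$D$ nodes visited is at most $k(t+1)=O(D^2)$ but in fact the agents only have $t=cD^2/k$ rounds and must first spend $D$ of them merely reaching $S_D$, one shows this ratio is below $1/2$ for small enough $c$, contradicting the $1/2$ success guarantee. By the probabilistic (averaging) method there is then a \emph{fixed} target location in $S_D$ for which the algorithm fails to find the treasure by round $t$ with probability more than $1/2$, forcing $\tau=\Omega(D^2/k)$.

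The main obstacle, and the point needing the most care, is making the counting tight enough: one must ensure that the ``$t+1$ nodes per agent'' bound is correctly combined with the fact that reaching $S_D$ at all costs each agent $D$ steps, so that the number of distance-$D$ nodes an agent can touch within a budget of $t$ rounds is genuinely limited, and that this number summed over $k$ agents stays below a constant fraction of $|S_D|=\Theta(D)$ when $t=\Theta(D^2/k)$. A slightly cleaner route that avoids fussing over the ``first $D$ steps'' is to count all distinct nodes visited (at most $k(t+1)$) and observe that to find a worst-case treasure at distance $D$ the union of visited nodes must, for the adversary's averaging argument, intersect $S_D$; since $|S_D|=\Theta(D)$ and a uniformly random target on $S_D$ is hit with probability at most $k(t+1)/\Theta(D)$, demanding this be $\ge 1/2$ already yields $t=\Omega(D/k)$, which combined with the requirement that every covered sphere of radius up to $D$ be explored gives the stronger $\Omega(D^2/k)$. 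I would formalize the stronger version by summing the $\Theta(r)$ sphere sizes over all radii $r\le D$, giving $\Theta(D^2)$ nodes that must be covered to defeat a fully adaptive adversary, whence $k(t+1)=\Omega(D^2)$ and $t=\Omega(D^2/k)$.
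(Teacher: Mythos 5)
Your final argument --- counting the $\Theta(D^2)$ nodes in the full ball of radius $D$, bounding the agents' collective coverage in $t$ rounds by $k(t+1)$, and averaging over the adversary's placement to exhibit a node visited by time $t$ with probability below $1/2$ --- is exactly the paper's proof (which works with $B_D=\{u\mid 1\leq d(u)\leq D\}$ and the bound $2Tk<D^2/2$). The intermediate detour through the sphere $S_D$ is superfluous, but you correctly note it yields only $\Omega(D/k)$ and replace it with the ball-counting version, so the proposal is correct and takes essentially the same approach as the paper.
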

\begin{proof}
To see why the observation holds,  consider the case of $k$ agents, and a search algorithm $\cA$ whose expected running time is $T$. Clearly,  $T\geq D$, because it takes $D$ time to merely reach the treasure.
Assume, towards contradiction, that $T<D^2/4k$.  In any execution of $\cA$, by time $2T$, the $k$ agents can visit a total of at most $2Tk<D^2/2$ nodes. Hence,
 by time  $2T$, more than half of the nodes in $B_D:=\{u\mid 1\leq d(u)\leq D\}$
were not visited. Therefore, there must exist a node
$u\in B_D$  such that the probability that $u$ is visited by time $2T$ (by at least one of the agents) is less than $1/2$. If the adversary locates the treasure at $u$ then the expected time to find the treasure is strictly greater than~$T$, which contradicts the assumption.
\end{proof}
\paragraph{Competitive Analysis.}
We evaluate the time performance of an algorithm  with respect to the lower bound given by Observation~\ref{simple-lower}.
Formally,  let $\Phi(k)$ be a function of $k$. A search  algorithm $\cA:=\langle \cP,\cO\rangle$ is called {\em $\Phi(k)$-competitive}~if
$$\tau_{\cA} (D,k)\leq \Phi(k)\cdot (D+D^2/k),$$ for every integers $k$ and $D$.
Our goal is to establish connections between the {\em size} of the advice, namely $\Psi(k)$, and the competitiveness  $\Phi(k)$ of the search algorithm.

\section{An $\bigO(1)$-Competitive Algorithm with Advice of Size $\lceil\log\log k\rceil$ }\label{sec:upper-opt}

As a warm-up, this section presents an $\bigO(1)$-competitive algorithm with advice  size of $\lceil\log\log k\rceil$ bits.
We start with Theorem \ref{thm:know-k} showing that agents can achieve an asymptotically optimal running time if they know the precise value of $k$. This algorithm requires advice size of $\lceil\log k\rceil$ bits, that are used to encode the value of $k$ at each agent. However, as explained in the following corollary, Theorem \ref{thm:know-k} also holds if agents merely know a constant approximation of $k$. As this knowledge can be encoded using $\lceil\log\log k\rceil$ bits, we obtain an $\bigO(1)$-competitive algorithm with advice  size of $\lceil\log\log k\rceil$ bits.

 The $\bigO(1)$-competitive algorithm that we consider is based on performing the following operations iteratively:
\begin{enumerate}
\item
walk in a ``straight line'' to a point chosen uniformly at random in some ball around the source,  
\item
perform a spiral search around that point for a  length which is proportional to the size of the ball divided by the number of agents, and 
\item
return to the source. 
\end{enumerate}
More formally, we
consider the following algorithm, termed $\cA_{\kk}$ (see also Figure~\ref{fig:phasei} for an illustration).
\renewcommand{\baselinestretch}{1.2}
\begin{algorithm}
Each agent performs the following double loop\;
\For(the \emph{stage} $j$ defined as follows){$j$ from $1$ to $\infty$}{
  \For(the \emph{phase} $i$ defined as follows){$i$ from $1$ to $j$}{
  \begin{smallitemize}
      \item
  Go to a node $u$ chosen uniformly at random among the nodes in $B_i\coloneqq \{u\mid d(u) \leq 2^i\}$\;
\item
Perform a spiral search for time $t_i\coloneqq2^{2i+2}/ k$\;
\item
Return to the source $s$\;
  \end{smallitemize}
  }
 }

\caption{The non-uniform algorithm $\cA_{\kk}$.}\label{alg:non-uniform}
\end{algorithm}

\begin{figure}[ht]
    \begin{center}
        \includegraphics[height=6cm]{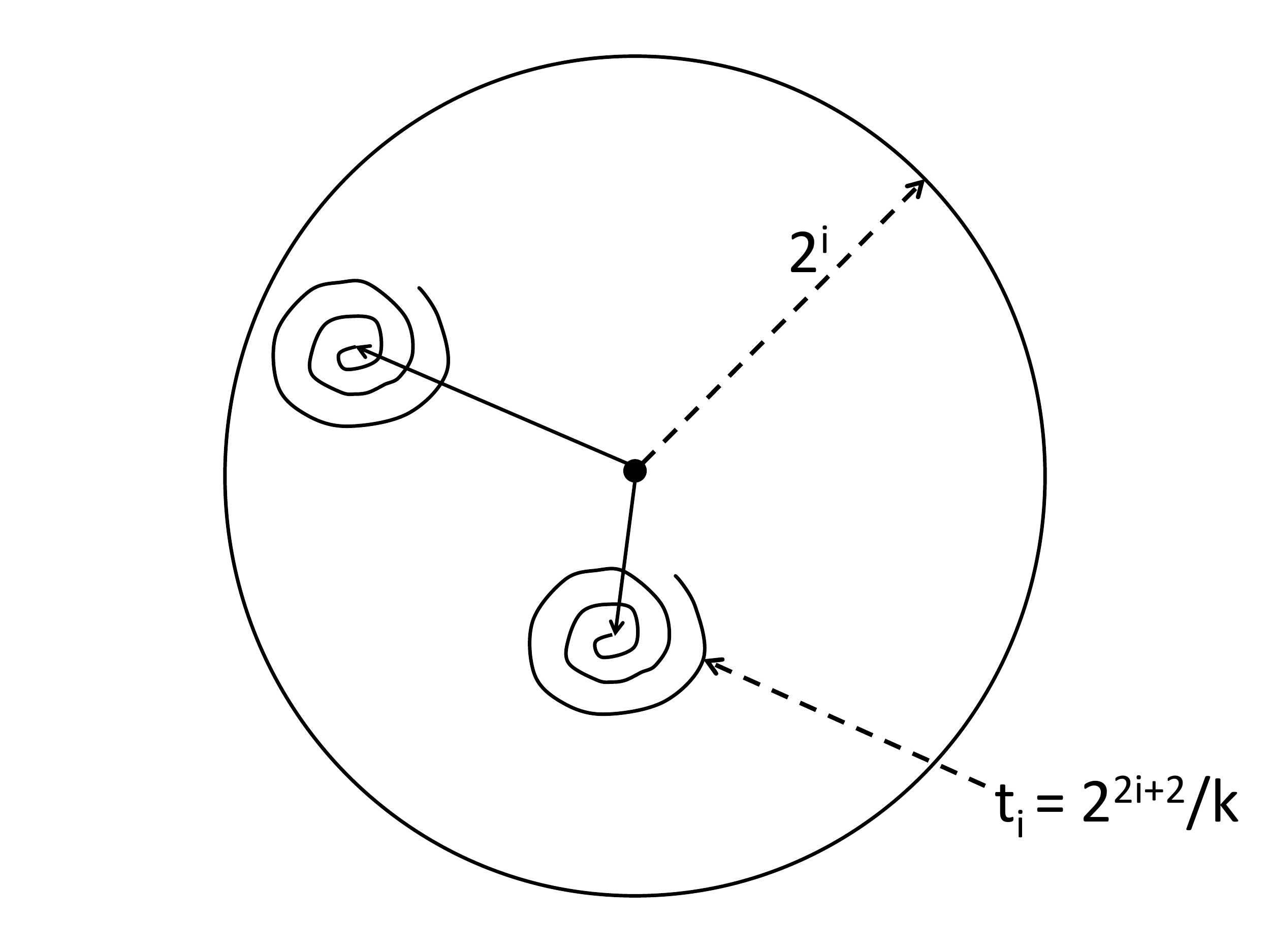}
    \end{center}
    \caption{Illustration of two agents performing phase $i$.}\label{fig:phasei}
\end{figure}

Note that Algorithm $\cA_{\kk}$ assumes that agents know their total number, $k$. The next theorem states that its performances are asymptotically optimal.\\
\begin{theorem}\label{thm:know-k}
  Algorithm $\cA_{\kk}$ is $\bigO(1)$-competitive.
\end{theorem}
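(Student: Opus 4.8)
The plan is to show $\tau_{\cA_{\kk}}(D,k)=\bigO(D+D^2/k)$ by combining a geometric cost accounting with a probabilistic analysis of when the treasure is first caught. Fix the \emph{critical index} $i^\star:=\lceil\log(2D)\rceil$, so that $2^{i^\star}=\Theta(D)$ and, crucially, $D\le 2^{i^\star-1}$. First I would bound the cost: a single execution of phase $i$ costs $\bigO(2^i)$ for the two straight-line trips to and from the random node plus $t_i=\Theta(2^{2i}/k)$ for the spiral, so completing all of stage $j$ costs $\bigO(2^j+2^{2j}/k)$ (geometric series, dominated by the last phase), and the cumulative cost through stage $i^\star+\ell$ is $T(i^\star+\ell)=\bigO(D\,2^\ell+(D^2/k)\,4^\ell)$.

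Next I would establish the key geometric estimate: a single execution of phase $i$ with $i\ge i^\star$ by one agent catches the treasure with probability at least $c_0/k$ for a universal constant $c_0>0$. Indeed, the spiral of length $t_i=2^{2i+2}/k$ covers a Manhattan disk of radius $r_i=2^i/\sqrt k$ around the uniformly chosen start $u$, so it catches $\varsigma$ precisely when $u$ lies within distance $r_i$ of $\varsigma$. Since $|B_i|=\Theta(2^{2i})$ and the disk of radius $r_i$ around $\varsigma$ contains $\Theta(2^{2i}/k)$ nodes, all lying inside $B_i$ because $D+r_i\le 2^{i-1}+2^{i-1}=2^i$ (using $D\le 2^{i^\star-1}$ and $k\ge 4$), the ratio is $\Theta(1/k)$. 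The degenerate regime $r_i<1$, i.e.\ $k=\Omega(2^{2i})$, is handled by the cruder bound $\ge 1/|B_i|\ge\Omega(1/k)$, and $k\le 3$ by observing that the target bound is then $\Theta(D+D^2)$ and a single large phase already succeeds with constant probability.

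The heart of the argument is that a constant success probability per stage does \emph{not} suffice: the per-stage cost grows by a factor of $4$, so the expected time would diverge unless the failure probability decays faster than $4^{-\ell}$. This is exactly what the double loop buys us. All phase executions --- across the $k$ agents, across stages, and across phase indices within an agent --- use fresh independent randomness, hence are mutually independent. By the end of stage $i^\star+\ell$, each phase index $i\in\{i^\star,\dots,i^\star+\ell\}$ has been executed in every stage $\ge i$, for a total of $\sum_{i=i^\star}^{i^\star+\ell}(i^\star+\ell-i+1)=\binom{\ell+2}{2}=\Theta(\ell^2)$ executions per agent, i.e.\ $\Theta(k\ell^2)$ independent executions overall, each catching $\varsigma$ with probability $\ge c_0/k$. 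Therefore
\[
\Pr[\,\varsigma\text{ not found by end of stage }i^\star+\ell\,]\;\le\;\left(1-\tfrac{c_0}{k}\right)^{\Theta(k\ell^2)}\;\le\;e^{-\Omega(\ell^2)}.
\]

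Finally I would assemble the expectation. Writing $J$ for the stage in which the treasure is first found and using that the total elapsed time is at most $T(J)$, summation by parts gives
\[
\EE[T(J)]\;\le\;T(i^\star)+\sum_{\ell\ge 0}\Pr[J>i^\star+\ell]\bigl(T(i^\star+\ell+1)-T(i^\star+\ell)\bigr)\;\le\;\bigO\!\left(D+\tfrac{D^2}{k}\right)\Bigl(1+\sum_{\ell\ge 0}e^{-\Omega(\ell^2)}\,4^\ell\Bigr),
\]
and since $\sum_{\ell}e^{-\Omega(\ell^2)}4^\ell$ converges, this is $\bigO(D+D^2/k)$, proving $\bigO(1)$-competitiveness. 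The main obstacle is precisely the tension flagged above: matching the super-exponential $e^{-\Omega(\ell^2)}$ decay --- which comes only from recognizing that the critical-scale phases are retried $\Theta(\ell^2)$ times --- against the $4^\ell$ growth of the cumulative cost. The geometry and the cost accounting are routine by comparison.
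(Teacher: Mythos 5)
Your proposal is correct and follows essentially the same route as the paper's own proof: bound the per-stage cost by a geometric sum of $\bigO(2^j+2^{2j}/k)$, show each ``large'' phase catches the treasure with probability $\Omega(1/k)$ per agent, observe that by stage $s+\ell$ each agent has executed $\Theta(\ell^2)$ large phases so the failure probability decays like $e^{-\Omega(\ell^2)}$, and sum against the $4^\ell$ cost growth. Your treatment is slightly more meticulous than the paper's (explicit handling of the $r_i<1$ and small-$k$ regimes, and full containment of the disk in $B_i$ via the shifted index $i^\star$ rather than the paper's ``constant fraction'' argument), but the decomposition and the key $\ell^2$-versus-$4^\ell$ tension are identical.
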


\begin{proof}
Assume that the treasure is located at some node $\varsigma$ at distance $D$ from the source. We now show that the expected time to find it is $\bigO(D+D^2/k)$.

Denote $s\coloneqq\lceil\log D\rceil$. A phase $i$, for which $i\geq s$, is called {\em large}. 
Note that the ball $B_i$ corresponding to a large phase $i$ contains the node $\varsigma$ holding the treasure.
 Fix a positive integer $\ell$ and consider the time $T_\ell$ until each agent completed $\ell$
large phases.
Each time an agent performs a large phase $i$, the agent finds the
treasure if the chosen node $u$ belongs to
the ball $B(\varsigma, \sqrt{t_i}/2)$ around $\varsigma$, the node holding the treasure (this follows from the definition of a spiral search, see Footnote 3). 
Since $\varsigma\in B_i$, we get that  at least some constant fraction of the ball $B(\varsigma, \sqrt{t_i}/2)$
is contained in $B_i$.
The probability of choosing a node $u$ in that fraction is thus
$\Omega(|B(\varsigma,\sqrt{t_i}/2)|/|B_i|)=\Omega(t_i/ 2^{2i})$, which is at least $\beta/k$ for
some positive constant $\beta$.
Thus, the probability that by time $T_\ell$ none of the $k$ agents finds the treasure (while
executing their respective $\ell$ large phases $i$) is at most
$(1-\beta/k)^{k\ell}$, which is at most $\gamma^{-\ell}$ for some
constant $\gamma$ greater than $1$.

For $i\in\mathbf{N}$, let $\psi(i)$ be the time required to execute a phase $i$.
Note that $\psi(i)=\bigO(2^i+2^{2i}/k)$. Hence, the time until all agents
complete stage $j$ for the first time is
\[\sum_{i=1}^j \psi(i)=\bigO\left(\sum_{i=1}^j 2^i + 2^{2i}/k\right)=\bigO(2^j+2^{2j}/k).\]
This equation implies that for any integer $r$, all agents
complete their respective stages $s+r$ by time
$\hat{T}(r)\coloneqq \bigO(2^{s+r}+2^{2(s+r)}/k)$. Observe that by this time, all agents
have completed at least $\ell=r^2/2$ large phases $i$. Consequently, as explained before, the probability that none of the $k$ agents finds the treasure by time
$\hat{T}(r)$  is at most
$\gamma^{-\ell}=\gamma^{-r^2/2}$. Hence, the expected running time is at most
\begin{align*}
\cT_{\cA_{\kk}}(D,k)&=\bigO\left(\sum_{r=1}^\infty
\frac{2^{s+r}}{\gamma^{r^2/2}}+\frac{2^{2(s+r)}}{k\gamma^{r^2/2}}\right)\\
&=\bigO\left(2^{s}+2^{2s}/k\right)=\bigO(D+D^2/k).
\end{align*}
This establishes the theorem.
\end{proof}

Fix a constant $\rho\ge1$. We say that the agents have a \emph{$\rho$-approximation} of $k$, if,
initially, each agent $a$ receives as input a value $k_a$ satisfying
$k/\rho\le k_a\le k\rho$.\\

\begin{corollary}\label{corr:opt}
Fix a constant $\rho\ge1$.
    Assume that the agents have a $\rho$-approximation of their total number. Then,
    there exists a (non-uniform) search algorithm $\cA^*$ that is $\bigO(1)$-competitive.
\end{corollary}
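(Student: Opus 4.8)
The plan is to show that a two-sided constant approximation is as good as exact knowledge, by running $\cA_{\kk}$ essentially unchanged. I would define $\cA^*$ so that each agent $a$ executes the double loop of $\cA_{\kk}$ verbatim, but with its private estimate $k_a$ substituted for $k$; concretely, in phase $i$ agent $a$ performs a spiral search of length $t_i^{(a)} := 2^{2i+2}/k_a$ rather than $2^{2i+2}/k$. I would then re-run the two estimates from the proof of Theorem~\ref{thm:know-k}, tracking how each degrades when $k$ is replaced by $k_a$, and check that the degradation is only by a multiplicative constant depending on $\rho$ alone.

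First I would revisit the per-large-phase success probability. For a large phase $i$, the probability that agent $a$ lands in a constant fraction of $B(\varsigma,\sqrt{t_i^{(a)}}/2)$ contained in $B_i$ is $\Omega(t_i^{(a)}/2^{2i})=\Omega(1/k_a)$, and since $k_a\le\rho k$ this is at least $\beta/k$ for a constant $\beta$ depending only on $\rho$. The crucial point is that this lower bound is \emph{uniform} over the agents, even though they may hold different estimates $k_a$; hence the probability that none of the $k$ agents finds the treasure during their respective $\ell$ large phases is still at most $\prod_a(1-\beta/k)^{\ell}=(1-\beta/k)^{k\ell}\le\gamma^{-\ell}$ for some constant $\gamma>1$.

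Next I would bound the duration of a phase. Agent $a$ spends $\psi_a(i)=\bigO(2^i+2^{2i}/k_a)$ time in phase $i$, and since $k_a\ge k/\rho$ we have $2^{2i}/k_a\le\rho\,2^{2i}/k$, so $\psi_a(i)=\bigO(2^i+2^{2i}/k)$ with the constant absorbing $\rho$. Summing as before, every agent completes its stage $s+r$ (and hence at least $\ell=r^2/2$ large phases) by a common time $\hat{T}(r)=\bigO(2^{s+r}+2^{2(s+r)}/k)$. Plugging the failure bound $\gamma^{-r^2/2}$ into the identical geometric-type sum then yields expected running time $\bigO(2^s+2^{2s}/k)=\bigO(D+D^2/k)$, which is the claim. (For the surrounding narrative: a constant approximation is pinned down by the index of the interval $[\rho^{j},\rho^{j+1})$ containing $k$, of which there are $\bigO(\log k)$, so the oracle needs only $\log\log k+\bigO(1)$ bits.)

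I expect the only real subtlety — and the single place where the two-sided nature of the approximation is essential — to be keeping the success-probability bound and the time bound valid \emph{simultaneously and uniformly} across agents holding different estimates: the upper bound $k_a\le\rho k$ is what keeps every agent's per-phase success probability $\Omega(1/k)$, while the lower bound $k_a\ge k/\rho$ is what keeps every agent's per-phase cost $\bigO(2^{2i}/k)$. Once both are in hand, the remainder is a verbatim repetition of the proof of Theorem~\ref{thm:know-k}.
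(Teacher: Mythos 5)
Your proposal is correct and follows essentially the same route as the paper: run $\cA_{\kk}$ with the private estimate substituted for $k$ and observe that both the per-phase success probability and the per-phase duration degrade by at most a constant factor depending on $\rho$. The only cosmetic difference is that the paper has each agent use the normalized parameter $k_a/\rho$ (a guaranteed underestimate of $k$, so the spiral searches only lengthen and the success-probability bound carries over verbatim), whereas you plug in $k_a$ directly and track the two directions of the approximation separately — which is exactly the ``only real subtlety'' you correctly identify.
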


\begin{proof}
    Each agent $a$ executes Algorithm $\cA_{\kk}$ (see the proof of Theorem~\ref{thm:know-k}) with the parameter
    $k$ equal to $k_a/\rho$. By the definition of a $\rho$-approximation,
    the only difference between this case and the case where the agents know
    their total number,
    is that for each agent, the time required to perform each spiral search
    is multiplied by a constant factor of at most $\rho^2$.
    Therefore, the analysis in the proof of Theorem~\ref{thm:know-k}
    remains essentially the same and the running time is increased by a multiplicative factor of at most
    $\rho^2$.
\end{proof}
\noindent The following theorem directly follows from the previous corollary.\\
\begin{theorem}\label{cor:known}
There exists an $\bigO(1)$-competitive search algorithm using advice size of $\lceil \log \log k\rceil$ bits.
\end{theorem}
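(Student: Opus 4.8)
The plan is to obtain the theorem immediately from Corollary~\ref{corr:opt} by designing an oracle that encodes a constant-factor approximation of $k$ in only $\lceil\log\log k\rceil$ bits. The point is that Corollary~\ref{corr:opt} does not need the exact value of $k$: it only requires each agent to be supplied with some $k_a$ lying within a fixed multiplicative factor $\rho$ of $k$. Such a value is determined by the \emph{order of magnitude} of $k$, i.e.\ by the single integer $\lfloor\log k\rfloor$, and this integer is tiny compared with $k$ itself, so it can be written down using roughly $\log\log k$ bits.

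Concretely, I would take the oracle $\cO_k$ to be deterministic and to hand every agent the same advice, namely the binary encoding of $j:=\lfloor\log k\rfloor$. Each agent then reconstructs the value $k_a:=2^{j}$ and runs the algorithm $\cA^*$ of Corollary~\ref{corr:opt} with this $k_a$ in place of $k$. Since $2^{j}\le k<2^{j+1}$ we have $k/2\le k_a\le k$, so $k_a$ is a $\rho$-approximation of $k$ with $\rho=2$; Corollary~\ref{corr:opt} then guarantees that the resulting search algorithm is $\bigO(1)$-competitive. Note that this oracle is of the simple deterministic, all-agents-identical type promised for the upper-bound constructions.

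It then remains to verify the advice size. The advice is the binary representation of an integer $j\in\{0,1,\dots,\lfloor\log k\rfloor\}$, a range of at most $\lfloor\log k\rfloor+1$ values, so it occupies at most $\lceil\log(\lfloor\log k\rfloor+1)\rceil$ bits, giving $\Psi_{\cO}(k)=\lceil\log\log k\rceil$ up to the stated precision.

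The argument is essentially immediate given the corollary, so there is no real obstacle; the one place that warrants a little care is matching the bit count \emph{exactly} to $\lceil\log\log k\rceil$ rather than $\log\log k+\bigO(1)$, since encoding $\lfloor\log k\rfloor$ in full can cost one extra bit in borderline cases (e.g.\ when $\log k$ is just above a power of two). This is harmless because Corollary~\ref{corr:opt} permits \emph{any} constant $\rho$: quantizing the exponent more coarsely, say encoding $\lfloor(\log k)/2\rfloor$ and reconstructing $k_a:=2^{2\lfloor(\log k)/2\rfloor}$ (a $4$-approximation), halves the number of distinct advice values and absorbs the stray additive term, bringing the size down to $\lceil\log\log k\rceil$. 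Since the theorem is only a warm-up corollary, either reading is acceptable.
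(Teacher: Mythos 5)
Your proposal is correct and is essentially the paper's own argument: the paper likewise derives Theorem~\ref{cor:known} directly from Corollary~\ref{corr:opt} by having the oracle encode a constant-factor approximation of $k$ (the order of magnitude $\lfloor\log k\rfloor$) in $\lceil\log\log k\rceil$ bits. Your extra remark about absorbing the possible one-bit overshoot via coarser quantization is a fine (and slightly more careful) touch, but does not change the approach.
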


\section{Lower Bounds on the Advice}\label{sec:lower}
In Section \ref{sec:weaker-lower}  we give an informal proof that the competitiveness of any uniform search algorithm cannot be $o(\log k)$.  The proof provides some important ingredients of the main lower bound proof. In Section~\ref{sec:definition} we provide definitions and preliminary observations required for formally proving our lower bounds. In Section \ref{sec:uniform} we establish a  stronger result than the one mentioned in the informal discussion, stating that  the competitiveness of any uniform search algorithm is in fact strictly more than logarithmic. This result is a special case of the main lower bound theorem,  namely, Theorem \ref{main-lower}, that we prove next, in  Section \ref{sec:main-lower}.   In the following section, namely, Section~\ref{sec:specific-lower}, we prove other lower bound results that directly follow from Theorem \ref{main-lower} as corollaries.

\subsection{An Informal Proof of a Weak Lower Bound for Uniform Algorithms}\label{sec:weaker-lower}
We now provide an informal proof explaining why there are no uniform search algorithms which are $\Phi(k)$-competitive, where $\Phi(k)=o(\log k)$ and $\Phi(k)$ is non-decreasing. So assume that the running  time is $(D + \frac{D^2}{k})\cdot \Phi(k)$, and let us show that   $\Phi(k)=\Omega(\log k)$.

We first fix some large number $K$ that will serve as an upper bound on the number of agents.
Let $T=K\cdot\Phi(K)$, and assume that the treasure is actually placed very far away, specifically, at distance $D>2T$. Note that this means, in particular,
that by time $2T$ the treasure has not been found yet\footnote{Although the treasure is placed far, in principle, it could have also been placed much closer to the source. In fact,  had the treasure been placed at a distance of roughly the number of agents $k<K$,  the algorithm would have had enough time $T$ to find it. This is the intuition behind of choice of the value of parameter $T$.}.

We consider ${\log K}/{2}$ rings $R_i$ centered at the source $s$ that span the region containing all points at distance between roughly $\sqrt{K}$ and roughly $K$ from the source. See Figure \ref{fig:Rings} below.
Specifically, let $j_0=\lfloor {\log K}/{2}\rfloor-2$, and consider the set of integers $I=\{1,2,\cdots, j_0\}$. For $i\in I$,  define the ring
  \[
  R_i=\{u\mid 2^{j_0+i}<d(u,s)\leq 2^{j_0+i+1}\}.
  \]
\begin{figure}[ht]
    \begin{center}
        \includegraphics[width=11cm]{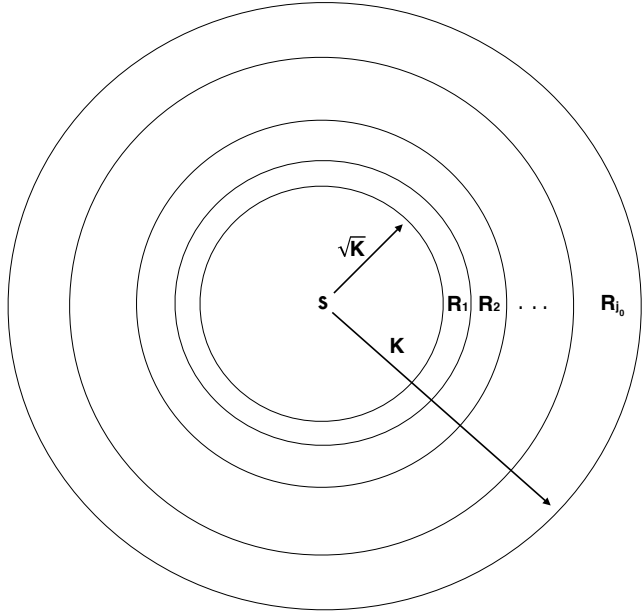}
    \end{center}
    \caption{Dividing the relevant region to rings $R_i$.}\label{fig:Rings}
\end{figure}
\noindent Fix  $i\in I$ and let us now assume that the number of agents is $k_i=2^{2i}$. Note that $k_i\leq K$.
Further, observe that the number of points in $R_i$ is
\[|R_i|=\Theta(2^{2j_0+2i})=\Theta(K\cdot k_i).
\] 
We also have $k_i\leq 2^{j_0+i} < d(u,s)$, for every $u\in R_i$. That is, any point in $R_i$ is at distance at least $k_i$ from the source. Hence, the expected time until the $k_i$ agents cover each $u\in R_i$  is at most 
\[
\left(d(u,s)+\frac{d(u,s)^2}{k_i}\right) \cdot\Phi(k_i)~\leq~ 2\frac{d(u,s)^2}{k_i} \cdot\Phi(k_i)~\leq~ K\cdot\Phi(K)~=~T.
\]
Recall that in a uniform algorithm, all agents start the search with the same state.
Thus, the expected number of nodes in $R_i$ that an agent visits by time $T$ is $\Omega(|R_i|/k_i)= \Omega(K)$. Observe,  this holds  for all $i\in I$.
Hence,  the expected number of nodes  that a single agent visits
by time $T$ is  $\Omega(K\cdot |I|)=\Omega(K\cdot \log K)$. On the other hand, this number is also at most $T$ because an agent can visit at most one node in one unit to time. As $T=K\cdot\Phi(K)$, this implies that the competitiveness is 
 $\Phi(K)=\Omega(\log K)$, as required.

\subsection{Definitions and Preliminary Observations}\label{sec:definition}
We now give preliminary definitions  that will be used to formally state and prove our lower bounds.
We begin with the following definition. A non-decreasing function $\Phi(x)$ is called {\em relatively-slow} if $\Phi(x)$  is sublinear ({\em i.e.,} $\Phi(x)=o(x)$) and if
there exist two positive constants $c_1$ and $c_2<2$ such that when
restricted to $x>c_1$, we have $\Phi(2x)<c_2\cdot \Phi(x)$. Note that this definition captures many natural sublinear functions\footnote{For example, note that the functions of the form $\alpha_0+\alpha_1\log^{\beta_1} x+\alpha_2\log^{\beta_2}\log x+ \alpha_3 2^{\log^{\beta_3 }\log x} \log x+\alpha_4\log^{\beta_4} x\log^{\beta_5}\log x$, (for
non-negative constants $\alpha_i$ and $\beta_i$, $i=1,2,3,4,5$ such that $\sum_{i=1}^4\alpha_i>0$)
 are all relatively-slow.}.
 
 We will consider different scenarios for exponentially increasing number of agents. Specifically, for every integer $i$, set $$k_i=2^i.$$ Given a  relatively-slow function $\Phi(x)$, let $i_0$ be the smallest integer such that for every integer $i\geq i_0$, we have $k_i>\Phi(k_i)$. (The existence of $i_0$ is guaranteed by the fact that $\Phi$ is sublinear). 
Let $T\geq 2^{2i_0}$ be a sufficiently large integer (in fact, we will typically take $T$ to infinity). Without loss of generality, we assume that $\log T$ is an even number. In all the lower bound proofs, we assume that the treasure is placed somewhere at  distance $$D:= 2T+1~.$$ Note that this means, in particular,
that by time $2T$ the treasure has not been found yet. Next, define the set of integers $$I= \left[i_0,\frac{\log T}{2}\right]~.$$
Fix an integer $i\in I$ and set
\[
\quad d_i\coloneqq \left\lfloor \sqrt{\frac{T\cdot k_i}{ \Phi(k_i)}} \right\rfloor.
\]
Further, let $B(d_i)$ denote the ball of radius $d_i$ around the source node, {\em i.e.,}$$B(d_i):=\{v\in G: d(v)\leq d_i\}~. $$ 
Note that the largest $k_i$ we consider is $\sqrt{T}$. Therefore, in the case that $i_0$ is small, the sizes of the balls of $B(d_i)$ range from roughly $T$ to roughly $T\cdot\frac{\sqrt{T}}{\Phi(\sqrt{T})}$.

For every set of nodes $S\subseteq B(d_i)$, let $\chi_i(S)$ denote the random variable
indicating the number of nodes in $S$ that were visited by at least one of the
$k_i$ agents by time $4T$.  (For short, for a singleton node $u$, we write $\chi_i(u)$ instead of  $\chi_i(\{u\})$.) Note that the value of  $\chi_i(S)$ depends on the values of the coins tosses made by the oracle for assigning the advices as well as on the values of the coins tossed by  the $k_i$ agents.
Finally, define the ring $$R_i :=B(d_i)\setminus B(d_{i-1})~.$$

\noindent The following claim states that the number of nodes in Ring $R_i$ is roughly $d_i^2$.\\
\begin{claim}
For each integer $i\in I$, we have $\Omega(|R_i|)=\Omega \left(\frac{T\cdot k_i}{\Phi(k_i)}\right)=\Omega(d_i^2).$
\end{claim}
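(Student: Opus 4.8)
The plan is to combine the exact lattice-point count of a Manhattan ball with the \emph{relatively-slow} hypothesis on $\Phi$. Recall that in $G=\mathbb{Z}^2$ with the Manhattan metric, the ball of radius $d$ satisfies $|B(d)| = 2d^2+2d+1 = \Theta(d^2)$. With this, the two ``outer'' equalities in the claim are mere bookkeeping. Since $d_i=\lfloor\sqrt{T k_i/\Phi(k_i)}\rfloor$ and, for $i\in I$, we have $T k_i/\Phi(k_i) > T \ge 2^{2i_0}$ (using $\Phi(k_i)<k_i$ for $i\ge i_0$), the floor strips off only a lower-order additive term; concretely, for $x\ge 4$ one has $\lfloor\sqrt{x}\rfloor^2\ge x/4$, so $d_i^2 = \Omega\!\left(T k_i/\Phi(k_i)\right)$. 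The reverse direction $T k_i/\Phi(k_i)=\Omega(d_i^2)$ is immediate from $d_i\le\sqrt{T k_i/\Phi(k_i)}$. Thus the substantive part is the first equality, $|R_i|=\Omega(d_i^2)$.

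First I would write, using the exact count,
\[
|R_i| = |B(d_i)|-|B(d_{i-1})| = 2\bigl(d_i^2-d_{i-1}^2\bigr) + 2\bigl(d_i-d_{i-1}\bigr),
\]
so it suffices to prove the \emph{separation} $d_{i-1}^2 \le (1-\delta)\,d_i^2$ for some absolute constant $\delta>0$; the lower-order term $O(d_i)$ is then absorbed because $d_i$ is large. To bound the ratio I substitute the definitions of $d_i$ and $d_{i-1}$ and drop the floors (which, as above, only affect lower-order terms), obtaining
\[
\frac{d_{i-1}^2}{d_i^2} \;\approx\; \frac{k_{i-1}/\Phi(k_{i-1})}{k_i/\Phi(k_i)} \;=\; \frac{k_{i-1}}{k_i}\cdot\frac{\Phi(k_i)}{\Phi(k_{i-1})} \;=\; \frac12\cdot\frac{\Phi(k_i)}{\Phi(k_{i-1})},
\]
where the last step uses $k_i=2^i=2k_{i-1}$.

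This is exactly where the relatively-slow property is invoked: since $k_i=2k_{i-1}$, the hypothesis gives $\Phi(k_i)=\Phi(2k_{i-1})<c_2\,\Phi(k_{i-1})$ with $c_2<2$, provided $k_{i-1}>c_1$. Plugging this in yields $d_{i-1}^2/d_i^2 < c_2/2 < 1$, so one may take $\delta := 1-c_2/2>0$. Hence $d_i^2-d_{i-1}^2=\Omega(d_i^2)$ and therefore $|R_i|=\Omega(d_i^2)$, completing the chain of equalities.

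The only delicate points, which I would treat carefully rather than gloss over, are two. First, I must justify that the floors in $d_i$ and $d_{i-1}$ perturb each squared quantity by at most a lower-order term; this is fine because $T k_i/\Phi(k_i)\ge T$ is taken arbitrarily large. Second, and this is the main (if minor) obstacle, I must verify that the relatively-slow inequality is legitimately applicable at $x=k_{i-1}$, i.e.\ that $k_{i-1}$ exceeds the threshold $c_1$ for every $i\in I$. This is handled by noting that $I$ begins at $i_0$ and that we are free to assume $i_0$ is large enough that $k_{i_0-1}>c_1$ (enlarging $i_0$ only shrinks the range $I$, which is harmless for the lower-bound construction). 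I expect this second point to require the most attention, since it is the sole place the structural assumption on $\Phi$ is used, and the entire separation $c_2/2<1$ depends on it.
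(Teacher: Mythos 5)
Your proof is correct and follows essentially the same route as the paper's: both lower-bound $|R_i|$ by the gap between consecutive ball radii and invoke the relatively-slow property (the constant $c_2<2$) at exactly the same point to get a constant-factor separation between $d_{i-1}$ and $d_i$. The only cosmetic difference is that you use the exact lattice count and the difference of squares $d_i^2-d_{i-1}^2$, whereas the paper writes $|R_i|=\Omega\bigl(d_{i-1}(d_i-d_{i-1})\bigr)$; your observation that $i_0$ should be taken large enough that $k_{i_0-1}>c_1$ is a detail the paper leaves implicit.
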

\begin{proof}
The claim follows from the following sequence of equations
\begin{align*}
|R_i|=\Omega\left(d_{i-1}(d_i-d_{i-1})\right)=\Omega\left(\sqrt{\frac{T\cdot k_{i-1}}{\Phi(k_{i-1})}}\cdot \left( \sqrt{\frac{T\cdot k_{i}}{\Phi(k_{i})}} - \sqrt{\frac{T\cdot k_{i-1}}{\Phi(k_{i-1})}} \right)\right)=\\
\Omega\left(\frac{T\cdot k_{i-1}}{\Phi(k_{i-1})}\cdot \left( \sqrt{ \frac{2\Phi(k_{i-1})}{\Phi(k_{i})}} - 1 \right)\right)=\Omega \left(\frac{T\cdot k_{i-1}}{\Phi(k_{i-1})}\right)=\Omega \left(\frac{T\cdot k_i}{\Phi(k_i)}\right)=\Omega(d_i^2)~,
\end{align*}
where the fourth and fifth equalities follow from the facts that $\Phi$ is relatively-slow.
\end{proof}

\noindent We now claim  that by time $4T$, the $k_i$ agents are expected to cover a constant fraction of Ring $R_i$.\\
\begin{claim}
For each integer $i\in I$, we have $\EE(\chi_i(R_i))=\Omega (|R_i|)~.$
\end{claim}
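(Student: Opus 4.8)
The plan is to bound $\EE(\chi_i(R_i))$ from below node-by-node. By linearity of expectation, $\EE(\chi_i(R_i)) = \sum_{u\in R_i}\Pr[u\text{ is visited by time }4T]$, so it suffices to show that for every $u\in R_i$ this probability is at least a fixed positive constant (I will obtain $1/2$); summing then gives $\EE(\chi_i(R_i))\ge |R_i|/2=\Omega(|R_i|)$.

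Fix $u\in R_i$, and note $d(u)\le d_i$ since $R_i\subseteq B(d_i)$. To control $\Pr[u\text{ visited by }4T]$ I would apply the competitiveness guarantee to the hypothetical instance in which the treasure is located at $u$. Running the $\Phi(k)$-competitive algorithm with $k_i$ agents against a treasure at $u$ has expected time at most $\Phi(k_i)\bigl(d(u)+d(u)^2/k_i\bigr)$. I would bound the two terms separately. Since $d_i^2\le T k_i/\Phi(k_i)$, the second term satisfies $\Phi(k_i)\,d(u)^2/k_i\le \Phi(k_i)\,d_i^2/k_i\le T$. For the first term, $\Phi(k_i)\,d(u)\le \Phi(k_i)\,d_i\le \sqrt{T\,k_i\,\Phi(k_i)}$; and because $i\in I$ forces $k_i=2^i\le\sqrt{T}$ while $i\ge i_0$ forces $\Phi(k_i)<k_i$, we get $k_i\,\Phi(k_i)<k_i^2\le T$, hence $\Phi(k_i)\,d(u)\le T$. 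Adding these, the expected time to find a treasure at $u$ is at most $2T$, so by the weak assumption recorded earlier (the treasure is found within twice the expected time with probability at least $1/2$) such a treasure is found by time $4T$ with probability at least $1/2$.

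The step I expect to require the most care is transferring this conclusion back to $\chi_i$, which concerns the coverage of the search process itself rather than any particular treasure placement. Here I would use a coupling: fix the advice assignment of $\cO_{k_i}$ and all the agents' coin tosses, and observe that each agent's trajectory up until the first visit to $u$ depends only on these random choices and not on where the treasure lies, since no agent has encountered a treasure before that moment. Consequently the first-visit time of $u$ has the same distribution whether we imagine the treasure sitting at $u$ or far away at distance $D=2T+1$, and the event ``$u$ is visited by time $4T$'' therefore has probability at least $1/2$. Summing over all $u\in R_i$ yields $\EE(\chi_i(R_i))\ge|R_i|/2=\Omega(|R_i|)$, completing the proof.
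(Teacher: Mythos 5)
Your proposal is correct and follows essentially the same route as the paper: apply the competitiveness guarantee to a hypothetical treasure at each $u\in R_i$ to bound the expected first-visit time by $2T$ (the paper combines the two terms via $d_i\geq\sqrt{T}\geq k_i$ rather than bounding them separately, but the arithmetic is equivalent), then use Markov's inequality and linearity of expectation. Your explicit coupling step, justifying why the first-visit time of $u$ is unaffected by the actual treasure placement, is left implicit in the paper but is the same underlying idea.
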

\begin{proof}  
To see why the claim holds, observe that since $i\in I$, and $k_i> \Phi(k_i)$, we have 
\[ k_i\leq 2^{\log T/2} = \sqrt{T}\leq d_{i}~,\]
which implies that for every node $u\in B(d_i)$ the expected time to visit $u$ is at most  $$\tau(d_i,k_i)\leq 2\frac{d_i^2\Phi(k_i)}{k_i}\leq 2T~.$$
Thus, by Markov's inequality, the probability that $u$ is visited by time $4T$ is at least $1/4$,
that is,~$\PP(\chi(u)= 1)\geq 1/4$. Hence, $\EE(\chi(u))\geq 1/4$.
By linearity of expectation, $\EE(\chi(R_i))=\sum_{u\in R_i} \EE(\chi(u))\geq |R_i|/4$.
\end{proof}

\noindent The following claim directly follows from the previous two claims.\\
\begin{claim}\label{claim:uniform}
For each integer $i\in I$, we have $$\EE(\chi_i(R_i))=\Omega \left(\frac{T\cdot k_i}{\Phi(k_i)}\right)~.$$
\end{claim}

\subsection{No Uniform Search Algorithm  is $\bigO(\log k)$-Competitive}\label{sec:uniform}
In this subsection we prove a lower bound for the competitiveness of uniform algorithm. The lower bound is stronger than the one mentioned in Section \ref{sec:weaker-lower}. The proof follows similar logic as in the weaker version, but uses the more refined choosing of parameters, as defined in Section \ref{sec:definition}. \\

\begin{theorem}\label{cor:lowerbound}
There is no uniform search algorithm that is $\bigO(\log k)$-competitive.
\end{theorem}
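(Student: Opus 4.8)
The plan is to strengthen the informal argument of Section~\ref{sec:weaker-lower} by exploiting the refined ring decomposition from Section~\ref{sec:definition}, and in particular to extract an extra factor beyond $\log K$ by summing the per-ring coverage contributions and comparing against the total work a single agent can perform. The heart of the argument is a double-counting bound. For each $i\in I$, Claim~\ref{claim:uniform} tells us that the $k_i$ agents collectively cover $\Omega(T k_i/\Phi(k_i))$ nodes of $R_i$ by time $4T$. Because the algorithm is \emph{uniform}, every agent begins the search in the same state and runs the identical protocol $\cP$ with no advice, so by symmetry each of the $k_i$ agents has the same expected coverage of $R_i$; hence a single agent is expected to cover $\Omega\!\left(\frac{T}{\Phi(k_i)}\right)$ nodes of $R_i$ by time $4T$.

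The crucial subtlety — and the reason the bound beats the informal $\Omega(\log K)$ — is that the coverage contributions from different rings come from the \emph{same} agent running the \emph{same} protocol, but under different hypothesized agent-counts $k_i$. Here I would invoke the uniformity in its strongest form: the trajectory of a single agent up to time $4T$ does not depend on $k$ at all (it has no advice and cannot communicate), so there is one fixed distribution over length-$4T$ trajectories, and for \emph{every} $i\in I$ simultaneously this one agent covers $\Omega(T/\Phi(k_i))$ nodes of $R_i$ in expectation. Since the rings $R_i=B(d_i)\setminus B(d_{i-1})$ are pairwise disjoint, the expected total number of distinct nodes a single agent visits by time $4T$ is at least
\[
\sum_{i\in I}\Omega\!\left(\frac{T}{\Phi(k_i)}\right)=\Omega\!\left(T\sum_{i=i_0}^{\log T/2}\frac{1}{\Phi(2^i)}\right).
\]
On the other hand, a single agent traverses at most one edge per round, so it visits at most $4T+1$ nodes by time $4T$. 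Combining the two bounds and cancelling $T$ yields $\sum_{i=i_0}^{\log T/2}\frac{1}{\Phi(2^i)}=\bigO(1)$, uniformly in $T$.

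Finally I would take $T\to\infty$ and derive the contradiction for $\Phi(k)=\bigO(\log k)$. If $\Phi(2^i)=\bigO(i)$, then $\sum_i 1/\Phi(2^i)=\Omega(\sum_i 1/i)$, which diverges as $T\to\infty$; this contradicts the uniform $\bigO(1)$ bound just derived, so no uniform algorithm can be $\bigO(\log k)$-competitive. (In fact the same computation shows that convergence of $\sum_j 1/\Phi(2^j)$ is necessary, matching the stronger statement alluded to in the introduction.) The main obstacle I anticipate is making the disjoint-ring summation fully rigorous: I must ensure the distinct-node counts across the $R_i$ genuinely add (they do, since the rings are disjoint and $\chi_i$ counts visited nodes), and I must justify that the single-agent coverage estimate $\Omega(T/\Phi(k_i))$ for ring $R_i$ is legitimately derived from Claim~\ref{claim:uniform} via the symmetry of identical, advice-free agents — the subtle point being that one fixed trajectory distribution must be charged against \emph{all} rings at once, which is exactly where uniformity (no dependence on $k$) is indispensable.
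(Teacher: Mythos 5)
Your proposal is correct and follows essentially the same route as the paper: the paper's proof of this theorem (via Lemma~\ref{cor:unifrom}) likewise combines Claim~\ref{claim:uniform} with the symmetry of identical, advice-free agents to charge a single agent's $k$-independent trajectory distribution with $\Omega(T/\Phi(k_i))$ expected visits in each disjoint ring $R_i$, sums over $i$, and compares against the trivial $4T$ bound to force convergence of $\sum_i 1/\Phi(2^i)$, contradicting $\Phi(k)=\bigO(\log k)$. The subtlety you flag — that one fixed trajectory distribution must be played against all rings simultaneously — is exactly the point where the paper invokes uniformity.
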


\noindent The theorem follows directly from the following lemma by setting $\Phi(k)=\bigO(\log k)$. \\

\begin{lemma}\label{cor:unifrom}
Consider a relatively-slow function $\Phi$ such that $\sum_{j=1}^\infty 1/\Phi(2^j)=\infty$.
There is no uniform search
algorithm that is $\Phi(k)$-competitive. 
\end{lemma}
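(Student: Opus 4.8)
The plan is to show that if $\Phi$ is relatively-slow and $\sum_{j\ge 1} 1/\Phi(2^j)=\infty$, then no uniform algorithm can be $\Phi(k)$-competitive, by deriving a contradiction from a ``total work'' accounting argument applied to a \emph{single} agent. The key observation, exactly as in the informal argument of Section~\ref{sec:weaker-lower}, is that uniformity forces every agent to behave identically in distribution: since no advice is given, an agent's trajectory up to time $4T$ is a fixed random process that does not depend on how many agents are actually present. I would fix a large $T$ and consider the rings $R_i$ for $i\in I=[i_0,\tfrac{\log T}{2}]$ defined in Section~\ref{sec:definition}, where ring $R_i$ is tuned to the scenario with $k_i=2^i$ agents.

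First I would invoke Claim~\ref{claim:uniform}, which gives $\EE(\chi_i(R_i))=\Omega\!\left(\frac{T\cdot k_i}{\Phi(k_i)}\right)$ for each $i\in I$; here $\chi_i(R_i)$ counts nodes of $R_i$ visited by \emph{some} of the $k_i$ agents by time $4T$. Since the union bound over $k_i$ agents can only help, and since in a uniform algorithm all $k_i$ agents are i.i.d.\ copies of one fixed single-agent process, linearity of expectation gives $\EE(\chi_i(R_i))\le k_i\cdot\EE(\text{single-agent visits to }R_i\text{ by time }4T)$. Combining, the expected number of distinct nodes of $R_i$ that one agent visits by time $4T$ is $\Omega\!\left(\frac{T}{\Phi(k_i)}\right)=\Omega\!\left(\frac{T}{\Phi(2^i)}\right)$. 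The crucial point is that this single-agent process is the \emph{same} random walk for every $i$ simultaneously, because uniformity means the agent has no knowledge distinguishing the scenarios.

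Now I would sum over $i\in I$. Since the rings $R_i=B(d_i)\setminus B(d_{i-1})$ are pairwise disjoint, the distinct nodes a single agent visits across the different rings are counted without overlap, so the total expected number of distinct nodes visited by one agent by time $4T$ is at least $\sum_{i\in I}\Omega\!\left(\frac{T}{\Phi(2^i)}\right)=\Omega\!\left(T\sum_{i=i_0}^{\log T/2} \frac{1}{\Phi(2^i)}\right)$. On the other hand, a single agent traverses at most one edge per unit time, so by time $4T$ it visits at most $4T$ nodes; hence $\EE(\text{distinct nodes})\le 4T$. Putting the two bounds together forces $\sum_{i=i_0}^{\log T/2} 1/\Phi(2^i)=\bigO(1)$, uniformly in $T$. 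But as $T\to\infty$ this partial sum tends to $\sum_{j\ge 1}1/\Phi(2^j)=\infty$ by hypothesis, a contradiction. This completes the proof.

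The step I expect to be the main obstacle is the reduction from the multi-agent cover count $\chi_i(R_i)$ to the single-agent distinct-visit count, and making fully rigorous that one fixed single-agent process simultaneously achieves the per-ring guarantee for \emph{all} $i\in I$ at once. The subtlety is that Claim~\ref{claim:uniform} is stated scenario-by-scenario (each $i$ uses its own number of agents $k_i$), whereas the contradiction needs one agent to accumulate visits across all rings during a single run of length $4T$. The resolution leans precisely on uniformity: the single-agent trajectory distribution is independent of $k$, so the expected number of $R_i$-nodes it visits by time $4T$ is a well-defined quantity of that one process, and the bound from Claim~\ref{claim:uniform} applies to it for every $i$. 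I would be careful that the time horizon $4T$ and the requirement $k_i\le\sqrt{T}\le d_i$ (so that every node of $B(d_i)$ has expected visit time at most $2T$) hold uniformly for all $i\in I$, which is exactly why the index set is truncated at $\log T/2$.
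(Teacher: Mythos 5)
Your proposal is correct and follows essentially the same route as the paper's proof: invoke Claim~\ref{claim:uniform} per ring, use uniformity (identically distributed agents) to divide by $k_i$ and obtain the per-agent bound $\Omega(T/\Phi(2^i))$ for a single fixed trajectory distribution, sum over the disjoint rings, and contradict the $4T$ cap on nodes a single agent can visit. The paper states the reduction from the $k_i$-agent cover count to the single-agent count more tersely, whereas you spell it out via $\EE(\chi_i(R_i))\le k_i\cdot\EE(\text{single-agent visits})$; this is exactly the intended argument.
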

\begin{proof}
Suppose for contradiction that there exists a uniform search algorithm with running time less
than $\tau(D,k)\coloneqq \Phi(k)\cdot (D+D^2/k)$, where $\Phi$ is relatively-slow. 
Our goal is to show that $\sum_{j=1}^\infty 1/\Phi(2^j)$ converges.

We  use the assumption that the algorithm is uniform. In this case, all agents are not only executing the same algorithm but also start with the same state. Hence,  Claim \ref{claim:uniform} implies that
for every integer $i\in I$, the expected number of
nodes in $R_i$ that each agent visits by time $4T$ is $\Omega\left(\frac{T}{\Phi(k_i)}\right)$.
Since the rings $R_i$ are pairwise disjoint, the linearity of expectation
implies that the expected number of nodes that an agent visits by time $4T$ is
\[
\Omega\left(\sum_{i=i_0}^{\log
T/2}\frac{T}{\Phi(k_i)} \right)=T\cdot \Omega\left(\sum_{i=i_0}^{\log
T/2}\frac{1}{\Phi(2^i)} \right).
\]
On the other hand, this number is at most $4T$ because an agent can visit at most one node in each time unit. Consequently, $\sum_{i=i_0}^{\log T/2}\frac{1}{\Phi(2^i)}$ must converge as $T$
goes to infinity.
This completes the proof of Lemma~\ref{cor:unifrom}.
\end{proof}

\subsection{The Main Lower Bound Theorem}\label{sec:main-lower}

Here we present the main theorem of this section, namely, Theorem \ref{main-lower}. The theorem is a generalization of Lemma \ref{cor:unifrom}. 
The statement of  the theorem uses  definitions as given in Section \ref{sec:definition}.\\

\begin{theorem}\label{main-lower}
There exist constants $i_0$ and $c$ such that the following holds. Let $k$ denote the number of agents. For every $\Phi(k)$-competitive search algorithm using advice of size $\Psi(k)$, where $\Phi(\cdot)$ is relatively-slow  and  $\Psi(\cdot)$ is non-decreasing, and for every sufficiently large integer $T$, we have:
 $$\frac{1}{2^{\Psi(T)}}\sum_{i=i_0}^{\log T} \frac{1}{\Phi(2^i)}<c~.$$
\end{theorem}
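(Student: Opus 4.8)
\textbf{Proof proposal for Theorem~\ref{main-lower}.}

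The plan is to generalize the counting argument of Lemma~\ref{cor:unifrom} to algorithms that use advice. The key difficulty compared to the uniform case is that agents no longer start from a single common state: the oracle $\cO_T$ may assign up to $2^{\Psi(T)}$ distinct advice strings, and agents holding different advice may behave differently, so we can no longer say that ``each agent'' is expected to cover a fixed fraction of each ring $R_i$. The central idea to overcome this is a \emph{pigeonhole over advice classes}. I would partition the $k_i$ agents (for each $i\in I$) according to the advice string they receive. Since there are at most $2^{\Psi(T)}$ possible advice values, and since $\Psi$ is non-decreasing so that $\Psi(k_i)\le \Psi(T)$ for all $i\le \log T$, at least one advice class contains at least a $2^{-\Psi(T)}$ fraction of the agents. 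The agents in a fixed advice class are genuinely identical (same protocol, same advice, differing only in their coins), so within such a class the uniform-style reasoning applies.

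Concretely, I would proceed as follows. First, invoke Claim~\ref{claim:uniform}: for each $i\in I$, the $k_i$ agents are expected to cover $\Omega(T k_i/\Phi(k_i))=\Omega(|R_i|)$ nodes of $R_i$ by time $4T$. Second, for the scenario with $k_i$ agents, fix a dominant advice class $a_i^\star$ carrying at least a $2^{-\Psi(T)}$ fraction of the agents; pick a single representative agent from that class. Because all agents in the class are identical, the expected number of nodes of $R_i$ that the representative visits is at least a $1/|\text{class}|$-type share of the expected coverage — more carefully, summing over the class members and using that coverage of $R_i$ is at most the sum of individual visit counts, the \emph{average} agent in the dominant class visits $\Omega\!\big(\tfrac{1}{2^{\Psi(T)}}\cdot\tfrac{|R_i|}{k_i}\big)=\Omega\!\big(\tfrac{1}{2^{\Psi(T)}}\cdot\tfrac{T}{\Phi(k_i)}\big)$ nodes of $R_i$ in expectation. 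The subtle point here, and the one I expect to be the main obstacle, is that the representative agent must be the \emph{same} agent across all $i\in I$: the rings $R_i$ live in a single grid and the bound we will contradict is on the total number of nodes one agent can visit in $4T$ time. One clean way to handle this is to run all scenarios $i\in I$ against a common worst case by noting that an algorithm is $\Phi$-competitive for \emph{every} $k$ simultaneously, and to argue about a single fixed advice value: choose the advice string $a^\star$ that is dominant in the largest number of scenarios, or alternatively average the ring-coverage inequality over a uniformly random agent and a uniformly random member of its advice class, so that the ``single agent'' whose total work is bounded is really an average agent under a fixed advice value.

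Third, having established that (for a fixed advice value, hence a fixed deterministic-up-to-coins agent) the expected number of nodes visited in $R_i$ is $\Omega\!\big(\tfrac{T}{2^{\Psi(T)}\Phi(2^i)}\big)$, I would sum over $i$ using that the rings $R_i$ are pairwise disjoint and apply linearity of expectation, exactly as in Lemma~\ref{cor:unifrom}. This yields that a single agent is expected to visit
\[
\Omega\!\left(\frac{T}{2^{\Psi(T)}}\sum_{i=i_0}^{\log T}\frac{1}{\Phi(2^i)}\right)
\]
nodes by time $4T$. On the other hand, an agent traverses at most one edge per unit time, so it visits at most $4T+1=\bigO(T)$ nodes by time $4T$. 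Combining the lower and upper bounds and dividing through by $T$ gives
\[
\frac{1}{2^{\Psi(T)}}\sum_{i=i_0}^{\log T}\frac{1}{\Phi(2^i)}=\bigO(1),
\]
which is precisely the claimed inequality for a suitable absolute constant $c$ (and the $i_0$ from Section~\ref{sec:definition}). I would close by checking that the relatively-slow hypothesis on $\Phi$ and the non-decreasing hypothesis on $\Psi$ are used exactly where claimed — the former to guarantee the ring-size estimate of the preceding Claim and the threshold structure of $I$, the latter to ensure $\Psi(k_i)\le\Psi(T)$ so that the single dominant-advice fraction $2^{-\Psi(T)}$ is valid uniformly across all $i\in I$.
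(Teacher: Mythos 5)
Your overall architecture matches the paper's: the rings $R_i$, the coverage bound of Claim~\ref{claim:uniform}, a pigeonhole over the at most $2^{\Psi(\cdot)+1}$ possible advice strings, and a final charge against the $4T$ time budget of a single agent. But the step you yourself flag as ``the main obstacle'' --- producing a \emph{single} fixed advice string whose agent can be charged for all rings simultaneously --- is exactly where the argument breaks, and neither of your proposed fixes closes it. First, a local error: the advice class that is dominant \emph{by size} need not do any of the covering work; the oracle could concentrate all the useful searching in a tiny class, so the average member of the largest class may visit nothing. You must pigeonhole over classes by their share of $\EE(\chi_i(R_i))$, not by cardinality (the class responsible for a $2^{-\Psi(T)-1}$ fraction of the coverage has at most $k_i$ members, which recovers your per-agent bound). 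Second, and more seriously: your per-ring conclusion is ``there exists an advice $a_i^\star$ whose agent covers $\Omega\bigl(T/(2^{\Psi(T)}\Phi(2^i))\bigr)$ nodes of $R_i$,'' with $a_i^\star$ depending on $i$. Choosing the advice that is best in the largest number of scenarios only covers a $2^{-\Psi(T)-1}$ fraction of the indices of $I$, and the adversary decides which fraction, so the surviving partial sum $\sum_{i\in I'}1/\Phi(2^i)$ can be far smaller than the full sum; this does not yield the theorem. The other natural completion --- a second pigeonhole over advice values to pass from a per-ring best advice to one advice for all rings --- costs another factor of $2^{\Psi(T)}$ and proves only $\frac{1}{2^{2\Psi(T)}}\sum_i 1/\Phi(2^i)<c$, which degrades every downstream advice lower bound by a factor of $2$ (e.g.\ $\frac{1}{2}\log\log k$ instead of $\log\log k$ for $\bigO(1)$-competitiveness) and destroys the tightness claims.

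The paper avoids the double loss by ordering the two pigeonholes differently. For each ring it first pigeonholes over the $k_i$ agents to find one agent $j$ with $\EE(W(j,i))=\Omega(T/\Phi(k_i))$, and then observes that, since $W(j,i)$ is a mixture over advice values of the fixed-advice quantities $M(a,i)$, some fixed advice $a_i$ satisfies $\EE(M(a_i,i))\geq\EE(W(j,i))$ --- with no $2^{\Psi}$ loss at this stage. It then collects $A=\{a_i\mid i\in I\}$, runs an ``imaginary'' execution with one agent per advice in $A$, sums the disjoint ring coverages to get an expected total of $T\cdot\Omega(\sum_i 1/\Phi(2^i))$, and only now applies a single pigeonhole over $|A|\leq 2^{\Psi(\sqrt{T})+1}$ to extract one advice whose agent covers a $1/|A|$ fraction of that total; bounding this by $4T$ and substituting $T\mapsto T^2$ gives the stated inequality. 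If you restructure your argument this way --- deferring the division by the number of advice strings until after the rings have been summed --- your proof goes through.
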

\bigskip
\begin{remark}
Note that in the case of uniform algorithms, we have advice size being zero, {\em i.e.,} $\Psi(\cdot)\equiv 0$, and therefore the requirement that the sum in Theorem \ref{main-lower} converges is the same as the requirement in Lemma \ref{cor:unifrom}. Naturally, as the advice size grows, the competitiveness $\Phi(\cdot)$ may decrease without violating the inequality. 
\end{remark}

\medskip
\begin{proof}[Proof (of Theorem \ref{main-lower})]
Recall that  Claim \ref{claim:uniform} holds for any algorithm, whether it is uniform or not. Therefore, for each integer $i\in I$,  the expected number of
nodes in $R_i$  visited by at least one of the $k_i$ agents by time $4T$ is: $$\EE(\chi_i(R_i))=\Omega \left(\frac{T\cdot k_i}{\Phi(k_i)}\right)~.$$

\noindent We now consider the case when running with $k_i$ agents after being assigned advice by the oracle. Let $W(j, i)$ denote the random variable indicating the number of nodes in $R_i$ visited by the $j$'th agent by time $4T$.
It follows  that for every integer $i\in I$, we have:
$$
\sum_{j=1}^{k_i} \EE(W(j,i))~=\EE\left(\sum_{j=1}^{k_i} W(j,i)\right)~\geq~\EE(\chi_i(R_i))~=~ \Omega \left(\frac{T\cdot k_i}{\Phi(k_i)}\right).
$$
By the pigeon-hole principle,  there exists an integer $j\in \{1,2,\cdots, k_i\}$   for which
$$
\EE(W(j,i))=\Omega(T/\Phi(k_i)).
$$
\noindent In other words, in expectation, the $j'$th agent in the group of $k_i$ agents covers at least $\Omega(T/\Phi(k_i))$ nodes in $R_i$ by time $4T$.

Let  $\ADV$ denote the set of all possible advice used for agents, in all the scenarios with $k_i$ agents, where $i\in I$. Note that the advice given by the oracle
to any of the $k_i$ agents must use at most $\Psi(k_i)\leq \Psi(\sqrt{T})$ bits. For every fixed integer $x$, 
the number of different possible advices that can be encoded on precisely $x$ bits is~$2^x$, therefore  the total number of different possible advice used
is $\sum_{x=1}^{\Psi(\sqrt{T})} 2^x\leq 2^{\Psi(\sqrt{T})+1}$. In other words, we have:
\begin{claim}\label{claim:ADV}
$|\ADV|\leq 2^{\Psi(\sqrt{T})+1}$.
\end{claim}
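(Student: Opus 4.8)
The plan is to establish the bound by a direct counting argument over binary advice strings, leveraging only the monotonicity of the advice size function and the fact that the scenarios under consideration involve at most $\sqrt{T}$ agents.

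First I would unwind the definition of $\ADV$: it is the collection of all advice strings that the oracle might assign, ranging over every scenario with $k_i=2^i$ agents for $i\in I$. By the very definition of the advice size function, in the scenario with $k_i$ agents the randomized oracle $\cO_{k_i}$ assigns to each of its agents an advice string of length at most $\Psi(k_i)$ bits, this being the maximum taken over all advice assignments of $\cO_{k_i}$ and over the $k_i$ agents.

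The one point that requires care is identifying the largest advice length that can possibly occur across these scenarios. Since $I=[i_0,\log T/2]$, the largest number of agents appearing in any of them is $k_{\log T/2}=2^{\log T/2}=\sqrt{T}$. Because $\Psi(\cdot)$ is assumed non-decreasing, every $i\in I$ satisfies $\Psi(k_i)\le\Psi(\sqrt{T})$. Consequently every advice string belonging to $\ADV$ is a binary string of length at most $\Psi(\sqrt{T})$.

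Finally I would simply count such strings. There are exactly $2^x$ distinct binary strings of length precisely $x$, so the number of binary strings of length at most $\Psi(\sqrt{T})$ is
\[
\sum_{x=1}^{\Psi(\sqrt{T})} 2^x \;=\; 2^{\Psi(\sqrt{T})+1}-2 \;<\; 2^{\Psi(\sqrt{T})+1},
\]
which yields $|\ADV|\le 2^{\Psi(\sqrt{T})+1}$, as claimed. There is no genuine obstacle here: the argument is elementary counting, and the only substantive (yet entirely routine) step is the appeal to the monotonicity of $\Psi$ together with the range of $I$ to certify that no advice string in $\ADV$ exceeds $\Psi(\sqrt{T})$ bits.
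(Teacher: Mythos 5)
Your proposal is correct and follows essentially the same route as the paper: bound every advice length by $\Psi(\sqrt{T})$ using the monotonicity of $\Psi$ and the fact that $k_i\le\sqrt{T}$ for $i\in I$, then count binary strings via $\sum_{x=1}^{\Psi(\sqrt{T})}2^x\le 2^{\Psi(\sqrt{T})+1}$. No differences worth noting.
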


\noindent Now, for each advice  $a\in \ADV$, let $M(a,i)$ denote the random variable indicating the number of nodes in $R_i$ that an agent with advice $a$ visits by time $4T$. Note that the value of $M(a,i)$ depends only  on the values of the coin tosses made by the agent.
On the other hand,
note that the value of $W(j,i)$ depends on the results of the coin tosses made by the oracle assigning the advice, and the results of the coin tosses made by the agent that uses the assigned advice.
Recall,
the oracle may assign an advice to agent $j$ according to a distribution that is different than the distributions used for other agents. However, regardless of the distribution used by the oracle for agent $j$, it must be the case that there exists an advice $a_i\in \ADV$, for which $
\EE(M(a_i
,i))\geq \EE(W(j,i))$. Hence, we obtain:
$$
\EE(M(a_i
,i))=\Omega(T/\Phi(k_i)).
$$
\noindent In other words, in expectation, an agent with advice $a_i$ covers at least $\Omega(T/\Phi(k_i))$ nodes in $R_i$ by time $4T$.

 Let $A=\{a_i
\mid  i\in I\}$.
Consider now an ``imaginary'' scenario\footnote{The scenario is called imaginary, because,  instead of letting the oracle assign the advice
for the agents, we impose a particular advice to each agent, and let the agents perform the search with our advices. Note that even though such a scenario cannot occur by the definition of the model, each individual agent with advice $a$ cannot distinguish this case from the case that the number of agents was some $k'$ and the oracle assigned it the advice $a$.} in which  we execute the search algorithm with $|A|$ agents, each having a different advice in~$A$. That is, for each advice $a\in A$, we have a different agent executing the algorithm using advice $a$.
 For every set $S$ of nodes in the grid $G$, let $\hat{\chi}(S)$ denote the random variable
indicating the number of nodes in $S$ that were visited by at least one of these
$|A|$ agents (in the ``imaginary'' scenario) by time $4T$. Let $\hat{\chi}:=\hat{\chi}(G)$ denote the random variable
indicating the total number of nodes that were visited by at least one of these
agents by time $4T$.

By definition, for each $i\in I$, the expected number of nodes in $R_i$ visited by at least one of these $|A|$ agents is at least the expected number of nodes in $R_i$ visited by the particular agent with advice $a_i$. This number is $\EE(M(a_i,i))$. Hence, we obtain:
$$\EE(\hat{\chi}(R_i))\geq \EE(M(a_i
,i))=\Omega(T/\Phi(k_i)).$$
Since the sets $R_i$ are pairwise disjoint, the linearity of expectation
implies that the expected number of nodes covered by these agents by time $4T$ is
$$
\EE(\hat{\chi})~\geq~\sum_{i=i_0}^{\frac{1}{2}\log
T}  \EE(\hat{\chi}(R_i))   ~=~  \Omega\left(\sum_{i=i_0}^{\frac{1}{2}\log
T}\frac{T}{\Phi(k_i)} \right) $$  $$=T\cdot \Omega\left(\sum_{i=i_0}^{\frac{1}{2}\log
T}\frac{1}{\Phi(2^i)}\right).
$$
Observe that $A\subseteq \ADV$, and therefore, by Claim \ref{claim:ADV}, we have $|A|\leq 2^{\Psi(\sqrt{T})+1}$. Hence, once more by linearity of expectation,    there must exist an advice $\hat{a}\in A$, such that the expected number of nodes that an agent with advice $\hat{a}$ visits by time $4T$ is
$$
T\cdot \Omega\left(\sum_{i=i_0}^{\frac{1}{2}\log
T}\frac{1}{\Phi(2^i)\cdot 2^{\Psi(\sqrt{T})}} \right).
$$
Since each agent may visit at most one node in one unit of time, it follows that,  for every $T$ large enough, the sum $$\frac{1}{ 2^{\Psi(\sqrt{T})}}\sum_{i=i_0}^{\frac{1}{2}\log T} \frac{1}{\Phi(2^i)}$$ is at most some fixed constant.  The proof of the theorem now follows
by replacing the variable $T$ with~$T^2$.
\end{proof}

\subsection{More Lower Bounds}\label{sec:specific-lower}
We next describe important lower bounds results that directly follow from Theorem \ref{main-lower} as corollaries.
We first consider the case of uniform algorithm, in which  the advice size is zero, that is, $\Psi(k)=0$.

\subsubsection{Any $\bigO(\log k)$-Competitive Algorithm Requires Advice of Size $\log\log\log k-\bigO(1)$}
We have established that there is no uniform algorithm that is $\bigO(\log k)$ competitive. We now show that, in fact, $\bigO(\log k)$-competitive
algorithms require advice of at least $\log\log\log k-\bigO(1)$ bits.\\
\begin{theorem}\label{cor:logk}
There is no $\bigO(\log k)$-competitive search algorithm that uses advice of size $\log\log\log k-\omega(1)$.
\end{theorem}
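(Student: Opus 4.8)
The plan is to derive Theorem~\ref{cor:logk} as a direct corollary of the main lower bound, Theorem~\ref{main-lower}, by instantiating it with the competitiveness $\Phi(k)=\bigO(\log k)$ and analyzing what constraint the resulting inequality places on the advice size $\Psi(\cdot)$. First I would check that $\Phi(k)=c'\log k$ is indeed relatively-slow: it is sublinear, and $\Phi(2x)=c'(\log x+1)<c_2\,\Phi(x)$ for any $c_2$ strictly between $1$ and $2$ once $x$ is large enough, so the hypotheses of Theorem~\ref{main-lower} are met. Plugging $\Phi(2^i)=\Theta(i)$ into the key inequality gives
\[
\frac{1}{2^{\Psi(T)}}\sum_{i=i_0}^{\log T}\frac{1}{\Phi(2^i)}
=\frac{1}{2^{\Psi(T)}}\cdot\Omega\!\left(\sum_{i=i_0}^{\log T}\frac{1}{i}\right)
=\frac{1}{2^{\Psi(T)}}\cdot\Omega(\log\log T)<c,
\]
where the inner sum is a harmonic sum and therefore grows like $\ln(\log T)=\Theta(\log\log T)$.

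Rearranging this inequality is the heart of the argument: it forces $2^{\Psi(T)}=\Omega(\log\log T)$, and hence $\Psi(T)\geq \log\log\log T-\bigO(1)$. The theorem concerns $\Psi$ as a function of the number of agents $k$, so I would note that the inequality holds for every sufficiently large $T$, and the number of agents ranges up to $\sqrt{T}$ in the construction behind Theorem~\ref{main-lower}; identifying the relevant agent-count scale with $T$ (up to the polynomial rescaling $T\mapsto T^2$ already absorbed at the end of the proof of Theorem~\ref{main-lower}) yields $\Psi(k)\geq\log\log\log k-\bigO(1)$. The contrapositive is exactly the statement of the theorem: any algorithm whose advice size is $\log\log\log k-\omega(1)$ would make the left-hand side unbounded as $T\to\infty$, contradicting the fixed-constant upper bound $c$, so no such $\bigO(\log k)$-competitive algorithm can exist.

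I expect the main obstacle to be bookkeeping rather than conceptual difficulty, since Theorem~\ref{main-lower} does all the heavy lifting. The one genuinely delicate point is the estimate $\sum_{i=i_0}^{\log T}1/i=\Omega(\log\log T)$ together with tracking how the argument of $\Psi$ relates to $k$ versus $T$: the main theorem is stated with $\Psi(T)$ and with the sum running to $\log T$, so I must be careful that the ``$\log\log\log$'' that emerges is measured in the same variable (the agent count $k$) in which the theorem is phrased. Concretely, the divergence of $\sum 1/i$ is only logarithmic, so a single logarithm is ``used up'' turning the harmonic sum into $\log\log T$, and a second application of logarithm (from solving $2^{\Psi}\geq\Omega(\log\log T)$) produces the triple logarithm $\log\log\log k$; verifying that no constant factors or off-by-one shifts in the index range $[i_0,\log T]$ disturb this chain is the step requiring the most care. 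Everything else is a routine substitution into the established inequality.
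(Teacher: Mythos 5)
Your proposal is correct and follows the paper's own route exactly: the paper likewise instantiates Theorem~\ref{main-lower} with $\Phi(2^i)=\bigO(i)$, observes that the harmonic sum gives $\sum_{i}1/\Phi(2^i)=\Omega(\log\log k)$, and concludes that $\log\log k/2^{\Psi(k)}$ must stay bounded, ruling out $\Psi(k)=\log\log\log k-\omega(1)$. Your extra care about verifying relative slowness and the $T$-versus-$k$ bookkeeping is sound but not something the paper dwells on.
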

\begin{proof}
Assume that the competitiveness is  $\Phi(k)=\bigO(\log k)$. Since $\Phi(2^i)=\bigO(i)$, we have
$$\sum_{i=1}^{\log k} 1/\Phi(2^i)=  \sum_{i=1}^{\log k} 1/ i= \Omega(\log\log k).$$
According to Theorem~\ref{main-lower}, the sum $$\frac{1}{2^{\Psi(k)}}\sum_{i=1}^{\log k} 1/\Phi(2^i)=\Omega(\log\log k/2^{\Psi(k)})$$ must converge as $k$ goes to infinity.
 In particular, we cannot have $\Psi(k)=\log\log\log k-\omega(1)$.
\end{proof}
\subsubsection{The Minimal Advice Size of Algorithms with Sub-Logarithmic Competitiveness}
We next show that the minimal advice size quickly grows to $\Omega(\log\log k)$, as the competitiveness becomes sub-logarithmic.
In particular, we show that $\bigO(1)$-competitive algorithms require advice size of $\log\log k- \bigO(1)$ bits. This result shows that the upper bound of $\log \log k+\bigO(1)$ bits mentioned in Theorem \ref{cor:known} is in fact tight.\\

\begin{theorem}\label{cor:lower-advice}
Consider  any  $\Phi(k)$-competitive search algorithm using advice of size $\Psi(k)$.  Assume that $\Phi(\cdot)$ is relatively-slow  and that $\Psi(\cdot)$ is non-decreasing.
Then, ${\Psi(k)}=\log\log k - \log \Phi(k) - \bigO(1)$. In particular:
\begin{itemize}
\item
Any $\bigO(1)$-competitive search algorithm  must have advice  size $\Psi(k)= \log\log k -\bigO(1)$.
\item
Let $\epsilon<1$ be a positive constant. Any $\bigO(\log^{1-\epsilon} k)$-competitive search algorithm must have advice  size $\Psi(k)= \epsilon\log\log k -\bigO(1)$.
\item
Let $\epsilon<1$ be a positive constant. Any $\bigO(\frac{\log k}{2^{\log^\epsilon\log k}})$-competitive search algorithm requires advice of size
 $\Psi(k)= \log^{\epsilon}\log k-\bigO(1)$.
\end{itemize}
\end{theorem}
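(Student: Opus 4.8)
The plan is to obtain Theorem~\ref{cor:lower-advice} as a direct corollary of the master inequality in Theorem~\ref{main-lower}, so essentially no new combinatorial work is needed. Recall that Theorem~\ref{main-lower} guarantees constants $i_0$ and $c$ such that every $\Phi(k)$-competitive algorithm with non-decreasing advice size $\Psi(\cdot)$ satisfies, for all sufficiently large $T$,
$$\frac{1}{2^{\Psi(T)}}\sum_{i=i_0}^{\log T}\frac{1}{\Phi(2^i)}<c.$$
Rearranging isolates the advice as $2^{\Psi(T)}>\frac{1}{c}\sum_{i=i_0}^{\log T}\frac{1}{\Phi(2^i)}$, so it suffices to give a clean lower bound on the harmonic-type sum on the right and then take logarithms.

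The key observation is that, since $\Phi$ is relatively-slow it is in particular non-decreasing, whence $\Phi(2^i)\le\Phi(2^{\log T})=\Phi(T)$ for every $i\le\log T$. Thus each summand is at least $1/\Phi(T)$, and the sum has $\log T-i_0+1=\Omega(\log T)$ terms, giving
$$\sum_{i=i_0}^{\log T}\frac{1}{\Phi(2^i)}\ \ge\ \frac{\log T-i_0+1}{\Phi(T)}\ =\ \Omega\!\left(\frac{\log T}{\Phi(T)}\right).$$
Combining this with the rearranged inequality yields $2^{\Psi(T)}=\Omega(\log T/\Phi(T))$, and taking logarithms gives $\Psi(T)\ge\log\log T-\log\Phi(T)-\bigO(1)$. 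Renaming the free variable $T$ as $k$ establishes the general bound $\Psi(k)=\log\log k-\log\Phi(k)-\bigO(1)$, where the ``$=$'' is read as a lower bound up to the additive $\bigO(1)$ term.

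The three itemized statements then follow by substituting the respective competitiveness function and computing $\log\Phi(k)$, after first confirming each is relatively-slow so that Theorem~\ref{main-lower} applies. For $\Phi(k)=\bigO(1)$ we have $\log\Phi(k)=\bigO(1)$, giving $\Psi(k)=\log\log k-\bigO(1)$. For $\Phi(k)=\bigO(\log^{1-\epsilon}k)$ we have $\log\Phi(k)=(1-\epsilon)\log\log k+\bigO(1)$, so $\Psi(k)=\epsilon\log\log k-\bigO(1)$. For $\Phi(k)=\bigO(\log k/2^{\log^\epsilon\log k})$ we have $\log\Phi(k)=\log\log k-\log^\epsilon\log k+\bigO(1)$, so the two $\log\log k$ terms cancel and $\Psi(k)=\log^\epsilon\log k-\bigO(1)$.

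I do not expect a genuine obstacle here; the only points requiring care are bookkeeping. First, one must verify the relatively-slow hypothesis for each of the three functions: constants and $\log^{1-\epsilon}k$ fall under the footnote's templates, while $\log k/2^{\log^\epsilon\log k}$ needs a short direct check that it is eventually non-decreasing and that $\Phi(2x)/\Phi(x)\to1<2$ (both hold because $\log^\epsilon\log x$ changes negligibly when $x$ doubles). Second, the third corollary hinges on correctly reading $\log^\epsilon\log k$ as $(\log\log k)^\epsilon$ and tracking the cancellation of the leading $\log\log k$ terms; this is the most error-prone computation but is otherwise routine.
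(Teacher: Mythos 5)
Your proposal is correct and follows essentially the same route as the paper: both derive the bound by combining Theorem~\ref{main-lower} with the estimate $\sum_{i}1/\Phi(2^i)\geq \log k/\Phi(k)$ (valid since $\Phi$ is non-decreasing), then take logarithms and substitute the three competitiveness functions. Your extra care about verifying the relatively-slow hypothesis for each case is a reasonable addition but does not change the argument.
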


\begin{proof}
Theorem~\ref{main-lower} implies that for every $k$, we have: $$\frac{1}{2^{\Psi(k)}}\sum_{i=1}^{\log k} \frac{1}{\Phi(2^i)}=\bigO(1).$$ On the other hand, since $\Phi$ is non-decreasing, we have:
$$\sum_{i=1}^{\log k} \frac{1}{\Phi(2^i)}\geq \frac{\log k}{\Phi(k)}.$$
Hence,
$$ \frac{\log k}{2^{\Psi(k)}\cdot\Phi(k)}=\bigO(1).$$
The main part of the theorem therefore follows: $$\Psi(k)=\log\log k - \log \Phi(k) - \bigO(1).$$
The first item follows by taking $\Psi(k)=\bigO(1)$,  the second item follows by taking $\Psi(k)=\bigO(\log^{1-\epsilon} k)$ and the third item follows by taking $\Psi(k)=\bigO(\frac{\log k}{2^{\log^\epsilon\log k}})$.
\end{proof}

\section{Further Upper Bounds}\label{sec:upper-bound}

In Theorem \ref{cor:known} we showed that there exists an $\bigO(1)$-competitive algorithm that uses $\log\log k +1$ bits of advice. In Theorem \ref{cor:lower-advice} we proved that this bound is tight. In this section we give  two additional tight upper bounds. In Section \ref{sec:upper-uniform} we show that the lower bound for the competitiveness of uniform algorithms mentioned in Theorem \ref{cor:unifrom} is tight. In Section \ref{sec:upper-log} we show that the lower bound of $\log\log\log k$ for the advice size of $\bigO(\log k)$-competitive algorithms mentioned in Theorem \ref{cor:logk} is tight.

\subsection{A Tight Upper Bound for the Competitiveness of Uniform Algorithms}\label{sec:upper-uniform}

We now turn our attention to the case of uniform algorithms. Consider a non-decreasing function   $\Phi\colon\Natural\to\Natural$  such that $\sum_{j=1}^\infty 1/\Phi(2^j)<\infty$. We suggest Algorithm $\cA_{\uniform}$ below as an optimal algorithm for the uniform case.

\renewcommand{\baselinestretch}{1.1}
\begin{algorithm}

Each agent performs the following\;
\For(the epoch $\ell$){$\ell$ from $0$ to $\infty$}{
\For(the stage $i$){$i$ from $0$ to $\ell$}{
\For(the phase $j$){$j$ from $0$ to $i$}{
\begin{smallitemize}
\item $k_j\longleftarrow2^j$\;
\item $D_{i,j}\longleftarrow  \sqrt{2^{(i+j)}/\Phi(2^j)}$
\item
Go to the node $u$ chosen uniformly at\break random among the nodes
in $B(D_{i,j})$\;
\item
Perform a spiral search starting at $u$\break for time $t_{i,j}\coloneqq2^{i+2}/\Phi(2^j)$\;
\item
Return to the source\;
\end{smallitemize}
  }
 }
}

\caption{The uniform algorithm $\cA_{\uniform}$.}\label{alg:u-alg}
\end{algorithm}
\noindent The structure of Algorithm $\cA_{\uniform}$  finds resemblance with the structure of Algorithm $\cA_{\kk}$ described in Section \ref{sec:upper-opt}. Indeed, the basic operations are the same: Go to a node chosen uniformly at random in some ball around the source,  perform a spiral search of some length, and return to the source. In contrast to Algorithm $\cA_{\kk}$ that needs the value of $k$ (or some approximation of it) to specify the length of the spiral search, Algorithm $\cA_{\uniform}$ decides the length of the spiral search based only on the loops indices. 

The theorem below implies that the lower bound mentioned in Theorem \ref{cor:unifrom} is tight.\\

\begin{theorem}\label{thm:comp}
Consider a non-decreasing function   $\Phi\colon\Natural\to\Natural$  such that $\sum_{j=1}^\infty 1/\Phi(2^j)<\infty$.
Algorithm $\cA_{\uniform}$ is a  uniform search
algorithm which is  $\Phi(k)$-competitive.
\end{theorem}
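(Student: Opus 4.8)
The plan is to show that Algorithm $\cA_{\uniform}$ finds the treasure in expected time $\bigO(\Phi(k)\cdot(D+D^2/k))$ by identifying, within the triple loop, a set of iterations that mimic the behavior of the non-uniform algorithm $\cA_{\kk}$ tuned to the \emph{correct} number of agents $k$. The key observation is that the phase index $j$ ranges over all integers up to $i$, so in particular, for the true value $k$, some phase uses $k_j = 2^j$ with $2^j$ a $2$-approximation of $k$; for that phase the quantities $D_{i,j}$ and $t_{i,j}$ coincide, up to constants, with the ball radius and spiral-search length that algorithm $\cA_{\kk}$ would use. The point of the uniform algorithm is that it does not know $k$, so it pays for trying all phases $j$, but the convergence of $\sum_j 1/\Phi(2^j)$ ensures that this overhead is only a constant factor (in the per-epoch cost) beyond the cost of the single ``correct'' phase.

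First I would set up the relevant scales: let $s = \lceil \log D\rceil$ and let $j^\ast$ be the index with $2^{j^\ast} = \Theta(k)$. I would call an iteration $(\ell,i,j)$ \emph{good} if $j = j^\ast$ and the ball $B(D_{i,j})$ is large enough to contain the target, i.e. $D_{i,j} \geq D$, which amounts to $2^{(i+j^\ast)}/\Phi(2^{j^\ast}) \gtrsim D^2$, that is $i \gtrsim 2s - j^\ast + \log\Phi(k)$. For a good iteration, exactly as in the proof of Theorem~\ref{thm:know-k}, the probability that a single agent lands within distance $\sqrt{t_{i,j}}/2$ of the target is $\Omega(|B(\varsigma,\sqrt{t_{i,j}}/2)|/|B(D_{i,j})|) = \Omega(t_{i,j}/D_{i,j}^2) = \Omega(\Phi(k)^2/(k\cdot \Phi(k))) = \Omega(\Phi(k)/k)$ --- wait, I must recompute this ratio carefully, since here $t_{i,j}=2^{i+2}/\Phi(2^j)$ and $D_{i,j}^2 = 2^{i+j}/\Phi(2^j)$, so the ratio is $\Theta(2^{i}/2^{i+j}) = \Theta(2^{-j}) = \Theta(1/k)$ for $j = j^\ast$. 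Thus each good iteration succeeds with probability $\Omega(1/k)$ \emph{per agent}, hence $\Omega(1)$ across all $k$ agents, and the success probabilities compound geometrically over successive good iterations, exactly paralleling the large-phase analysis of $\cA_{\kk}$.

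Next I would bound the time. The cost of a single iteration $(\ell,i,j)$ is $\psi(i,j)=\bigO(D_{i,j}+t_{i,j}) = \bigO\bigl(\sqrt{2^{i+j}/\Phi(2^j)} + 2^{i}/\Phi(2^j)\bigr)$. Summing over a full epoch, the inner sum over $j$ from $0$ to $i$ is where the hypothesis $\sum_j 1/\Phi(2^j)<\infty$ is used: the spiral-search contribution sums to $2^i \sum_{j}1/\Phi(2^j) = \bigO(2^i)$, and the travel contribution is dominated similarly. Summing over $i$ and $\ell$ then gives that all iterations up to the point where $\Theta(r)$ good iterations have been completed take time $\bigO(2^{s+r} + 2^{2(s+r)}/k)\cdot\Phi(k)$, and the geometric decay of the failure probability $\gamma^{-\Omega(r^2)}$ makes the expected-time series converge, yielding $\tau_{\cA_{\uniform}}(D,k)=\bigO(\Phi(k)(D+D^2/k))$.

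The main obstacle I anticipate is the bookkeeping of the triple loop: unlike $\cA_{\kk}$, here the ``good'' iterations for a fixed target are spread across epochs and stages in a non-obvious pattern, and one must verify both that good iterations recur often enough (so that the number of good iterations grows roughly linearly in the epoch count, giving the $\gamma^{-r^2/2}$-type bound) and that the total work accumulated before completing $r$ good iterations is only a $\Phi(k)$ factor above the optimum. The delicate point is matching the epoch/stage indexing so that the per-epoch cost telescopes correctly while the convergence of $\sum_j 1/\Phi(2^j)$ is invoked at exactly the right place; I would be careful to confirm that the extra factor introduced by iterating over \emph{all} $i$ and $\ell$ (not just the good ones) contributes only to the constant hidden in $\bigO(\cdot)$ and to the $\Phi(k)$ competitive factor, and not an additional logarithmic blowup.
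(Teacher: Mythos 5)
Your proposal is correct and follows essentially the same route as the paper's proof: you identify the phase $j^\ast$ with $2^{j^\ast}=\Theta(k)$ as the ``correct'' guess, show each good iteration succeeds with probability $\Omega(2^{-j^\ast})$ per agent and hence $\Omega(1)$ over all $k$ agents, use the convergence of $\sum_j 1/\Phi(2^j)$ to bound each stage $i$ by $\bigO(2^i)$ and each epoch $\ell$ by $\bigO(2^\ell)$, and conclude via the $\Omega(\ell^2)$ good stages accumulated by epoch $s+\ell$ and the resulting $\gamma^{-\Omega(\ell^2)}$ failure decay. The only detail you omit is the preliminary reduction to the case $k\le D$ (when $k>D$ one analyzes only $D$ of the agents), which is needed to guarantee that the good phase index $j^\ast$ actually falls within the inner loop's range $0\le j\le i$ for the relevant stages.
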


\begin{proof}
Assuming $\Phi(\cdot)$ as given in the theorem, let
us analyze the performance  of Algorithm $\cA_{\uniform}$ and show that its expected
running time is $T(D,k)\coloneqq\Phi(k)\cdot(D+D^2/k)$.
We first note that it suffices to prove the statement
when $k\le D$.
Indeed, if $k>D$, then we may consider only $D$ agents among
the $k$ agents and obtain an upper bound on the running time of $T(D,D)$,
which is less than $2T(D,k)$.

\begin{assert}\label{as:1}
For every integer $\ell$, the time until all agents complete epoch $\ell$ is $\bigO(2^\ell)$.
\end{assert}
\noindent For the assertion to hold, it is sufficient to prove that stage $i$
in epoch $\ell$ takes time $\bigO(2^{i})$.
To this end, notice that phase~$j$ takes time $\bigO(D_{i,j}+2^{i}/\Phi(2^j))$ which is at most $\bigO(2^{(i+j)/2}+2^{i}/\Phi(2^j))$.
Therefore, stage $i$ takes time
\[
\bigO\left(\sum_{j=0}^{i}(2^{(i+j)/2}+2^{i}/\Phi(2^j))\right)=\bigO(2^{i}).
\]
This establishes Assertion~\ref{as:1}.

Set $s\coloneqq\lceil\log((D^2\cdot \Phi(k))/k)\rceil+1$. Recall that we consider the case where $D\geq k$. Therefore $2^s\geq D^2\cdot \Phi(k))/k \geq k$. 
In particular, there exists $j\in\{0,\ldots,s\}$ such that $2^{j}\le
k<2^{j+1}$.\\
\begin{assert}\label{as:2}
There exists a constant $c\in(0,1)$ such that
the probability that none of the agents  finds the treasure while executing  phase $j$ of stage
$i$ is at most $c$.
\end{assert}
\noindent To see this, first note that the treasure is inside the ball $B(D_{i,j})$, for every $i\geq s$.
Indeed, since $i\geq s$,  we have
$$D_{i,j}=\sqrt{\frac{2^{i+j}}{\Phi(2^j)}}\ge\sqrt{\frac{2^{s+j}}{\Phi(2^j)}}\geq \sqrt{\frac{D^2\cdot \Phi(k)}{k}\cdot \frac{2^{j+1}}{\Phi(2^j)}}>D.$$
The total number of nodes in the ball $B(D_{i,j})$ is
$\bigO(D_{i,j}^2)=O(2^{i+j}/\Phi(2^j))$,
and at least a third of the ball of radius $\sqrt{t_{i,j}}$ around the
treasure is contained in $B(D_{i,j})$.
Consequently, the probability for an agent $a$ to choose a node $u$ in
a ball of radius $\sqrt{t_{i,j}}$ around the treasure in phase $j$ of stage
$i$ is
\[
\Omega\left(t_{i,j}/|B(D_{i,j})|\right)=\Omega\left(\frac{2^i/\Phi(2^j)}{2^{i+j}/\Phi(2^j)}\right)=\Omega\left(2^{-j}\right).
\]
If this event happens, then the treasure is found during the corresponding spiral search of agent $a$.
As a result, there exists a positive constant $c'$ such that
the probability that none of the $k$ agents finds the treasure
during phase $j$ of stage $i$ is at most
$(1-c'\cdot 2^{-j})^k\le(1-c'\cdot 2^{-j})^{2^j}\le e^{-c'}$.
This establishes Assertion~\ref{as:2}.

By the time that all agents have completed their respective
epoch $s+\ell$, all agents have performed
$\Omega(\ell^2)$ stages $i$ with $i\ge s$. By Assertion~\ref{as:2},
for each such $i$, the probability that the treasure is not found
during stage  $i$ is at most $c$ for some constant $c<1$.
Hence, the probability that the treasure is not found
during any of those $\Omega(\ell^2)$ stages is at most $1/d^{\ell^2}$ for some
constant $d>1$.
Assertion~\ref{as:1} ensures that all agents complete epoch $s+\ell$ by time $\bigO(2^{s+\ell})$,
so the expected running time is
$\bigO(\sum_{\ell=0}^\infty 2^{s+\ell}/d^{\ell^2})=\bigO(2^s)=\bigO(D^2\cdot \Phi(k)/k)$, as desired.
\end{proof}

\noindent Setting $\Phi(x)\coloneqq \lceil \log^{1+\varepsilon} x\rceil$ yields the following corollary.\\
\begin{corollary}\label{cor:comp}
For every positive constant $\varepsilon$, there exists a uniform search
algorithm that is $\bigO(\log^{1+\varepsilon}k)$-competitive.
\end{corollary}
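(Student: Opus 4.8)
The plan is to invoke Theorem~\ref{thm:comp} with the explicit choice $\Phi(x)\coloneqq\lceil\log^{1+\varepsilon}x\rceil$, so that the entire task reduces to checking that this particular $\Phi$ meets the two hypotheses of that theorem: that $\Phi$ is non-decreasing, and that $\sum_{j=1}^\infty 1/\Phi(2^j)$ converges. Once both are verified, Theorem~\ref{thm:comp} applies verbatim and hands back exactly the desired algorithm.

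First I would dispatch monotonicity, which is immediate: the map $x\mapsto\log^{1+\varepsilon}x$ is non-decreasing for $x\ge1$ since $1+\varepsilon>0$, and composing with the ceiling function preserves this property, so $\Phi$ is non-decreasing and takes positive integer values on all the arguments $2^j$ with $j\ge1$ that actually occur in the sum.

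The only real computation is the convergence check. Substituting $x=2^j$ (with $\log$ taken base $2$) gives $\Phi(2^j)=\lceil j^{1+\varepsilon}\rceil\ge j^{1+\varepsilon}$, whence
\[
\sum_{j=1}^\infty\frac{1}{\Phi(2^j)}\;\le\;\sum_{j=1}^\infty\frac{1}{j^{1+\varepsilon}}.
\]
The right-hand side is a $p$-series with exponent $p=1+\varepsilon>1$, which converges by the integral test (equivalently, by the standard convergence criterion for $p$-series). Thus $\sum_{j=1}^\infty 1/\Phi(2^j)<\infty$, as required.

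With both hypotheses in hand, Theorem~\ref{thm:comp} guarantees that Algorithm $\cA_{\uniform}$ run with this $\Phi$ is a \emph{uniform} search algorithm that is $\Phi(k)$-competitive; since $\Phi(k)=\lceil\log^{1+\varepsilon}k\rceil=\bigO(\log^{1+\varepsilon}k)$, this is precisely the claimed $\bigO(\log^{1+\varepsilon}k)$-competitiveness. There is essentially no obstacle beyond the routine $p$-series estimate—all the substance lives in Theorem~\ref{thm:comp}, and the role of the corollary is merely to exhibit a concrete, natural $\Phi$ whose summability holds exactly when $\varepsilon>0$; note that the same estimate fails at $\varepsilon=0$ (the harmonic sum diverges), which is consistent with the matching lower bound of Theorem~\ref{cor:unifrom}.
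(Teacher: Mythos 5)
Your proposal is correct and matches the paper's argument exactly: the paper also obtains the corollary by setting $\Phi(x)\coloneqq\lceil\log^{1+\varepsilon}x\rceil$ in Theorem~\ref{thm:comp}, with the convergence of $\sum_{j\ge1}1/\Phi(2^j)$ being the routine $p$-series check you carried out. Nothing further is needed.
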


\subsection{Upper Bounds on the Advice Size of (Sub-)Logarithmic Competitive Algorithms}\label{subsec:upper-bound}
Let $\Psi(k)$ be some increasing function such that: $\log\log\log k\leq \Psi(k)\leq \log\log k$. In this section we show that advice of size $\bigO(\Psi(k))$ enables to obtain an $\bigO\left(\frac{\log k}{2^{\Psi(k)}}\right)$-competitive algorithm. This will show that the lower bounds on the advice size mentioned in Theorem \ref{cor:lower-advice}
are asymptotically tight. Before proving this result, we first show, in Section \ref{sec:upper-log},  that  advice of size $\log\log\log k+\bigO(1)$ allows to construct an $\bigO(\log k)$-competitive algorithm. Establishing this latter result shows that the lower bound  mentioned in Theorem~\ref{cor:logk} is tight. 

Recall that Algorithm $\cA^*$ from Corollary \ref{corr:opt} is an $\bigO(1)$-competitive algorithm that assumes the knowledge of a 2-approximation to $k$ and runs in time $\bigO(D+D^2/k)$.  
Both upper bounds mentioned in this section are based on a simple iterative simulation of the Algorithm $\cA^*$  for different guesses of $k$. The guesses are constructed from the advice. As in all our non-uniform upper bounds, all $k$ agents receive the same advice that, in fact, serves as some approximation to $k$. The overhead in running time, or the competitiveness, corresponds to the number of guesses that the algorithm needs to check. This number can be estimated directly from the quality of the advice as an approximation to $k$.

\subsubsection{Using $\log\log\log k+\bigO(1)$ Bits of Advice to Obtain an $\bigO(\log k)$-Competitive Algorithm}\label{sec:upper-log}

We describe an $\bigO(\log k)$-competitive algorithm, called $\cA_{\log k}$, that uses $\Psi(k)=\log\log\log k+\bigO(1)$  bits of advice. 
Without loss of generality, we may assume that  $k$ is sufficiently large, specifically, $k\geq 4$. Given $k$ agents, the oracle simply encodes the advice $\bigO_k=\lfloor\log\log k\rfloor$ at each agent. Note that since $k\geq 4$, we have $\bigO_k\geq 1$. Observe also that the advice $\bigO_k$ can be encoded using $\log\log\log k+\bigO(1)$ bits.

Given an advice $\alpha$, each agent first constructs $S(\alpha)$ -- the set of integers $k$ such that
$k$ is a power of~2 and $O_k=\alpha$. That is:
$S(\alpha)=\{2^i \mid i\in \Natural \mbox{~and~} {2^\alpha}\leq i < {2^{\alpha+1}}\}.$
\noindent In particular, for all $k\in S(\alpha)$, we have $\lfloor\log\log k\rfloor=\alpha$.
It then enumerates the elements in $S(\alpha)$ from small to large, namely, $S(\alpha)=\{ k_{1}(\alpha),k_{2}(\alpha),\cdots, k_{|S(\alpha)|}(\alpha)  \},$ where $k_{i}(\alpha)$ is the $i$'th smallest integer in $S(\alpha)$, for $i\in\{1,2, \cdots, |S(\alpha)|\}$. 
The following two observations describe two properties of $S(\alpha)$. The first observation states that $S(\alpha)$ contains an element which is a 2-approximation of the true number of agents $k$, and the second observation states that the number of elements in $S(\alpha)$ is logarithmic. \\ 
\begin{observation}\label{obs:2-approx}
There exists $k_{i^*}(\alpha)\in S(\alpha)$, such that  $k_{i^*}(\alpha)\leq k< 2k_{i^*}(\alpha)$.
\end{observation}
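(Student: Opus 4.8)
The plan is to exhibit the witness $k_{i^*}(\alpha)$ explicitly rather than argue for its existence abstractly. The natural candidate is the largest power of two not exceeding $k$, namely $2^{\lfloor \log k\rfloor}$. By the very definition of the floor function this element automatically satisfies the approximation guarantee $2^{\lfloor\log k\rfloor}\le k< 2^{\lfloor\log k\rfloor+1}=2\cdot 2^{\lfloor\log k\rfloor}$, so the only real content of the observation is to verify that this particular power of two actually lies in the set $S(\alpha)$.

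To do that, I would set $i\coloneqq\lfloor\log k\rfloor$ and check the condition $2^\alpha\le i<2^{\alpha+1}$, since that is precisely what defines membership of $2^i$ in $S(\alpha)$. Recall $\alpha=\lfloor\log\log k\rfloor$, so unwinding the outer floor gives $\alpha\le\log\log k<\alpha+1$, which upon exponentiating becomes $2^\alpha\le\log k<2^{\alpha+1}$. The upper bound on $i$ is then immediate: $i=\lfloor\log k\rfloor\le\log k<2^{\alpha+1}$. For the lower bound I would use that $2^\alpha$ is itself a nonnegative integer (this is where the standing assumption $k\ge 4$, hence $\alpha\ge 1$, enters), so monotonicity of the floor yields $i=\lfloor\log k\rfloor\ge\lfloor 2^\alpha\rfloor=2^\alpha$.

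Combining the two bounds gives $2^\alpha\le i<2^{\alpha+1}$, hence $2^i\in S(\alpha)$, and setting $k_{i^*}(\alpha)=2^i$ completes the argument. I do not anticipate a genuine obstacle here: the statement is essentially bookkeeping about nested floors of base-two logarithms. The one place that requires a moment's care is the lower bound $i\ge 2^\alpha$, where one must note that $2^\alpha$ is an integer so that $\lfloor\log k\rfloor\ge 2^\alpha$ follows from $\log k\ge 2^\alpha$ — this is exactly why the construction of $S(\alpha)$ places the exponents in the integer window $[2^\alpha,2^{\alpha+1})$, which is wide enough to capture $\lfloor\log k\rfloor$ for every $k$ mapped to the advice value $\alpha$.
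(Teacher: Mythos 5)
Your argument is correct: taking $i=\lfloor\log k\rfloor$ gives the approximation guarantee for free, and your verification that $2^\alpha\le i<2^{\alpha+1}$ (using that $2^\alpha$ is an integer so the floor can be dropped in the lower bound) is exactly the bookkeeping needed. The paper itself omits the proof, declaring it straightforward, and your write-up is the natural argument it has in mind.
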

\bigskip
\begin{observation}\label{obs:log}
$|S(\alpha)|=\bigO(\log k)$.
\end{observation}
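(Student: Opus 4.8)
The plan is to read off the size of $S(\alpha)$ directly from its definition and then substitute the value of the advice $\alpha$. First I would note that, by definition, $S(\alpha)=\{2^i \mid i\in\Natural \text{ and } 2^\alpha\leq i<2^{\alpha+1}\}$, so its elements are exactly the powers $2^i$ whose exponent $i$ is an integer in the half-open interval $[2^\alpha,2^{\alpha+1})$. Since the map $i\mapsto 2^i$ is injective on the integers, distinct exponents give distinct elements, and hence $|S(\alpha)|$ equals the number of integers in that interval. That count is precisely $2^{\alpha+1}-2^\alpha=2^\alpha$.

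Next I would plug in the actual value assigned by the oracle, namely $\alpha=\lfloor\log\log k\rfloor$. Since $\alpha\leq\log\log k$, we immediately get $|S(\alpha)|=2^\alpha\leq 2^{\log\log k}=\log k$, which is the desired bound $|S(\alpha)|=\bigO(\log k)$.

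I do not expect any genuine obstacle here; the argument is a one-line cardinality computation. The only points requiring a moment of care are that the interval $[2^\alpha,2^{\alpha+1})$ is half-open, so it contains exactly $2^\alpha$ integers rather than $2^\alpha+1$, and that the exponential indexing introduces no overcounting. It is worth remarking that the bound is tight up to a constant factor: from $\alpha=\lfloor\log\log k\rfloor>\log\log k-1$ one obtains $|S(\alpha)|=2^\alpha>\tfrac12\log k$, so $|S(\alpha)|=\Theta(\log k)$, which is exactly what is needed downstream to bound the number of guesses of $k$ that the simulation must try.
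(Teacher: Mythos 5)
Your proof is correct, and it takes a more direct route than the paper's. You compute the exact cardinality: the exponents $i$ range over the $2^{\alpha+1}-2^\alpha=2^\alpha$ integers in the half-open interval $[2^\alpha,2^{\alpha+1})$, and substituting $\alpha=\lfloor\log\log k\rfloor\le\log\log k$ gives $|S(\alpha)|=2^\alpha\le\log k$. The paper never computes $|S(\alpha)|$ exactly; instead it lets $k_{\min}$ and $k_{\max}$ be the extreme values of $k$ receiving the advice $\alpha$, bounds $|S(\alpha)|\le\log k_{\max}$, and argues from $\lfloor\log\log k_{\max}\rfloor=\lfloor\log\log k_{\min}\rfloor$ that $\log k_{\max}<2\log k_{\min}\le 2\log k$. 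The two arguments rest on the same underlying fact (that a single advice value covers a set of $k$'s whose logarithms span at most a factor of $2$), but yours is cleaner and yields the sharper explicit bound $|S(\alpha)|\le\log k$; your closing remark that $|S(\alpha)|>\tfrac12\log k$, so the count is in fact $\Theta(\log k)$, is also correct and is implicit in the paper's inequality $\log k_{\max}<2\log k_{\min}$. Either version suffices for the downstream use, where only the $\bigO(\log k)$ bound on the number of guesses matters.
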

\noindent The proof of Observation \ref{obs:2-approx} is straightforward. To see why Observation \ref{obs:log} holds, let $k_{\max}$ denote the maximum value such that $\bigO_{k_{\max}}=\alpha$, and let $k_{\min}$ denote the minimum value such that $\bigO_{k_{\min}}=\alpha$.  We know, $S(\alpha)\leq \log k_{\max}$.
Since $\bigO_{k_{\max}}=\bigO_{k_{\min}}$, we have $\log\log k_{\max}<\log\log k_{\min}+1$, and hence $\log k_{\max}<2\log k_{\min}$. The  observation follows as $k_{\min}\leq k$.

Recall that Algorithm $\cA^*$ from Corollary \ref{corr:opt} is an $\bigO(1)$-competitive algorithm that assumes the knowledge of a 2-approximation to $k$ and runs in time $\bigO(D+D^2/k)$. The above observations imply that given the advice $\alpha$,  there are $\bigO(\log k)$ options to guess a 2-approximation for $k$. Indeed, these guesses are the elements in $S(\alpha)$. The algorithm  $\cA_{\log k}$ we construct simply loops over all these guesses, executing Algorithm $\cA^*$  for a limited number of steps, in each iteration, assuming the number of agents in the current guess. This incurs an overhead of $\bigO(\log k)$ (number of guesses) over the optimal competitiveness of Algorithm $\cA^*$. More precisely,
the following algorithm is executed by an agent with advice $\alpha$.

\renewcommand{\baselinestretch}{1.1}
\begin{algorithm}

Construct $S(\alpha)=\{2^i \mid i\in \Natural \mbox{~and~} {2^\alpha}\leq i < {2^{\alpha+1}}\}$\\
Enumerate the elements in $S(\alpha)$, that is, $S(\alpha)=\{ k_{1}(\alpha),k_{2}(\alpha),\cdots, k_{|S(\alpha)|}(\alpha) \}$\\
\For(the \emph{stage} $j$ ){$j$ from $1$ to $\infty$}{
  \For(the \emph{phase} $i$ ){$i$ from $1$ to $|S(\alpha)|$}{
  \begin{smallitemize}
  \item
Execute $2^j$ steps of Algorithm $\cA^*$ assuming the number of agents is $k_i(\alpha)$.
 \item
Return to the source $s$. 
  \end{smallitemize}
  }
 }

\caption{Execution of $\cA_{\log k}$ by an agent with advice $\alpha$.}\label{alg:k-alg}
\end{algorithm}

\bigskip

%

\begin{theorem}\label{th:upperbound} 
Algorithm $\cA_{\log k}$ is   an $\bigO(\log k)$-competitive algorithm that uses $\log\log\log k+\bigO(1)$ bits of advice.
\end{theorem}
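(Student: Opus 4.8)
The plan is to establish the two claimed properties of Algorithm $\cA_{\log k}$ separately: the advice size and the competitiveness. The advice size is immediate from the construction, since the oracle assigns $\bigO_k = \lfloor \log\log k\rfloor$ to each agent, and this integer can be encoded using $\log\log\log k + \bigO(1)$ bits. So the bulk of the work is bounding the running time by $\bigO(\log k)\cdot(D + D^2/k)$.

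For the competitiveness, the key structural fact is supplied by Observation~\ref{obs:2-approx}: the set $S(\alpha)$ contains some element $k_{i^*}(\alpha)$ that is a $2$-approximation of the true number of agents $k$. By Corollary~\ref{corr:opt}, running Algorithm $\cA^*$ under the (correct) guess $k_{i^*}(\alpha)$ is $\bigO(1)$-competitive, i.e.\ it finds the treasure in expected time $\bigO(D + D^2/k)$. First I would isolate, within each stage $j$, the single phase $i = i^*$ in which the guess is the good $2$-approximation; during that phase the agents run $2^j$ steps of $\cA^*$ with the correct parameter. I would then argue that once $2^j$ exceeds (a constant times) the $\bigO(1)$-competitive running time of $\cA^*$ — which happens for all $j \ge j_0$ where $2^{j_0} = \Theta(D + D^2/k)$ — each execution of phase $i^*$ independently finds the treasure with at least constant probability (using Markov's inequality on the expected running time of $\cA^*$, exactly as in the analysis of Theorem~\ref{thm:know-k}). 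Hence after $\bigO(1)$ such good stages the treasure is found with high constant probability, and summing a geometrically-decaying tail gives expected running time dominated by the first few stages beyond $j_0$.

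The remaining ingredient is accounting for the overhead incurred by the ``wrong'' guesses. In each stage $j$, the agent loops over all $|S(\alpha)|$ phases, spending $2^j$ steps (plus the return-to-source cost, which is at most $\bigO(2^j)$ as well) in each. By Observation~\ref{obs:log}, $|S(\alpha)| = \bigO(\log k)$, so one full stage $j$ costs $\bigO(2^j \log k)$ time. Since the good phase $i^*$ effectively finds the treasure within $\bigO(1)$ stages past $j_0$, the total expected time is $\bigO(\log k)\cdot \bigO(2^{j_0}) = \bigO(\log k)\cdot \bigO(D + D^2/k)$, which is exactly $\bigO(\log k)$-competitive. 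I would make this precise by writing the expected running time as a sum over stages $j \ge j_0$ of (cost of one stage) times (probability the treasure was not yet found), i.e.\ $\sum_{j\ge j_0} \bigO(2^j \log k)\cdot \gamma^{-(j-j_0)}$ for some constant $\gamma > 1$, and observing this geometric series converges to $\bigO(2^{j_0}\log k)$.

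The main obstacle I anticipate is the bookkeeping around the interleaving of phases within a stage: because $\cA^*$ is itself an iterative algorithm whose progress is made in its own stages/phases, I must be careful that executing only $2^j$ steps of $\cA^*$ (rather than letting it run to completion) still captures enough of $\cA^*$'s progress to inherit its $\bigO(1)$-competitive success probability. The clean way to handle this is to note that $\cA^*$ reaches its ``good'' regime — where a single large phase finds the treasure with constant probability — once it has run for $\Omega(D + D^2/k)$ steps, and to choose $j_0$ so that $2^{j_0}$ comfortably exceeds this threshold; for every $j \ge j_0$ the truncation to $2^j$ steps loses nothing essential, and the constant-probability-per-stage argument goes through unchanged. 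A secondary subtlety is that the treasure may be found during a wrong-guess phase or during a good phase with $j < j_0$, but since we only need an upper bound on the expected time, it suffices to bound the time under the pessimistic assumption that progress is made only in the good phases with $j \ge j_0$.
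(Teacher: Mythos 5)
Your advice-size argument and your use of Observations~\ref{obs:2-approx} and~\ref{obs:log} together with Corollary~\ref{corr:opt} match the paper's setup, but your route to the time bound differs from the paper's and, as written, has one step that fails: the final summation. You bound the expected time by $\sum_{j\ge j_0}\bigO(2^j\log k)\cdot\gamma^{-(j-j_0)}$ ``for some constant $\gamma>1$'' and assert that this geometric series converges to $\bigO(2^{j_0}\log k)$. But a constant success probability $c$ per good stage only yields $\gamma=1/(1-c)$, which nothing in your argument forces to exceed~$2$; and for $\gamma\le 2$ the series $\sum_{j\ge j_0}2^{j}\gamma^{-(j-j_0)}$ diverges, because the per-stage cost doubles at least as fast as the survival probability decays. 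This is exactly the trap that the proof of Theorem~\ref{thm:know-k} sidesteps by counting the number of completed large phases (quadratic in $r$, giving survival probability $\gamma^{-r^2/2}$ against cost $2^{s+r}$). Your argument is repairable along the same lines: in stage $j_0+m$ the good phase runs $2^m\cdot\Theta(D+D^2/k)$ steps, so by Markov's inequality its failure probability is $\bigO(2^{-m})$, and the product of failure probabilities over the first $m$ good stages decays like $2^{-\Theta(m^2)}$, which does beat the $2^m$ growth of the stage cost. As stated, though, the bound is not established.

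For comparison, the paper's proof avoids per-stage success probabilities entirely. It defines a \emph{correct step} to be any step executed during phase $i^*$, observes that the correct steps constitute a faithful execution of $\cA^*$ with the correct parameter so that the expected number of correct steps until the treasure is found is $\bigO(D+D^2/k)$, and then notes that by time $T\log k$ the algorithm has performed $\Omega(T)$ correct steps (each stage costs $\bigO(2^j\log k)$ and contains $2^j$ of them). The $\bigO(\log k)$ overhead is then just this dilution factor, with no geometric series to sum. A secondary point: your phrase ``independently finds the treasure with constant probability'' implicitly reads each phase $i^*$ as a fresh restart of $\cA^*$, whereas the paper's correct-step accounting reads the phases as resuming $\cA^*$ where it left off. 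Either reading of the pseudocode can be made to work, but they are justified differently (independent Markov trials versus accumulated expected running time), so you should commit to one explicitly.
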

\begin{proof}
 The advice given to each agent is $\alpha=\lfloor\log\log k\rfloor$, which as mentioned, can be encoded using $\log\log\log k+\bigO(1)$ bits. Our goal is therefore to show that $\cA_{\log k}$ is $\bigO(\log k)$-competitive. 
 
By Observation~\ref{obs:2-approx}, there exists $k_{i^*}(\alpha)\in S(\alpha)$, such that  $k_{i^*}(\alpha)\leq k< 2k_{i^*}(\alpha)$. 
A {\em correct} step is a step executed in phase $i^*$ (for any stage). In such steps, agents execute a step 
of the non-uniform algorithm $\cA^*$ assuming the number of agents is $k_{i^*}(\alpha)$. Note  that in each stage $j$, all agents execute $2^j$ consecutive correct steps. This means that the expected number of correct steps until one of the agents finds the treasure is at most twice the expected running time of Algorithm 
$\cA^*$, which is $\bigO(D+D^2/k)$, by Corollary~\ref{corr:opt}. By time $T\log k$, Algorithm $\cA_{\log k}$ makes 
$\Omega(T)$ correct steps. It follows that the expected running time of Algorithm $\cA_{\log k}$ is $\bigO((D+D^2/k)\cdot \log k)$, or in other words, the competitiveness is $\bigO(\log k)$.
\end{proof}
\subsubsection{Upper Bounds on the Advice Size of Sub-Logarithmic Competitive Algorithms}\label{sec:sublinear}
 
In this subsection we prove the following theorem which implies that the lower bounds on the advice size mentioned in Theorem \ref{cor:lower-advice}
are asymptotically tight.\\
\begin{theorem}\label{thm:Psi}
Let $\Psi(k)$ be a non-decreasing function such that: $$\log\log\log k\leq \Psi(k)\leq \log\log k~.$$ There exists an $\bigO\left(\frac{\log k}{2^{\Psi(k)}}\right)$-competitive algorithm with advice of size $\bigO(\Psi(k))$.

\end{theorem}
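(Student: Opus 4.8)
The plan is to generalize Algorithm $\cA_{\log k}$ from Section~\ref{sec:upper-log}. There, a single coarse advice value $\alpha=\lfloor\log\log k\rfloor$ localized $\log k$ to an interval of length $\Theta(\log k)$, yielding a candidate set $S(\alpha)$ of $\bigO(\log k)$ guesses for $k$, over which each agent iterates the $\bigO(1)$-competitive Algorithm $\cA^*$ of Corollary~\ref{corr:opt}. To reach competitiveness $\bigO(\log k/2^{\Psi(k)})$ I would spend the extra advice budget to shrink this candidate set by a factor of $2^{\Psi(k)}$.

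Concretely, the oracle (which knows $k$) first computes $\alpha=\lfloor\log\log k\rfloor$. Since $\alpha=\lfloor\log\log k\rfloor$ we have $2^\alpha\le\log k<2^{\alpha+1}$, so $\lfloor\log k\rfloor$ lies in the interval $[2^\alpha,2^{\alpha+1})$ of length $2^\alpha=\Theta(\log k)$ --- precisely the index range defining $S(\alpha)$ in Section~\ref{sec:upper-log}. The oracle then partitions this interval into $2^{\Psi(k)}$ consecutive blocks, each of length $\Theta(\log k/2^{\Psi(k)})$, and records the index $\beta\in\{0,\dots,2^{\Psi(k)}-1\}$ of the block containing $\lfloor\log k\rfloor$. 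The advice is the pair $(\alpha,\beta)$: encoding $\alpha$ costs $\log\log\log k+\bigO(1)$ bits and encoding $\beta$ costs $\lceil\Psi(k)\rceil$ bits, so since the hypothesis guarantees $\Psi(k)\ge\log\log\log k$, the total is $\bigO(\Psi(k))$ bits, as required. Note the hypothesis $\Psi(k)\le\log\log k$ ensures $2^{\Psi(k)}\le\log k\approx 2^\alpha$, so the blocks are nonempty and the partition is well defined.

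Given $(\alpha,\beta)$, each agent reconstructs the indexed block and forms the candidate set $S(\alpha,\beta)=\{2^i : i \text{ in that block}\}$, of size $\bigO(\log k/2^{\Psi(k)})$. Because $2^{\lfloor\log k\rfloor}$ is a $2$-approximation of $k$ and $\lfloor\log k\rfloor$ lies in the selected block, the set $S(\alpha,\beta)$ contains a $2$-approximation of $k$, exactly as in Observation~\ref{obs:2-approx}. The agent then runs the same doubly-nested simulation as Algorithm~\ref{alg:k-alg}, but looping over $S(\alpha,\beta)$ instead of $S(\alpha)$: in stage $j$ it executes $2^j$ steps of $\cA^*$ under each guess $k_i\in S(\alpha,\beta)$ and returns to the source.

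The analysis then mirrors the proof of Theorem~\ref{th:upperbound}. Calling the steps executed under the guaranteed $2$-approximation guess \emph{correct}, in each stage the agents perform $2^j$ consecutive correct steps, so by Corollary~\ref{corr:opt} the expected number of correct steps needed to locate the treasure is $\bigO(D+D^2/k)$; the per-stage overhead is the number of guesses $|S(\alpha,\beta)|=\bigO(\log k/2^{\Psi(k)})$, giving expected running time $\bigO((D+D^2/k)\cdot\log k/2^{\Psi(k)})$ and hence the claimed competitiveness. I expect the only delicate point to be the bookkeeping at the endpoints of the partition --- ensuring the block containing $\lfloor\log k\rfloor$ is selected correctly and that rounding in the block sizes does not inflate $|S(\alpha,\beta)|$ beyond $\bigO(\log k/2^{\Psi(k)})$ --- but these are absorbed into constants and do not affect the asymptotics.
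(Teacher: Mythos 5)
Your proposal is correct and follows essentially the same route as the paper: the paper also reduces the candidate set $S(\alpha)$ of $\bigO(\log k)$ guesses to a subset of size $\bigO(\log k/2^{\Psi(k)})$ containing the $2$-approximation of $k$ (it phrases the extra $\Psi(k)+\bigO(1)$ advice bits as a depth-$\Psi(k)$ binary-search path in $S(\alpha)$, which is exactly your block-index $\beta$), and then iterates $\cA^*$ over the reduced set as in $\cA_{\log k}$. The only cosmetic difference is that the paper pads both advice fields to a common length so the agent can parse where $\alpha$ ends and $\beta$ begins, a delimiting detail your variable-length encoding leaves implicit but which does not affect the $\bigO(\Psi(k))$ bound.
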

\begin{proof}
Recall from Subsection \ref{sec:upper-log} that there exists an integer $C$ such that $\log\log\log k+C$ bits of advice allow each agent to extract $S(\alpha)$, which is a set with  $\log k$ elements, one of which is a 2-approximation of $k$. Let us denote this element of $S(\alpha)$ by $k_{i^*}(\alpha)$.  

Our desired algorithm, called $\cA_{\Psi}$, assigns the same advice to all $k$ agents. This  advice, termed $\alpha_{\Psi(k)}$, contains precisely $2B(k)$ bits, where $B(k)=\Psi(k)+C$. The first half of the advice, encapsulating precisely $B(k)$ bits, is used to  encode $S(\alpha)$. This can be done since $\Psi(k)\geq \log\log\log k$ and as we know $\log\log\log k+C$ bits suffice to encode this information\footnote{Since we may have more bits than we need here, we can just pad zero's as the most significant bits.}.
The next $B(k)$ bits in each advice are used to specify\footnote{This procedure is straightforward. Simply follow a binary search in $S(\alpha)$ for the element $k_{i^*}(\alpha)$, up to depth $B(k)$.} a subset $S(\alpha,\Psi)$ of $S(\alpha)$ of size $\bigO\left(\frac{\log k}{2^{B(k)}}\right)$ which contains the element $k_{i^*}(\alpha)$. 

Given its advice $\alpha_{\Psi(k)}$, an agent can easily construct $S(\alpha)$ using the first half of the advice, and then construct the subset $S(\alpha,\Psi)$ using the second half of the advice. At this point, the agent doesn't know which element in $S(\alpha,\Psi)$ is $k_{i^*}(\alpha)$ but it knows that such an element exists. 
The way the agent proceeds  is very similar to how it would operate in $\cA_{\log k}$. That is, it iteratively tries guesses for a limited number of steps in each iteration.
The only difference is that instead of looping over the set $S(\alpha)$ that contains $\log k$ elements, it loops over the smaller set $S(\alpha,\Psi)$, that contains only $\bigO\left(\frac{\log k}{2^{B(k)}}\right)=\bigO\left(\frac{\log k}{2^{\Psi(k)}}\right)$ elements. More formally, given $S(\alpha,\Psi)$, an agent operates as follows:
\renewcommand{\baselinestretch}{1.1}
\begin{algorithm}

Enumerate the elements in $S(\alpha,\Psi)=\{k_1(\alpha,\Psi),k_2(\alpha,\Psi),\cdots\}$  \\
\For(the \emph{stage} $j$ ){$j$ from $1$ to $\infty$}{
  \For(the \emph{phase} $i$ ){$i$ from $1$ to $|S(\alpha,\Psi)|$}{
  \begin{smallitemize}
  \item
Execute $2^j$ steps of Algorithm $\cA^*$ assuming the number of agents is $k_i(\alpha,\Psi)$.
 \item
Return to the source $s$. 
  \end{smallitemize}
  }
 }

\caption{Execution of $\cA_{\Psi}$ by an agent after extracting $S(\alpha,\Psi)$.}\label{alg:k-alg-f}
\end{algorithm}

\noindent The analysis of Algorithm $\cA_{\Psi}$ is similar to the analysis of Algorithm $\cA_{\log k}$ (see the proof of Theorem \ref{th:upperbound}). The only difference is that the overhead of $\cA_{\Psi}$ corresponds to the size of $S(\alpha,\Psi)$ rather than to the size of $S(\alpha)$. Since the size of $S(\alpha,\Psi)$ is at most $\bigO\left(\frac{\log k}{2^{\Psi(k)}}\right)$ we  obtain that the competitiveness of $\cA_{\Psi}$ is $\bigO\left(\frac{\log k}{2^{\Psi(k)}}\right)$, as desired.
\end{proof}

\begin{remark}
The advice used in Algorithm $\cA_{\Psi}$ contains $2\Psi(k)+O(1)$ bits. The first half of the advice is used to encode the set of candidate guesses $S(\alpha)$ and the second to reduce this set by a factor of $2^\Psi(k)$. In fact, one could save on the advice size. 
Indeed, encoding $S(\alpha)$ requires only $\log\log\log k+O(1)$ bits. This suggests that the advice could potentially be reduced to  $\Psi(k)+\log\log\log k+O(1)$ bits, while maintaining the $\bigO\left(\frac{\log k}{2^{\Psi(k)}}\right)$ competitiveness. Note, 
however, that to extract $S(\alpha)$ using the first part of the advice, the agent would need to know when the first part of the advice ends. This information could be encoded in the advice, using a separate field of size $\lceil \log\log\log\log k\rceil$ bits, but again one would need to encode where this field ends. Continuing this way recursively would show that the advice can be reduced  from $2\Psi(k)+O(1)$ to $\Psi(k)+\log^{<3>} k+\log^{<4>} k +\cdots + \log^{<i>} k +\cdots + O(\log^* k)$ bits, where $\log^{<i>} k$ stands for applying the logarithm function $i$ times where the first time is applied on $k$.
\end{remark}

\subsection{Harmonic Search}

The algorithms described earlier in the paper are relatively simple but still require the use of
non-trivial iterations, which may be complex for simple and tiny agents, such as ants.
If we relax the requirement of bounding the expected running time and demand only that the treasure
be found with some low constant probability, then it is possible to avoid one
of the loops of the algorithms. However, a sequence of iterations still needs
to be performed.

In this section, we propose an extremely simple algorithm, coined
the \emph{harmonic search algorithm}\footnote{The name harmonic was chosen
because of structure resemblances to the celebrated harmonic algorithm for the
$k$-server problem, see, {\em e.g.,}~\cite{BaGr00}.}, which does {\em not} perform in iterations
and is essentially composed of three steps:
\begin{smallitemize}
\item
 Choose a random direction and walk in this direction for a distance of $d$, where $d$ is 
chosen randomly according to a distribution in which the probability of
choosing $d$  is roughly inverse proportional to $d$,
\item Perform a local search ({\em e.g.,} a spiral search) for time roughly $d^2$, and
\item Return to the source.
\end{smallitemize}
It turns out that this extremely simple algorithm has a good probability of quickly finding the treasure,
if the number of agents is sufficiently large.

More specifically,  the algorithm depends on a positive constant parameter $\delta$ that is fixed in
advance and governs the performance of the algorithm.
For a node $u$, let $p(u)\coloneqq\frac{c_\delta}{d(u)^{2+ \delta}},$ where $c_\delta$ is
the normalizing factor, defined so that
$\sum_{u\in V(G)} p(u)=1$, where $V(G)$ is the set of nodes in the grid.  
The $\delta$-harmonic search algorithm is given below. 

\renewcommand{\baselinestretch}{1.2}
\begin{algorithm}
Each agent performs the following three actions\;
1. Go to a node $u\in V(G)$ with probability $p(u)$\;
2. Perform a spiral search for time $t(u)\coloneqq d(u)^{2+\delta}$\;
3. Return to the source.
\caption{The $\delta$-harmonic search algorithm.}\label{alg:h-alg}
\end{algorithm}
\noindent The following theorem provides bounds on the performances of the harmonic algorithm. The proof of the theorem essentially follows using arguments similar to those introduced earlier, {\em e.g.,} in the proofs of
Theorems~\ref{thm:know-k} and~\ref{thm:comp}. \\
\begin{theorem}
Let  $0<\delta$ and $\epsilon<1$ be two arbitrarily small constants. 
There exists a positive real number $\alpha$ such that
if  $k>\alpha  D^{\delta}$, then with probability at least $1-\varepsilon$, the $\delta$-harmonic algorithm finds the treasure in time  $\bigO(D+\frac{D^{2+\delta}}{k}).$
\end{theorem}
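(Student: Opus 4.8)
The plan is to mimic the analysis of Algorithm $\cA_{\kk}$ (Theorem~\ref{thm:know-k}) and the uniform algorithm $\cA_{\uniform}$ (Theorem~\ref{thm:comp}), but now with a single (non-iterated) execution of the three-step harmonic procedure, and to show a constant success probability rather than a bound on expectation. First I would set up the basic ``hit'' event: an agent finds the treasure during its spiral search if the node $u$ it walks to lies within the ball of radius $\sqrt{t(u)}/2$ around the target $\varsigma$, where $t(u)=d(u)^{2+\delta}$. The crucial observation is that if $u$ is chosen at distance $d(u)=\Theta(D)$, then the spiral-search radius is $\sqrt{t(u)}/2=\Theta(D^{(2+\delta)/2})=\Omega(D)$, so a whole neighborhood of $\varsigma$ of radius $\Theta(D^{1+\delta/2})$ works. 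I would restrict attention to the ``good'' region of candidate nodes $u$ at distance $\Theta(D)$ whose spiral search covers $\varsigma$; call this set $N$.

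The heart of the argument is estimating the probability that a single agent lands in $N$. For a node $u$ with $d(u)=\Theta(D)$ we have $p(u)=c_\delta/d(u)^{2+\delta}=\Theta(c_\delta/D^{2+\delta})$. The set $N$ of successful landing nodes has size $|N|=\Theta(D^{2+\delta})$ (a disk of radius $\Theta(D^{1+\delta/2})$), so the probability a single agent succeeds is $\Pr[\text{hit}]=\Omega\!\left(|N|\cdot \min_{u\in N} p(u)\right)=\Omega(c_\delta)$ up to the dependence on $D$ hidden in the normalizer. To pin down $c_\delta$, I would compute the normalizing constant: since $\sum_{u}1/d(u)^{2+\delta}$ converges (the number of grid nodes at distance $r$ is $\Theta(r)$, so the sum behaves like $\sum_r r/r^{2+\delta}=\sum_r r^{-1-\delta}<\infty$ for $\delta>0$), we get $c_\delta=\Theta(1)$, a constant depending only on $\delta$. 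Carrying the $D$-dependence through carefully shows a single agent succeeds with probability $\Theta(1/D^{\delta})$ (the $D^{2+\delta}$ from $|N|$ against $D^{2+\delta}$ from $1/p(u)$ cancels, leaving the $D^{\delta}$ gap that the hypothesis $k>\alpha D^\delta$ is designed to close). Then, treating the $k$ agents as independent, the probability that \emph{none} succeeds is at most $(1-\Omega(1/D^{\delta}))^{k}\le e^{-\Omega(k/D^\delta)}$, which is at most $\varepsilon$ once $k>\alpha D^\delta$ for a suitably large constant $\alpha=\alpha(\delta,\varepsilon)$.

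For the running-time claim I would bound the time of the three steps. The walk out and back costs $\bigO(d(u))$ and the spiral search costs $t(u)=d(u)^{2+\delta}$, so the time of one agent is $\bigO(d(u)+d(u)^{2+\delta})$; conditioned on the success event $d(u)=\Theta(D)$, this is $\bigO(D^{2+\delta})$. To match the stated bound $\bigO(D+D^{2+\delta}/k)$ I would argue, as in the earlier proofs, that because success requires landing in a neighborhood of size $\Theta(D^{2+\delta})$ out of a ball of comparable measure, the \emph{effective} per-agent exploration is shared across the $k$ agents, yielding an amortized $D^{2+\delta}/k$ term; more directly, one truncates the distance distribution at radius $\Theta(D)$ (contributions from $d(u)\gg D$ are negligible by the convergence of the normalizer), so the relevant spiral searches have total expected length $\bigO(D^{2+\delta}/k)$ per successful configuration, plus the unavoidable $\bigO(D)$ travel term.

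\textbf{Main obstacle.} The delicate point is reconciling the per-agent cost $\bigO(D^{2+\delta})$ with the claimed collective bound $\bigO(D+D^{2+\delta}/k)$: a naive reading gives each agent time $D^{2+\delta}$, not $D^{2+\delta}/k$. The resolution must exploit that we only demand \emph{constant} success probability, so one can cap each agent's spiral search (e.g.\ truncate $t(u)$ so the expected per-agent time is $\bigO(D^{2+\delta}/k)$ while the aggregate over $k$ agents still covers a constant fraction of the target neighborhood) and show the success probability degrades only by a constant factor. Getting this truncation to preserve both the $\Omega(1)$ collective hit probability and the $\bigO(D^{2+\delta}/k)$ time simultaneously — rather than trading one off against the other — is where the argument will require the most care, and is presumably why the theorem is stated with a ``with probability $1-\varepsilon$'' guarantee rather than an expected-time guarantee.
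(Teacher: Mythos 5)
Your setup (land near the treasure, then let the spiral search do the rest, then take a union over $k$ independent agents) is the right skeleton, but the proposal has a genuine gap exactly where you flag your ``main obstacle,'' and the resolution you sketch (truncating $t(u)$) is not available: the theorem is about the fixed $\delta$-harmonic algorithm, so you cannot modify the spiral length. Moreover, with your region $N$ (all $u$ at distance $\Theta(D)$ from the source whose spiral covers $\varsigma$) the time bound fails: an agent landing in that annulus can be at distance $\Theta(D)$ from $\varsigma$, and since a spiral of length $x$ only reaches radius $\sqrt{x}/2$, it first visits $\varsigma$ after $\Theta(D^2)$ spiral steps. Once $k\gg D^\delta$ we have $D^{2+\delta}/k\ll D^2$, so $\Theta(D^2)$ is not $\bigO(D+D^{2+\delta}/k)$. (Your two probability estimates, $\Omega(c_\delta)$ and $\Theta(1/D^\delta)$, are also inconsistent because the ``disk of radius $\Theta(D^{1+\delta/2})$'' mixes nodes with very different values of $p(u)$; only the $\Theta(1/D^\delta)$ figure, for the annulus at distance $\Theta(D)$, is correct.)

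The missing idea, which is how the paper argues, is to shrink the target region to a ball $B_\lambda$ centered at the treasure of radius $\sqrt{\lambda D}/2$ with $\lambda D=\Theta\bigl(\beta D^{2+\delta}/(c_\delta k)\bigr)$ (this is $<D^2/10$ precisely because $k>\alpha D^{\delta}$). This single choice fixes both halves at once, with no change to the algorithm. For the time: a node $u\in B_\lambda$ satisfies $d(u)=\Theta(D)$ and $d(u,\varsigma)\le\sqrt{\lambda D}$, so the agent reaches $\varsigma$ within the first $4\lambda D=\bigO(D^{2+\delta}/k)$ steps of its spiral (one only needs $t(u)\ge 4\lambda D$, which holds since $t(u)\ge(3D/4)^{2+\delta}=\Omega(D^2)$); the running time is the hitting time of $\varsigma$, not the completion time of the spiral, so the walk plus partial spiral costs $\bigO(D+D^{2+\delta}/k)$. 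For the probability: $|B_\lambda|=\Omega(\lambda D)$ and $p(u)\ge c_\delta/(2D^{2+\delta})$ on $B_\lambda$, giving a per-agent landing probability of at least $\beta/k$, whence the failure probability over $k$ agents is at most $(1-\beta/k)^k\le e^{-\beta}<\varepsilon$ for $\beta>\ln(1/\varepsilon)$. In short, the radius of the target ball must be calibrated to $k$ so that the landing probability is $\Theta(1/k)$ and the in-spiral discovery time is $\Theta(D^{2+\delta}/k)$ simultaneously; your proposal identifies that such a trade-off is needed but does not supply it.
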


\begin{proof}
Let  $0<\delta$ and $\epsilon<1$ be two arbitrarily small constants. Without loss of generality we may assume that $\delta<1$. 
Fix a real number $\beta$ greater than $\ln(1/\varepsilon)$, so
$e^{-\beta}<\epsilon$. Set $\alpha=40\beta/c_\delta$, where $c_\delta$ is the normalization factor. Note that $\alpha$ depends on both $\epsilon$ and $\delta$. 
We assume that the number of agents $k$ is greater than $\alpha D^{\delta}$ and our goal is to show that
with probability of at least $1-\epsilon$, the running time is $\bigO(D+\frac{D^{2+\delta}}{k}).$

For the sake of analysis, let us define $$\lambda := \frac{4\beta D^{1+\delta}}{c_\delta k}< \frac{D}{10}~,$$and   
consider the ball $B_{\lambda}$ of radius $\sqrt{\lambda D}/2$ around the treasure, {\em i.e.,}
$$B_\lambda:=\{u\in G: d(u, \varsigma)\leq \sqrt{\lambda D}/2\}~. $$ 
The next claim  follows by the triangle inequality (no attempt has been made to optimize constants).

\begin{claim}\label{claim:ball}
 For every node $u\in B_{\lambda}$ we have: $3D/4<d(u)<5D/4~.$
\end{claim}

\noindent The following claim implies, in particular, that the probability of finding the treasure is at least as large as the probability of reaching a node in ${B}_{\lambda}$ in step 1 of the algorithm.
\begin{claim}
If an agent goes to a node $u\in {B}_{\lambda}$ in
step 1 of the algorithm, then this agent finds the treasure in step 2.
\end{claim}
\noindent To see why the claim holds, recall that by the definition of a spiral search, if an agent  performs a spiral search from $u$ of length $x$ then it finds all nodes at distance
at most $\sqrt{x}/2$ from $u$.
Now, if $u\in B_{\lambda}$, then the distance from $u$ to the treasure is at most twice the radius of $B_\lambda$, that is, $d(u,\varsigma)<\sqrt{\lambda D}$, since both $u$ and $\varsigma$ are in $B_{\lambda}$. 
It follows that  an agent which performs a spiral search from $u\in B_{\lambda}$
finds the treasure by time $4\lambda D$. So the claim would follow once we show that the length of the spiral search that such an agent performs in step 2 is at least  $$4\lambda D< \frac{2 D^{2}}{5}~. $$ This follows since the length is $$t(u)=d(u)^{2+\delta}>\left(\frac{3D}{4}\right)^{2+\delta}>(3/4)^3\cdot D^2>\frac{2 D^{2}}{5}~.$$\\

\noindent Since each node in ${B}_{\lambda}$ is at distance less than $5D/4$ from the
source (Claim \ref{claim:ball}), it follows that if one of the agents goes to some $u\in B_\lambda$ in step 1 of the algorithm, then the total running time of the algorithm
is $\bigO(D(5/4+\lambda))$, which is $\bigO(D+\frac{D^{2+\delta}}{k})$. Hence, it remains to analyze the probability that at least one of the $k$ agents goes to a node in ${B}_{\lambda}$ in step 1 of the algorithm.

Since $d(u)<5D/4$ for each node $u\in {B}_{\lambda}$, the probability $p(u)$ that a single agent
goes to $u$ in step 1 is at least $\frac{c_\delta}{(5D/4)^{2+ \delta}}\geq \frac{c_\delta}{2D^{2+ \delta}}$ as $\delta<1$.
Since there are at least $\lambda D/2$ nodes in ${B}_{\lambda}$, the probability that a single agent
goes to a node in ${B}_{\lambda}$ in step 1 is at least
\[
\sum_{u\in B_{\lambda}}p(u)\ge|B_\lambda|\cdot\frac{c_\delta}{2D^{2+ \delta}}\ge
\frac{c_\delta\lambda}{4D^{1+\delta}} = 
\frac{\beta}{k}~.
\]
It follows that the probability that no agent goes to a node in $B_{\lambda}$ in step 1 of the algorithm
is at most
\[
 \left(1-\frac{\beta}{k}\right)^k\le e^{-\beta}<\varepsilon.
\]
The theorem follows.
\end{proof}

\section{Discussion}\label{sec:conclusion}
 Distributed systems are abundant on all scales of biology. Providing quantitative descriptions of the complex interactions between the system's components remains as one of the major challenges towards a better understanding of the biological world. For this purpose, biologists are constantly seeking new tools.

The discipline of systems biology~\cite{Kitano} aims to address this challenge by taking a more holistic (non-reductionist) approach to biological research. The mathematical tools used therein are often drawn from physics and control theory and heavily rely on differential equation techniques and on computer simulations. The field of theoretical distributed computing  is also concerned with  complex and dynamical systems. However, the techniques and perceptions it applies are different, being more algorithmically oriented and proof based. It is therefore natural to ask whether, and to what extent, techniques from the field of distributed computing could be applied to biological contexts.

Our results are an attempt to draw non-trivial bounds on biological systems from the field of distributed computing.
We consider a central search setting such as the one performed by desert ants. As stated above, a central place allows for a preliminary stage in which ants may assess some knowledge about their total number.
Our theoretical lower bounds may enable us to relate group search performance to the extent of information sharing within the nest.  Such relations are particularly interesting since they provide  not a single bound but  relations between key parameters. In our case these would be the number of states an agent has when commencing the search  and collective search efficiency.  Combining our lower bounds and measurements of search speed with varying numbers of searchers may provide quantitative evidence regarding the  number of memory bits (or, alternatively, the number of states) immediately as the ants exit the nest to commence the search. Note that such lower bounds would be very difficult to obtain using traditional methodologies. Hence,
obtaining lower bounds of even few bits using this indirect methodology would already demonstrate 
the power of this framework.

 It is important to stress that the complexity of the biological world often defies a theoretical description and collective central place foraging is no exception. Closing the gap between experiment and theory in this case requires further efforts on both the theoretical and the experimental fronts. Theoretically, one must employ more realistic models (\emph{e.g.} not all searchers exit the nest at once) and then solve them to provide precise (non-asymptotic) bounds. 
 On the experimental side, as stated above, there are a several candidate social species (desert ants, honeybees) that are solitary foragers and possess the cognitive capabilities for complex navigation. Since the navigational capabilities of these insects depend on factors such as landmarks and the sun, experiments must be conducted in the field. Furthermore, to fit the specification of our model, the conditions should be such that food is scarce and located at an unpredictable location.
   It has been shown that, in these conditions, the foraging range of honey bees increases with colony size \cite{Beekman}. Although this is consistent with the search algorithms described here one must verify experimentally that search times are indeed sped up when the number of worker increases.  This can be done by placing a bait\footnote{The bait itself may be ``imaginary'' in the sense that it need not be placed physically on the terrain. Instead, a camera can just be placed filming the location, and detecting when searchers get nearby.} at a certain distance from the nest (hive) entrance and measuring the finding times for different distances and different colony sizes. Our results indicate that if the speed-up of the search time is near-linear in the number of searchers then the properties of the search pattern of the single insect must depend on the number of available foragers. Additional experimental validations can be performed by measuring and quantifying the characteristics of single insect trajectories.  This further theoretical and experimental work is beyond the scope of this paper.
Nevertheless, our results provide a ``proof of concept'' for such a methodology.

\bigskip

\paragraph{Acknowledgements.}  The authors are thankful to  Jean-S\'ebastien Sereni and Zvi Lotker for helpful discussions, that were particularly useful for facilitating the proofs of Theorems \ref{thm:know-k} and \ref{cor:lowerbound}. In addition, the authors are thankful to the anonymous reviewers for helping to improve the presentation of the paper, and for providing us with an idea that was used to prove the upper bound in Section \ref{sec:sublinear}.

\clearpage

\end{document}